\documentclass{article}
\usepackage[bb=boondox]{mathalfa}
\usepackage{comment}
\usepackage{Tabbing}
\usepackage[dvipsnames,table]{xcolor}
\usepackage{url}
\usepackage{fancyhdr}
\usepackage[top= 1.3 in, bottom =1 in, left =0.75 in, right = 1 in]{geometry}
\usepackage{amssymb}
\usepackage{amsmath}
\usepackage{amsthm}
\usepackage{mathrsfs}
\usepackage{graphicx}
\usepackage{enumitem}
\usepackage[toc, page]{appendix}
\usepackage{mathtools}
\usepackage{caption}
\usepackage{subcaption}
\usepackage{listings}
\usepackage[colorlinks = true, linkcolor = blue, urlcolor=blue, citecolor = blue]{hyperref}
\usepackage{multicol}
\usepackage{sidecap}
\usepackage{tikz-cd}
\usepackage{booktabs}

\pagestyle{fancy}
\setlength{\headheight}{15.2pt}
\setlength{\headsep}{13 pt}
\setlength{\parindent}{0 pt}
\setlength{\parskip}{12 pt}
\pagestyle{fancyplain}

\usepackage[OT2,T1]{fontenc}
\DeclareSymbolFont{cyrletters}{OT2}{wncyr}{m}{n}
\DeclareMathSymbol{\Sha}{\mathalpha}{cyrletters}{"58}

\usepackage{aliascnt}

\usepackage{authblk}

\rhead{}

\makeatletter

\renewcommand{\d}{\ensuremath{\,\mathrm{d}}}
\let\oldequation\equation
\renewcommand{\equation}{\@hyper@itemfalse\oldequation}
\let\oldgather\gather
\renewcommand{\gather}{\@hyper@itemfalse\oldgather}
\let\oldalign\align
\renewcommand{\align}{\@hyper@itemfalse\oldalign}
\makeatother






\definecolor{codegray}{gray}{0.95}
\definecolor{CODEGRAY}{gray}{0.95}

\usepackage{soul}
\sethlcolor{codegray}
\newcommand{\inlinecode}[1]{\hl{\tt #1}}

\lstset{
  tabsize=2,
  language=python,
  backgroundcolor=\color[rgb]{0.97,0.97,0.99},
  aboveskip={\baselineskip},
  columns=flexible,
  showstringspaces=false,
  extendedchars=true,
  breaklines=true,
  prebreak = \raisebox{0ex}[0ex][0ex]{\ensuremath{\hookleftarrow}},
  frame=false,
  numbers=left,
  showtabs=false,
  showspaces=false,
  showstringspaces=false,
  identifierstyle=\color[rgb]{0,0,.01},
  keywordstyle=\bfseries\color[rgb]{.2,0.29,.7},
  commentstyle=\color[rgb]{0.2,0.3,0.3},
  stringstyle=\color[rgb]{0.627,0.126,0.941},
}


\newcommand{\ie}{\emph{i.\@e.\@,~}}
\newcommand{\eg}{\emph{e.\@g.\@,~}}

\newcommand{\reals}{\ensuremath{\mathbb{R}}}

\newcommand{\sph}{\ensuremath{\mathbb{S}}}
 
 
 
 
%

%
%
%


\newcommand{\ip}[1]{\ensuremath{\langle #1\rangle}}

\newcommand{\iid}{\ensuremath{\overset{\text{iid}}{\sim}}}


\newcommand{\bmat}[2]{\ensuremath{\left[
      \begin{array}{#1}
        #2
      \end{array}
    \right]}}

\theoremstyle{plain}
\newtheorem{thm}{Theorem}
\newtheorem*{thm*}{Theorem}
\newtheorem*{lem*}{Lemma}

\newaliascnt{prop}{thm}
\newtheorem{prop}[prop]{Proposition}
\aliascntresetthe{prop}

\newaliascnt{coro}{thm}

\aliascntresetthe{coro}

\newaliascnt{lem}{thm}
\newtheorem{lem}[lem]{Lemma}
\aliascntresetthe{lem}

\theoremstyle{definition}

\newtheorem*{defn*}{Definition}

\theoremstyle{remark}
\newtheorem{fact}{Fact}
\newtheorem{rmk}[fact]{Remark}
\newtheorem*{rmk*}{Remark}
\newtheorem*{fact*}{Fact}


\DeclarePairedDelimiterX{\kldivx}[2]{(}{)}{%
  #1\;\delimsize\|\;#2%
}
\newcommand{\kldiv}{D_{KL}\kldivx}

\DeclareMathOperator{\rad}{rad}
\DeclareMathOperator{\diam}{diam}
\DeclareMathOperator{\SREC}{S-REC}

\DeclareMathOperator*{\argmin}{\arg\min}
\DeclareMathOperator*{\E}{\mathbb E}

\DeclareMathOperator{\w}{w}

\usepackage{xspace}
 

\setlist[enumerate, 1]{align=left, label=\arabic*.}
\setlist[enumerate, 2]{align=left, label=(\alph*)}
\setlist[itemize,1]{align=left, topsep=-2pt}


\usepackage{algorithm}

\title{Deep generative demixing: Recovering Lipschitz signals from noisy subgaussian mixtures}
\author[1]{Aaron Berk}

\affil[1]{%
  University of British Columbia %
  \protect\\ %
  Department of Mathematics %
  \protect\\ %
  1984 Mathematics Rd., Vancouver, BC, Canada V6T 1Z2%
}

\begin{document}
\maketitle

\begin{abstract}
  Generative neural networks (GNNs) have gained renown for efficaciously
  capturing intrinsic low-dimensional structure in natural images. Here, we
  investigate the subgaussian demixing problem for two Lipschitz signals, with
  GNN demixing as a special case. In demixing, one seeks identification of two
  signals given their sum and prior structural information. Here, we assume each
  signal lies in the range of a Lipschitz function, which includes many popular
  GNNs as a special case. We prove a sample complexity bound for nearly optimal
  recovery error that extends a recent result of Bora, \emph{et al.} (2017) from
  the compressed sensing setting with gaussian matrices to demixing with
  subgaussian ones. Under a linear signal model in which the signals lie in
  convex sets, McCoy \& Tropp (2014) have characterized the sample complexity
  for identification under subgaussian mixing. In the present setting, the
  signal structure need not be convex. For example, our result applies to a
  domain that is a non-convex union of convex cones. We support the efficacy of
  this demixing model with numerical simulations using trained GNNs, suggesting
  an algorithm that would be an interesting object of further theoretical study.
\end{abstract}

\section{Introduction}
\label{sec:introduction}

Generative neural networks (GNNs) are neural networks that are capable of
\emph{generating} data from a learned distribution. Recently, GNNs have gained
popularity for their performance on modelling spaces of natural
images~\cite{doersch2016tutorial, goodfellow2014generative,
  goodfellow2016tutorial, kingma2013auto, radford2015unsupervised}. For example,
variational autoencoders (VAEs) are composed of a ``decoder'' and ``encoder''
network, trained end-to-end to act as an approximate identity on the training
distribution. The VAE learns a low-dimensional representation for elements
within the data distribution~\cite{kingma2013auto} and its decoder is a type of
GNN.\@ For a thorough introduction to variational autoencoders, we
recommend~\cite{doersch2016tutorial}.

Recently, it was shown that some GNNs --- including decoders from popular VAE
architectures --- could be used as structural proxies for generalized
compressive sensing problems~\cite{bora2017compressed}. Roughly: it is possible
to approximately recover a signal from a noisy underdetermined linear system
with Gaussian measurements if the signal lies in the range of a Lipschitz
function. The work specialized its recovery result from arbitrary Lipschitz
functions, to the setting of Lipschitz-continuous neural networks. Many neural
networks commonly used in practice are Lipschitz, supporting the relevance of
this work to practical settings. For example,
see~\cite[Lemma~4.2]{bora2017compressed}.

Several extensions of this work have been
examined. In~\cite{wei2019statistical}, the authors derive an improved sample
complexity for compressed sensing-type problems when the underlying signal
structure is the range of a ReLU neural network. Another version of this problem
has been studied in which the measurements derive from a heavy-tailed, possibly
corrupted distribution~\cite{jalal2020robust}. A further generalization studies
the semi-parametrized single index model with possibly unknown possibly
nonlinear link function, and in which the measurement vectors derive from a
differentiable probability density function~\cite{wei2019statistical}. To our
knowledge, a theoretical analysis of the \emph{demixing problem} for two
Lipschitz functions remains open.

The demixing problem refers to the identification of two unknown signals from
their sum and given information about their structure.  Sharp recovery bounds
for \emph{convex demixing problems} have been thoroughly investigated
in~\cite{mccoy2014sharp}, where the source signals are each assumed to be well
encoded by some convex set. There, the authors show, under an \emph{incoherence}
condition, that recovery error depends on the \emph{effective dimension} of the
cones that govern the complexity of the source signals. Demixing with possibly
unknown, possibly nonlinear link functions has been
examined~\cite{soltani16demix, soltani2019provable}; however, the error bounds
derived using this form of analysis may be coarse and difficult to analyze in
practical settings~\cite{plan16_gener_lasso_with_non_linear_obser}.

The theory developed in~\cite{bora2017compressed} and~\cite{chen2018stable}
firmly lays the groundwork for the present work. Here, we investigate the
demixing problem for two signals $x^{*}, y^{*}$ that each lie in the range of an
arbitrary Lipschitz function. We assume that $x^{*}$ and $y^{*}$ are combined by
a \emph{subgaussian random mixing operation} that we describe below, and seek
approximate recovery of $(x^{*}, y^{*})$ \emph{with high probability} on the
mixing operation. The assumptions on the signal structure are mild, requiring
only that each signal lie in the range of a Lipschitz function. These Lipschitz
functions need not be the same, nor even have the same domain or range
dimensions.

To specify, let $G : \reals^{k} \to \reals^{n}$ and
$H : \reals^{k'}\to\reals^{m}$ be an $L_{G}$-Lipschitz and $L_{H}$-Lipschitz
function. We will be concerned with the setting in which $1 \leq k < n < \infty$
and $1 \leq k' < m < \infty$. We show that there are matrices
$A \in \reals^{m \times n}$ such that, if $x^{*} := G(u^{*}), y^{*} := H(v^{*})$
for points $(u^{*}, v^{*}) \in \reals^{k} \times \reals^{k'}$, and
\begin{align}
  \label{eq:demixing-model}
  b = A x^{*} + \sqrt m y^{*} + \eta
\end{align}
for some corruption $\eta \in \reals^{m}$, then $(x^{*}, y^{*})$ may be
approximated using only knowledge of $(b, A, G, H)$. Note the appearance of
$\sqrt m$ in~\eqref{eq:demixing-model} is for normalization purposes only, due
to the definition of $A$ below. The main result is stated
in~\autoref{thm:gen-demix-whp}.

A key tool in the proof of our results is a modified restricted isometry
property (RIP), first presented in~\cite{bora2017compressed}. There, the authors
proved the condition holds with high probability for matrices with normal random
entries. Here, we show that the condition holds with high probability for an
expanded class of matrices: namely, matrices $A$ with independent isotropic
subgaussian rows, as well as such concatenated matrices $[I\,A]$, where $I$ is
the identity matrix. Furthermore, we improve a result of~\cite{chen2018stable}
that establishes a deviation inequality for $[I\,A]$. We do this by improving
the dependence on the subgaussian constant for $A$ in the deviation inequality,
using an improved Bernstein's inequality first proved
in~\cite{jeong2019non}. This improvement is stated in~\autoref{thm:chen-1-new}.

Finally, we support our theoretical results with numerical simulations presented
in~\nameref{sec:numerics}. There, we use two trained VAEs to solve a noiseless
demixing problem. We visualize the recovery performance of our proposed
algorithm, and plot apparent exponential convergence of the objective function
\emph{vs}.\ iteration number. We include a discussion that highlights potential
topics of further interest for empirical investigations.

\paragraph{Motivation}

The demixing problem can be viewed both as a tool for signal transmission, and
as a tool for signal recovery. For example, the approach may be useful in the
lossy transmission of two signals over a channel, where the sender and receiver
possess the same Lipschitz continuous encoders $G$ and $H$, and where the
transmitted representation should use a small number of bits. In this scenario,
the mixture may be amenable to further compression or encoding steps --- such as
that with error-correcting codes, or other sparsifying
transforms~\cite{bora2017compressed,mackay2003information}.

In~\cite{chen2018stable}, the expression~\eqref{eq:demixing-model} is used to
describe the \emph{corrupted sensing} model in which a ground truth signal
$x^{*}$, after being encoded by $A$, is corrupted by (possibly unknown)
structured noise $y^{*}$, and unknown, unstructured noise $\eta$. This
interpretation translates to the present setting in the case where the
structured noise process giving rise to $y^{*}$ is well modelled by a Lipschitz
function $H$. Of interest to the present setting would be the case where either
$H$ is a Lipschitz deep neural network that has been trained to model some noise
process present in the corrupted measurements $b$.

\section{Background}
\label{sec:background}

We start by introducing the set-restriced eigenvalue condition that is critical
to the results developed in~\cite{bora2017compressed} and the present work. This
condition is the ``modified RIP'' hinted at above.

\begin{defn*}[{\cite[Definition 1]{bora2017compressed}}]
  \label{defn:bora-defn-1}
  Let $S \subseteq \reals^{n}$. For some parameters
  $\gamma > 0, \delta \geq 0$, a matrix $A \in \reals^{m \times n}$ is said to
  satisfy the $\SREC(S, \gamma, \delta)$ if $\forall x_{1}, x_{2} \in S$
  \begin{align*}
    \left\| A \left( x_{1} - x_{2} \right) \right\| %
    \geq \gamma \left\| x_{1} - x_{2} \right\| - \delta.
  \end{align*}
\end{defn*}

Next, we include for reference the relevant result of \cite{bora2017compressed},
which uses $\SREC$ for normal random matrices with iid entries to establish
recovery bounds for compressed sensing (CS) with signals in the range of
Lipschitz functions.

\begin{thm*}[{\cite[Theorem 1.2]{bora2017compressed}}]
  \label{thm:bora-1-2}
  Let $G: \mathbb{R}^{k} \rightarrow \mathbb{R}^{n}$ be an $L$-Lipschitz
  function. Let $\Phi \in \mathbb{R}^{m \times N}$ be a random Gaussian matrix
  for $m=\mathcal{O}\left(k \log \frac{L r}{\delta}\right)$, scaled so
  $\Phi_{ij} \iid \mathcal{N}(0,m^{-1})$. For any $x^{*} \in \reals^{N}$ and any
  observation $b = \Phi x^{*} + \eta$, let $\hat{u}$ minimize
  $\|b - \Phi G(\cdot)\|_{2}$ to within additive $\varepsilon$ of the optimum
  over vectors with $\|\hat{u}\|_{2} \leq r$. Then, with probability
  $1-\exp(-\Omega(m))$,
  \begin{align*}
    \left\|G(\hat{u})-x^{*}\right\|_{2} %
    \leq 6 \min _{u \in \mathbb{R}^{k} \atop\left\|u\right\|_{2} \leq r}
    \left\| G (u)-x^{*}\right\|_{2} + 3 \|\eta\|_{2} + 2 \varepsilon + 2 \delta.
  \end{align*}
\end{thm*}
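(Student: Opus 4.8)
The plan is to reduce everything to a single probabilistic statement — that the scaled Gaussian matrix $\Phi$ is, with overwhelming probability, an approximate isometry on the (non-linear, non-convex) set $S := \{G(u) : u \in \reals^{k},\ \|u\|_{2}\le r\}$ — after which the recovery guarantee is a deterministic oracle computation. Write $\Phi = m^{-1/2}\widetilde\Phi$ with $\widetilde\Phi$ standard Gaussian, so $\E\|\Phi x\|_{2}^{2} = \|x\|_{2}^{2}$. Fix $x^{*}$, let $\bar u$ attain $\min_{\|u\|_{2}\le r}\|G(u)-x^{*}\|_{2}$ (a continuous function on a compact set), and set $x_{\mathrm{opt}} := G(\bar u)\in S$, $d := \|x^{*}-x_{\mathrm{opt}}\|_{2}$, $z := x^{*}-x_{\mathrm{opt}}$. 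First I would establish two events, each of probability $1-\exp(-\Omega(m))$ when $m = \mathcal{O}(k\log(Lr/\delta))$, and work on their intersection: (i) $\Phi$ satisfies $\SREC(S,\tfrac34,\delta)$, the Gaussian instance of the $\SREC$ condition above with the explicit constant $\gamma = 3/4$; and (ii) $\|\Phi z\|_{2}\le \tfrac32\|z\|_{2}$ for the \emph{single fixed} vector $z$, which is just a one-dimensional $\chi^{2}$ tail bound.

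Granting (i)--(ii), the argument is deterministic. Near-optimality of $\hat u$ gives $\|b-\Phi G(\hat u)\|_{2}\le \|b-\Phi x_{\mathrm{opt}}\|_{2}+\varepsilon$; substituting $b = \Phi x^{*}+\eta$ and applying the triangle inequality on each side yields $\|\Phi(G(\hat u)-x^{*})\|_{2}\le \|\Phi z\|_{2}+2\|\eta\|_{2}+\varepsilon$. Since $G(\hat u),x_{\mathrm{opt}}\in S$, one more triangle inequality together with (ii) gives
\begin{align*}
  \|\Phi(G(\hat u)-x_{\mathrm{opt}})\|_{2}
  &\le \|\Phi(G(\hat u)-x^{*})\|_{2}+\|\Phi z\|_{2}\\
  &\le 2\|\Phi z\|_{2}+2\|\eta\|_{2}+\varepsilon\\
  &\le 3d+2\|\eta\|_{2}+\varepsilon.
\end{align*}
Applying $\SREC(S,\tfrac34,\delta)$ to the left-hand side and then $\|G(\hat u)-x^{*}\|_{2}\le \|G(\hat u)-x_{\mathrm{opt}}\|_{2}+d$ gives $\|G(\hat u)-x^{*}\|_{2}\le \tfrac43\bigl(3d+2\|\eta\|_{2}+\varepsilon+\delta\bigr)+d = 5d+\tfrac83\|\eta\|_{2}+\tfrac43\varepsilon+\tfrac43\delta$, which is dominated by $6d+3\|\eta\|_{2}+2\varepsilon+2\delta$. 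The slack here is deliberate: any $\gamma$ bounded away from $0$ and any absolute constant in (ii) would close the argument, at the cost of worse constants — so the $6,3,2,2$ is simply a clean over-estimate.

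All the content is therefore in (i), and the plan there is a covering argument. A Euclidean ball of radius $r$ in $\reals^{k}$ admits a $(\delta/L)$-net of cardinality $\le (3Lr/\delta)^{k} = \exp(\mathcal{O}(k\log(Lr/\delta)))$; since $G$ is $L$-Lipschitz, the image of such a net is a $\delta$-net $M$ of $S$. For each of the $\le |M|^{2}$ pairs $p,q\in M$, Gaussian concentration of $\|\Phi(p-q)\|_{2}$ about $\|p-q\|_{2}$ fails the bound $\|\Phi(p-q)\|_{2}\ge \tfrac78\|p-q\|_{2}$ with probability $\le\exp(-\Omega(m))$, and the union bound survives precisely because $\log|M|^{2} = \mathcal{O}(k\log(Lr/\delta)) = \mathcal{O}(m)$. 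One then transfers the estimate from $M$ to all of $S$ by approximating arbitrary $x_{1},x_{2}\in S$ by their nearest net points and absorbing the two residuals (each of norm $\le\delta$) into the additive term, which degrades the multiplicative constant from $7/8$ to $3/4$ and yields the $-\delta$ slack of $\SREC(S,\tfrac34,\delta)$.

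I expect the main obstacle to be making that last transfer genuinely uniform over $S$ while keeping the sample complexity at $k\log(Lr/\delta)$ with no spurious $\log n$: the naive estimate $\|\Phi e\|_{2}\le\|\Phi\|_{\mathrm{op}}\|e\|_{2}$ for a residual $e$ is wasteful because $\|\Phi\|_{\mathrm{op}}\asymp\sqrt{n/m}$, so instead one must iterate the covering at geometrically decreasing scales $\delta,\delta/2,\delta/4,\dots$ (a chaining argument) and union-bound the now-summable failure probabilities across all scales. Everything else is routine: the oracle step is a handful of triangle inequalities, and ingredient (ii) is a textbook concentration estimate for a fixed vector.
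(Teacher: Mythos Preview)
Your proposal is correct and follows essentially the same approach as the paper (and the original Bora~\emph{et al.} argument it cites): the deterministic oracle step is exactly the content of the paper's \nameref{lem:bora-lem-4-3}, and your covering-plus-chaining plan for establishing $\SREC(S,\gamma,\delta)$ --- including the observation that a single-scale net is insufficient and that one must chain at geometrically decreasing scales to avoid the spurious $\|\Phi\|_{\mathrm{op}}$ factor --- is precisely the mechanism behind the paper's \autoref{lem:berk-8-2} and \autoref{lem:s-rec}. The paper does not re-prove this cited theorem directly, but its generalization (\autoref{thm:gen-demix-whp}) is assembled from the same three ingredients you identify.
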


In extending~\nameref{thm:bora-1-2} to the demixing setting
(\emph{cf}.~\autoref{thm:gen-demix-whp}), we allow for a broader class of random
matrices $A$ appearing in~\eqref{eq:demixing-model}. To define this class of
matrices, we start by defining a subgaussian random variable and subgaussian
random vector.

\begin{defn*}[Subgaussian]
  For a random variable, $X$, say $X$ is a \emph{subgaussian random variable}
  with subgaussian norm bounded by $K$, $\|X\|_{\psi_{2}} \leq K$, if for some
  $K > 0$,
  \begin{align*}
    \mathbb{P}\left(|X| \geq t \right) \leq 2 \exp\left(-t^{2} / K^{2}\right), %
    \quad \text{for all } t \geq 0. 
  \end{align*}
  For a random vector $X' \in \reals^{n}$, say $X$ is a \emph{subgaussian random
    vector} with $\|X'\|_{\psi_{2}} \leq K$ if
  \begin{align*}
    \sup_{e \in \sph^{n-1}}\|\ip{X, e}\|_{\psi_{2}} \leq K.
  \end{align*}
\end{defn*}

Additionally, we say that two random vectors $X, X' \in \reals^{n}$ are
\emph{isotropic} if $\E X' X^{T} = I_{n}$.

\begin{defn*}[$K$-subgaussian matrix]
  If $A \in \reals^{m \times n}$ is a matrix whose rows $A_{i} \in \reals^{n}$
  are independent, isotropic subgaussian random vectors with norm at most $K$,
  we call $A$ a $K$-subgaussian matrix.
\end{defn*}

Two geometric objects of particular importance to CS problems are the Gaussian
width and Gaussian complexity, which often appear in deviation inequalities used
to control $K$-subgaussian matrices on bounded sets.

\begin{defn*}[Gaussian width/complexity]
  Define the Gaussian width $\w(\mathcal{T})$ and Gaussian complexity
  $\gamma(\mathcal{T})$ of a set $\mathcal{T}\subseteq \reals^{n}$ by
  \begin{align*}
    \w(\mathcal{T}) &:= \E \sup_{x \in \mathcal{T}} \ip{x, g}, %
    & %
      \gamma(\mathcal{T}) &:= \E \sup_{x \in \mathcal{T}} |\ip{x, g}|, %
    & %
      g_{i} &\iid \mathcal{N}(0, 1). 
  \end{align*}
\end{defn*}

When $A$ is a $K$-subgaussian matrix, we define
$B = B(A) := \bmat{rr}{A & \sqrt m I}$ to be the concatenation of $A$ with a
scaled identity matrix (obtained by appending the columns of the second matrix
to the first). We may alternately refer to $B$ or $A$ as the ``subgaussian
mixing matrix'' or ``mixing matrix''; it will be clear from context whether we
mean $B$ or $A$. To establish recovery bounds for the Lipschitz demixing problem
with $K$-subgaussian matrices, we use an improved version of the following
deviation inequality for subgaussian mixing matrices $B$.

\begin{thm*}[{\cite[Theorem 1]{chen2018stable}}]
  \label{thm:chen-1}
  Let $A \in \reals^{m\times n}$ be a $K$-subgaussian matrix and
  $\mathcal{T} \subseteq \reals^{n} \times \reals^{m}$ be bounded. Then
  \begin{align*}
    \underset{(x, y) \in \mathcal{T}}{\mathbb{E}} \left| %
    \|A x + \sqrt{m} y \|_{2} - \sqrt{m} \cdot
    \sqrt{\|x\|_{2}^{2} + \|y\|_{2}^{2}}\right| %
    \leq CK^{2} \cdot \gamma (\mathcal{T})
  \end{align*}
  For any $t \geq 0$, the event
\begin{align*}
  \sup _{(x, y) \in \mathcal{T}} \left| \|A x + \sqrt{m} y\|_{2} %
  - \sqrt{m} \cdot \sqrt{\|x\|_{2}^{2}+\|y\|_{2}^{2}} \right| %
  \leq C K^{2}[\gamma(\mathcal{T})+t \cdot \rad(\mathcal{T})]
\end{align*}
holds with probability at least $1 - \exp \left(-t^{2}\right)$. 
\end{thm*}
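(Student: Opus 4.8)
The plan is to control the centered random process
\[
  X_u \;:=\; \|Mu\|_2-\sqrt m\,\|u\|_2,\qquad M:=\big[\,A\ \ \sqrt m\,I_m\,\big],\quad u=(x,y)\in\mathcal T\subseteq\reals^n\times\reals^m,
\]
(so $\|Mu\|_2=\|Ax+\sqrt m\,y\|_2$ and $\|u\|_2=(\|x\|_2^2+\|y\|_2^2)^{1/2}$) by generic chaining, after establishing a single-point concentration bound and a subgaussian increment bound, each with dependence on $K$ at the level $K^2$. Since the rows of $A$ are independent, mean-zero and isotropic, $\E M^{T}M=m\,I_{n+m}$, hence $\E\|Mu\|_2^2=m\|u\|_2^2$; thus $X_u$ is exactly the deviation of a norm from its root-mean-square, and it vanishes on $\{x=0\}$. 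It is also convenient that $\rad(\mathcal T)\le\sqrt{\pi/2}\,\gamma(\mathcal T)$, because $\gamma(\mathcal T)=\E\sup_{u\in\mathcal T}|\ip{u,g}|\ge\sup_{u\in\mathcal T}\E|\ip{u,g}|=\sqrt{2/\pi}\,\rad(\mathcal T)$, so additive $CK^2\rad(\mathcal T)$ terms in the expectation bound are harmless.

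First I would prove the single-point estimate $\|X_u\|_{\psi_2}\le CK^2\|u\|_2$. Expanding,
\[
  \|Mu\|_2^2-m\|u\|_2^2=\big(\|Ax\|_2^2-m\|x\|_2^2\big)+2\sqrt m\,\ip{Ax,y},
\]
where $\|Ax\|_2^2-m\|x\|_2^2=\sum_i(\ip{A_i,x}^2-\|x\|_2^2)$ is a sum of independent mean-zero subexponential terms of $\psi_1$-norm $\le CK^2\|x\|_2^2$ (controlled by Bernstein), and $2\sqrt m\,\ip{Ax,y}=2\sqrt m\sum_i y_i\ip{A_i,x}$ is a sum of independent mean-zero subgaussian terms, hence subgaussian with $\psi_2$-norm $\le CK\sqrt m\,\|x\|_2\|y\|_2$. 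Passing from squares back to norms via $|\sqrt a-\sqrt b|\le\min\{|a-b|/\sqrt b,\ \sqrt{|a-b|}\}$ (the first for the bulk, the second for the far tail), together with $\|Mu\|_2+\sqrt m\|u\|_2\ge\sqrt m\|u\|_2$ and $\|x\|_2\|y\|_2\le\tfrac12\|u\|_2^2$, yields the claim — the quadratic term being responsible for the $K^2$, the bilinear term only costing $K$.

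The crux is the increment bound $\|X_u-X_{u'}\|_{\psi_2}\le CK^2\|u-u'\|_2$ for $u,u'\in\mathcal T$. Here the naive estimate $\big|\|Mu\|_2-\|Mu'\|_2\big|\le\|M(u-u')\|_2$ fails, as it leaves a deterministic term of order $\sqrt m\,\|u-u'\|_2$ and thereby destroys the near-orthogonality cancellation between $Ax$ and $\sqrt m\,y$ that forces concentration around $\sqrt m\sqrt{\|x\|^2+\|y\|^2}$ rather than around the larger $\sqrt m\|x\|+\sqrt m\|y\|$. Instead I would run the polarization identity $\|Mu\|_2-\|Mu'\|_2=(\|Mu\|_2^2-\|Mu'\|_2^2)/(\|Mu\|_2+\|Mu'\|_2)$ alongside the analogous identity for $\sqrt m(\|u\|_2-\|u'\|_2)$: the resulting numerators are quadratic and bilinear forms $\ip{M(u-u'),M(u+u')}-m\ip{u-u',u+u'}$ in the rows $A_i$, whose centered versions split once more into independent subexponential (quadratic, $\psi_1\lesssim K^2$) and subgaussian (bilinear, $\psi_2\lesssim K$) pieces controllable by Bernstein, after which the square-to-norm bookkeeping of the single-point step delivers the stated increment bound, the $K^2$ again coming from the quadratic cross-term. (Equivalently, one may invoke the Liaw--Mehrabian--Plan--Vershynin matrix-deviation machinery, comparing $A$ to a Gaussian matrix via symmetrization and handling the deterministic block $\sqrt m\,I_m$ separately.) I expect obtaining this cancellation uniformly over $\mathcal T$ while keeping the dependence on $K$ at $K^2$ to be the main obstacle; it is also the step at which the improved Bernstein inequality of~\cite{jeong2019non} would later sharpen the constant (cf.~\autoref{thm:chen-1-new}).

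Finally, with the metric $d(u,u'):=CK^2\|u-u'\|_2$ on $\mathcal T\subseteq\reals^{n+m}$, Talagrand's generic chaining bound (majorizing measures, via the $\gamma_2$-functional) applied to $X_u$ gives, for any base point $u_0\in\mathcal T$,
\[
  \E\sup_{u\in\mathcal T}|X_u-X_{u_0}|\le C\,\gamma_2(\mathcal T,d)\le CK^2\,\w(\mathcal T)\le CK^2\gamma(\mathcal T),
\]
and in high-probability form $\sup_{u\in\mathcal T}|X_u-X_{u_0}|\le CK^2\big[\gamma(\mathcal T)+t\cdot\diam(\mathcal T)\big]$ with probability at least $1-e^{-t^2}$. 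Adding the single-point bound at $u_0$ — namely $\E|X_{u_0}|\le CK^2\rad(\mathcal T)\le CK^2\gamma(\mathcal T)$ for the expectation, and $|X_{u_0}|\le CK^2 t\,\rad(\mathcal T)$ with probability at least $1-e^{-t^2}$ for the tail, from the subgaussian tail of $X_{u_0}$ — using $\diam(\mathcal T)\le 2\rad(\mathcal T)$, and adjusting absolute constants, yields both displayed inequalities of the theorem, with every term scaling as $K^2$.
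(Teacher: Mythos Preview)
Your proposal is correct and follows essentially the same route as the paper. Note that the stated theorem is quoted from~\cite{chen2018stable} and is not re-proved in the paper; however, the paper proves the sharpened version (\autoref{thm:chen-1-new}) via exactly the structure you describe: first a single-point bound on $X_z$ (\autoref{lem:improved-constant-case-1}, via Bernstein on the quadratic piece and Hoeffding on the bilinear piece, with the square-to-norm passage), then the increment bound on the sphere (\autoref{lem:improved-constant-case-2}, via the polarization identity and a small/large-$t$ split), then the extension to general $z,z'$ by a collinearity/reverse-triangle reduction, and finally Talagrand's majorizing measures theorem plus the single-point bound at a fixed $z_0$ to pass from $\w(\mathcal T)$ to $\gamma(\mathcal T)$. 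Your identification of the quadratic term as the source of the $K^2$ and of the improved Bernstein inequality as the mechanism for the later sharpening to $K\sqrt{\log K}$ is also exactly what the paper does.
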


Above, $\rad(\mathcal{T}):= \sup _{x \in \mathcal{T}}\|x\|_{2}$ denotes the
radius of the set $\mathcal{T}$. Our improvement to this result, given
in~\autoref{thm:chen-1-new}, is to establish an improved dependence on the
subgaussian constant $K$, from $K^{2}$ to $K\sqrt{\log{K}}$.

Finally, we require~\nameref{lem:bora-lem-4-3} to connect the aforementioned
pieces. It is worth noting that this lemma essentially requires no modification
from its original statement for it to be applicable to the demixing problem.

\begin{lem*}[{\cite[Lemma 4.3]{bora2017compressed}}]
  \label{lem:bora-lem-4-3}
  Let $A \in \mathbb{R}^{m \times n}$ be drawn from a distribution such that:
 \begin{enumerate}[itemsep=0pt, topsep=0pt]
 \item $\SREC(S, \gamma, \delta)$ holds with probability $1-p$;
   
 \item for every fixed $x \in \mathbb{R}^{n},\|A x\|_{2} \leq 2\|x\|_{2}$ with
   probability $1-p$.
 \end{enumerate}
 For any $x^{*} \in \mathbb{R}^{n}$ and noise $\eta \in \reals^{m}$, let
 $y = A x^{*} + \eta$. Let $\hat{x}$ satisfy
 \begin{align*}
   \| y - A \hat{x}\|_{2} \leq \min _{x \in S} \|y - A x\|_{2} + \epsilon.
 \end{align*}
 Then, with probability $1 - 2 p$,
 \begin{align*}
   \left\|\hat{x} - x^{*}\right\|_{2} %
   \leq \left( \tfrac{4}{\gamma}+1\right) \min _{x \in S} \left\| x^{*} - x \right\|_{2} %
   + \tfrac{1}{\gamma} ( 2\|\eta\|_{2} + \epsilon + \delta ).
 \end{align*}

\end{lem*}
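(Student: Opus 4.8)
The plan is to run the standard ``near-minimizer'' recovery argument, built entirely from the triangle inequality, the two distributional hypotheses on $A$, and the defining near-optimality of $\hat x$. First I would fix a point $\bar x \in S$ with $\|x^* - \bar x\|_2 \le \min_{x \in S}\|x^* - x\|_2 + \xi$ for an arbitrary $\xi > 0$ (so as not to assume the infimum over $S$ is attained); the key structural observation is that $\bar x$ depends only on $x^*$ and $S$, hence is fixed \emph{before} $A$ is drawn. I would then pass to the event of probability at least $1 - 2p$ on which both hypotheses hold, by a union bound: hypothesis~(2), applied to the $A$-independent vector $x^* - \bar x$, yields $\|A(x^* - \bar x)\|_2 \le 2\|x^* - \bar x\|_2$; and hypothesis~(1), being a \emph{uniform} statement over all pairs of points in $S$, yields $\|A(\hat x - \bar x)\|_2 \ge \gamma\|\hat x - \bar x\|_2 - \delta$ despite the fact that $\hat x$ itself depends on $A$ (here I use that $\hat x \in S$, which is implicit in the hypothesis).

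On this event I would chain estimates as follows. From $y - A\bar x = A(x^* - \bar x) + \eta$ we get $\|y - A\bar x\|_2 \le 2\|x^* - \bar x\|_2 + \|\eta\|_2$. By the near-optimality of $\hat x$, $\|y - A\hat x\|_2 \le \min_{x\in S}\|y - Ax\|_2 + \epsilon \le \|y - A\bar x\|_2 + \epsilon \le 2\|x^* - \bar x\|_2 + \|\eta\|_2 + \epsilon$. A triangle inequality then gives $\|A(\hat x - \bar x)\|_2 \le \|y - A\hat x\|_2 + \|y - A\bar x\|_2 \le 4\|x^* - \bar x\|_2 + 2\|\eta\|_2 + \epsilon$. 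Feeding this into the $\SREC$ lower bound, $\gamma\|\hat x - \bar x\|_2 - \delta \le 4\|x^* - \bar x\|_2 + 2\|\eta\|_2 + \epsilon$, whence $\|\hat x - \bar x\|_2 \le \tfrac1\gamma\big(4\|x^* - \bar x\|_2 + 2\|\eta\|_2 + \epsilon + \delta\big)$.

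Finally I would apply one more triangle inequality, $\|\hat x - x^*\|_2 \le \|\hat x - \bar x\|_2 + \|\bar x - x^*\|_2 \le \big(\tfrac4\gamma + 1\big)\|x^* - \bar x\|_2 + \tfrac1\gamma\big(2\|\eta\|_2 + \epsilon + \delta\big)$, and then let $\xi \to 0$ so that $\|x^* - \bar x\|_2$ is replaced by $\min_{x\in S}\|x^* - x\|_2$, yielding exactly the claimed bound.

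\textbf{Expected main obstacle.} There is no analytic difficulty here — the estimate is ``soft.'' The only point demanding care is the probabilistic bookkeeping and the order of quantifiers: hypothesis~(2) is a per-vector guarantee and may legitimately be invoked only on a vector fixed independently of $A$ (namely $x^* - \bar x$), whereas the pair $(\hat x, \bar x)$, with $\hat x$ a function of $A$, must be controlled through the uniform $\SREC$ of hypothesis~(1); arranging the union bound and the choice of $\bar x$ so that these two usages are valid simultaneously is the whole content of the argument.
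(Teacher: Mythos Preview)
Your proposal is correct and follows essentially the same route as the paper's own argument (see the proof of \autoref{lem:bora-4-3-alt}, which is the paper's restatement of this lemma in the demixing setting): pick a best approximant $\bar x\in S$, bound $\|A(\hat x-\bar x)\|_2$ by two copies of the residual $\|y-A\bar x\|_2$ via near-optimality and the triangle inequality, apply hypothesis~(2) to the $A$-independent vector $x^*-\bar x$, apply $\SREC$ to the pair $(\hat x,\bar x)$, and finish with one more triangle inequality. Your version is in fact slightly more careful than the paper's, in that you use a $\xi$-approximate minimizer rather than assuming $\argmin_{x\in S}\|x^*-x\|_2$ is attained, and you are explicit about why hypothesis~(2) may only be invoked on $x^*-\bar x$ and not on anything involving $\hat x$; both are good habits, though neither changes the substance of the proof.
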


As one final point regarding notation, we shall reserve the letters $c, C > 0$
for absolute constants, whose value may change from one appearance to the next.

\subsection{Two demixing modifications}
\label{sec:easy-problem}

Before moving on to the main results of this work, we discuss two modifications
of the demixing problem \eqref{eq:demixing-model}, which trivially reduce to an
application of \nameref{thm:bora-1-2}.

\subsubsection{One mixing matrix}
\label{sec:variant-1}

In the first modification, we assume that $(x^{*}, y^{*})$ are defined as
in~\eqref{eq:demixing-model}, and define instead
$b^{(1)} := \Phi(x^{*} + y^{*}) + \eta$, where
$\Phi_{ij} \iid \mathcal{N}(0, m^{-1})$ and $\eta \in \reals^{m}$ is some
unknown fixed corruption. In this setting, we show that the two signals may be
recovered but make no guarantee about correctly recovering each ground-truth
signal.

For simplicity, assume $k = k'$, $L_{G} = L_{H} = L$ and define
$w^{*} := (u^{*}, v^{*}) \in \reals^{2k}$. Similarly, write
$w := (u, v) \in \reals^{2k}$ and define the function
$F : \reals^{2k} \to \reals^{n}$ by $F(w) := G(u) + H(v)$ so that
$b^{(1)} := \Phi F(w^{*}) + \eta \in \reals^{m}$. Using \nameref{thm:bora-1-2},
one may determine some $\hat w \in \reals^{2k}$ that approximates $w^{*}$ with
high probability on the matrix $\Phi$.

\begin{prop}
  Suppose $m \geq C \cdot 2k\log\left(\frac{2L\|w^{*}\|}{\delta}\right)$.  Let
  $\hat w := (\hat u, \hat v) \in \reals^{2k}$ minimize $\|y - \Phi F(w)\|_{2}$
  within additive error $\varepsilon > 0$ over vectors $w \in \reals^{2k}$ with
  $\|w\| \leq \|w^{*}\|$. With probability at least $1 - \exp(- cm)$,
  \begin{align*}
    \|F(\hat w) - F(w^{*})\|_{2} %
    = \|G(\hat u) - G(u^{*}) + H(\hat v) - H(v^{*})\|_{2} %
    \leq 3 \|\eta\|_{2} + 2\varepsilon + 2\delta
  \end{align*}
\end{prop}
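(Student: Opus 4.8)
The plan is to reduce the claim directly to \nameref{thm:bora-1-2}, applied to the single combined generator $F$ in place of $G$. The first step is to check that $F : \reals^{2k} \to \reals^{n}$ is Lipschitz with a constant controlled by $L$. For $w = (u,v)$ and $w' = (u',v')$ we have $F(w) - F(w') = \bigl(G(u) - G(u')\bigr) + \bigl(H(v) - H(v')\bigr)$, so by the triangle inequality, the Lipschitz hypotheses on $G$ and $H$, and Cauchy--Schwarz,
\begin{align*}
  \|F(w) - F(w')\|_{2}
  \leq L\|u - u'\|_{2} + L\|v - v'\|_{2}
  \leq \sqrt{2}\, L \sqrt{\|u - u'\|_{2}^{2} + \|v - v'\|_{2}^{2}}
  = \sqrt{2}\, L\,\|w - w'\|_{2}.
\end{align*}
Hence $F$ is $\sqrt{2}L$-Lipschitz on $\reals^{2k}$.

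Next I would invoke \nameref{thm:bora-1-2} with latent dimension $2k$ (in place of $k$), Lipschitz constant $\sqrt{2}L$ (in place of $L$), radius $r = \|w^{*}\|$, target signal $x^{*} = F(w^{*})$, observation $b^{(1)} = \Phi F(w^{*}) + \eta$ with $\Phi_{ij} \iid \mathcal N(0,m^{-1})$ as required, and optimization tolerance $\varepsilon$. The hypothesis $m \geq C\cdot 2k\log\!\left(\tfrac{2L\|w^{*}\|}{\delta}\right)$ implies the sample-size requirement $m = \mathcal{O}\!\left(2k\log\tfrac{\sqrt{2}L\|w^{*}\|}{\delta}\right)$ of that theorem, since $\sqrt{2} < 2$ and any remaining slack is absorbed into $C$. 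The theorem then gives, with probability $1 - \exp(-\Omega(m)) \geq 1 - \exp(-cm)$,
\begin{align*}
  \|F(\hat w) - F(w^{*})\|_{2}
  \leq 6 \min_{\|w\| \leq \|w^{*}\|} \|F(w) - F(w^{*})\|_{2} + 3\|\eta\|_{2} + 2\varepsilon + 2\delta.
\end{align*}

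Finally I would observe that the minimum on the right-hand side equals $0$: the vector $w = w^{*}$ is feasible because $\|w^{*}\| \leq \|w^{*}\|$, and it makes the objective vanish. Substituting this, and unfolding $F(\hat w) = G(\hat u) + H(\hat v)$ and $F(w^{*}) = G(u^{*}) + H(v^{*})$, yields precisely the asserted bound. I do not anticipate a genuine obstacle; the only points needing care are the Lipschitz-constant bookkeeping for $F$ — the harmless factor $\sqrt{2}$ that appears only inside a logarithm in the sample-complexity condition — and the observation that, in contrast to the general hypotheses of \nameref{thm:bora-1-2}, the representation-error term is here identically zero because the ground-truth latent code $w^{*}$ lies in the feasible set.
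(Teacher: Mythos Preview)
Your proof is correct and follows essentially the same route as the paper: verify that $F$ is $\sqrt{2}L$-Lipschitz and then invoke \nameref{thm:bora-1-2} with $2k$, $\sqrt{2}L$, and $r=\|w^{*}\|$. The paper additionally checks that $F$ inherits a $d$-layer ReLU structure from $G$ and $H$ via block-diagonal weights, but this is not needed for \nameref{thm:bora-1-2} as stated (which requires only Lipschitz continuity), and your explicit observation that the representation-error term vanishes because $w^{*}$ is feasible is left implicit in the paper.
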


\begin{proof}
  The proof is a consequence of~\nameref{thm:bora-1-2}. Indeed, first notice
  that $F$ is $\sqrt 2L$ Lipschitz, because $G$ and $H$ are $L$-Lipschitz:
  \begin{align*}
    \|F(w) - F(w')\|_{2} %
    &\leq \|G(u) - G(u')\| + \|H(v) - H(v')\| %
      \leq L \|u - u'\| + L \|v - v'\| %
    \\
    &\leq \sqrt 2L \sqrt{\|u - u'\|^{2} + \|v - v'\|^{2}} %
      = \sqrt 2L \|w - w'\|_{2}. 
  \end{align*}
  Next observe that $F$ is a $d$-layer ReLU network because $G$ and $H$
  are. Indeed, the linear transformations
  $W_{i}^{G} \in \reals^{k_{i}\times k_{i-1}}, W_{i}^{H} \in \reals^{k_{i}'
    \times k_{i-1}'}$ used in defining $G$ and $H$ can, at each layer, be merged
  into block-diagonal transformations
  \begin{align*}
    W_{i} := \left[
    \begin{array}{cc}
      W_{i}^{G} & \mathbf{0} \\
      \mathbf{0} & W_{i}^{H}
    \end{array} \right] %
    \in \reals^{(k_{i} + k_{i}') \times (k_{i-1} + k_{i-1}')}, %
    i \in 1,\ldots, d.
  \end{align*}
  to obtain
  \begin{align*}
    F(w) = \left[\mathrm{Id}_{n}\, \mathrm{Id}_{n}\right] \circ \sigma\circ W_{d}\circ
    \sigma\circ W_{d-1}\circ \cdots \circ \sigma \circ W_{1} (w).
  \end{align*}
  Applying~\nameref{thm:bora-1-2} thereby completes the proof.
\end{proof}

\subsection{Two mixing matrices}
\label{sec:variant-2}

Alternately, one may assume that $x^{*}$ and $y^{*}$ are mixed as
$b^{(2)} := \Phi_{1} x^{*} + \Phi_{2}y^{*} + \eta$, where $\Phi_{i}, i=1,2$ are
independent and have the same distribution as $\Phi$ above. Indeed, observe that
$b^{(2)}$ may be written as
\begin{align*}
  b^{(2)} &:= \overline{\Phi} \bmat{r}{G(u^{*})\\ H(v^{*})} + \eta %
          & %
  \overline{\Phi} &:= \bmat{rr}{\Phi_{1} & \Phi_{2}}. %
\end{align*}
Since the concatentated matrix $\overline{\Phi}$ has entries
$\overline{\Phi}_{ij} \iid \mathcal{N}(0, m^{-1})$, a similar line of reasoning
as in \nameref{sec:variant-1} using $\overline{F}(w) := \bmat{r}{G(u)\\H(v)}$
allows the approximate recovery $\bmat{r}{\hat x \\ \hat y} \in \reals^{2n}$ of
$\bmat{r}{x^{*}\\y^{*}} \in \reals^{2n}$ along with codes
$\hat u, \hat v \in \reals^{k}$ with $\hat x = G(\hat u)$ and
$\hat y = H(\hat v)$.

We omit a formal proposition of this fact, noting that it would essentially be a
re-wording of~\nameref{thm:bora-1-2}.




\section{Subgaussian Lipschitz demixing}
\label{sec:gen-demix-random-mapping}

Throughout this section, we assume the model~\eqref{eq:demixing-model}, with
$A \in \reals^{m\times n}$ being a $K$-subgaussian matrix. We further assume
$u^{*} \in B^{k}(r) \subseteq \reals^{k}, v^{*} \in B^{k'}(r') \subseteq
\reals^{k'}$ for some $r, r' > 0$, with $x^{*} := G(u^{*})$ and
$y^{*} := H(v^{*})$. In particular, the model~\eqref{eq:demixing-model} becomes
$b^{*} := AG(u^{*}) + H(v^{*}) + \eta$ where $\eta \in \reals^{m}$. Roughly,
this section proves that if $m$ is sufficiently large, then with high
probability on the realization of $A$, $\hat x = G(\hat u)$ and
$\hat y = H(\hat v)$ may be obtained from $(b, A, G, H)$ with
$\hat x \approx x^{*}$ and $\hat y \approx y^{*}$.

For notational brevity, we write $\mathbb{B} := B^{k}(r) \times B^{k'}(r')$ and
further write $F(\mathbb{B}) := G(B^{k}(r)) \times H(B^{k'}(r'))$. Finally,
throughout this work we define $\tilde{K} := K \sqrt{\log K}$. As per the remark
at the beginning of~\cite[$\S\,4.3$]{jeong2019non}, if $X$ is a subgaussian
random variable and $\E |X|^{2} = 1, \|X\|_{\psi_{2}} \leq K$, then
$K \geq K_{0} := \left( \log 2 \right)^{-1/2} \approx 1.201$. In particular,
mandating isotropy of the rows implies $1 < K \leq \tilde K \leq K^{2}$.

The first lemma is a modification of~\cite[Lemma 8.2]{bora2017compressed},
adapted for the demixing problem. It establishes that mixing signals from
$F(\mathbb{B})$ is well approximated by mixing signals from a net for
$F(\mathbb{B})$.

\begin{lem}
  \label{lem:berk-8-2}
  Fix $\delta > 0$, define $\kappa := m / n$ and suppose
  \begin{align*}
    m \geq Ck \log\left(\frac{crL_{G}\tilde K}{\delta}\right) + Ck'
    \log\left(\frac{cr'L_{H}\tilde K}{\delta}\right).
  \end{align*}
  There is a net $M \subseteq \reals^{k} \times \reals^{k'}$ for
  $\mathbb{B} \subseteq \reals^{k} \times \reals^{k'}$ such that, for any
  $z \in F(\mathbb{B})$, if
  \begin{align*}
    z' \in \argmin_{\hat z \in F(M)} \|z - \hat z\|_{2}
  \end{align*}
  then $m^{-1/2}\|B(z - z')\| \leq C \delta$ with probability at least
  ${1 - \exp(-cm)}$. Moreover, where $C_{\kappa} > 0$ is an absolute constant
  depending only on $\kappa$, the cardinality of $M$ satisfies
  \begin{align*}
    \log |M| %
    \leq C_{\kappa}(k + k') %
    + Ck \log\left(\frac{crL_{G}\tilde K}{\delta}\right)
    + Ck' \log\left(\frac{cr'L_{H}\tilde K}{\delta}\right).
  \end{align*}

\end{lem}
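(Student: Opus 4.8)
The plan is to take $M$ to be a product of Euclidean nets for the two input balls, push closeness through the Lipschitz maps $G$ and $H$, and then apply the improved subgaussian deviation inequality \emph{once} to a single bounded set that captures every possible difference $z-z'$.

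\textbf{Constructing the net.} Set fineness parameters $\delta_1 := c\delta\min(1,\sqrt{\kappa})/(L_G\tilde K)$ and $\delta_2 := c\delta/(L_H\tilde K)$, and let $M_1\subseteq B^k(r)$, $M_2\subseteq B^{k'}(r')$ be $\delta_1$- and $\delta_2$-nets with the standard volumetric bounds $\log|M_i|\lesssim\dim\cdot\log(1+2\cdot\text{radius}/\delta_i)$; put $M:=M_1\times M_2$, so that $F(M)=G(M_1)\times H(M_2)\subseteq F(\mathbb B)$. Substituting $\delta_1,\delta_2$ into the covering bounds, and isolating the $\tfrac12\log(1/\kappa)$ contribution (which only appears when $\kappa<1$) into an additive $C_\kappa(k+k')$ term, yields the claimed cardinality estimate.

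\textbf{Reduction to a bounded set.} Given $z=(G(u),H(v))\in F(\mathbb B)$, since $F(M)$ is a Cartesian product and the $\ell_2$ norm decouples over the two blocks, the minimizer $z'=(z'^{(1)},z'^{(2)})$ of $\|z-\cdot\|_2$ over $F(M)$ is the coordinatewise nearest point; hence $\|G(u)-z'^{(1)}\|_2\le L_G\delta_1$ and $\|H(v)-z'^{(2)}\|_2\le L_H\delta_2$ by the Lipschitz bounds and the net property. Therefore $z-z'$ lies in the fixed bounded product set $\mathcal D:=\mathcal D_1\times\mathcal D_2\subseteq (L_G\delta_1)B^n\times(L_H\delta_2)B^m$, regardless of which $z$ one starts from. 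Then $\rad(\mathcal D)\le\rho:=\big((L_G\delta_1)^2+(L_H\delta_2)^2\big)^{1/2}$, and since $\mathcal D$ is a product of sets each containing the origin, $\gamma(\mathcal D)\le\gamma(\mathcal D_1)+\gamma(\mathcal D_2)\le L_G\delta_1\sqrt n+L_H\delta_2\sqrt m$.

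\textbf{Applying the deviation inequality.} Apply the improved version of \nameref{thm:chen-1}, namely \autoref{thm:chen-1-new} (which supplies the $\tilde K$ rather than $K^2$ dependence), to the $K$-subgaussian matrix $A$ and the set $\mathcal D$ with $t=\sqrt m$: with probability at least $1-e^{-m}\ge 1-e^{-cm}$, every $(x,y)\in\mathcal D$ satisfies $\|Ax+\sqrt m y\|_2\le \sqrt m\sqrt{\|x\|^2+\|y\|^2}+C\tilde K[\gamma(\mathcal D)+\sqrt m\,\rho]$. Dividing by $\sqrt m$, using $\sqrt{\|x\|^2+\|y\|^2}\le\rho$ on $\mathcal D$, and inserting the bounds above gives $m^{-1/2}\|B(z-z')\|\le \rho(1+C\tilde K)+C\tilde K\big(L_G\delta_1/\sqrt\kappa+L_H\delta_2\big)$. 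With the choice of $\delta_1,\delta_2$ made above, each term is $O(\delta)$ — the $\min(1,\sqrt\kappa)$ factor in $\delta_1$ is precisely what compensates the $1/\sqrt\kappa$ created by $A$ acting on the first block versus $\sqrt m I$ on the second — which is the desired conclusion, and the hypothesis on $m$ is what keeps the remaining terms of the right order (equivalently, the regime in which the cardinality bound is meaningful).

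\textbf{Main obstacle.} The probabilistic content is entirely delegated to \autoref{thm:chen-1-new}; the real work is bookkeeping the $\kappa=m/n$ dependence, i.e. choosing the net fineness so that the radius term, the Gaussian-complexity term, and the concentration-deviation term are \emph{simultaneously} $O(\delta)$ while $\log|M|$ retains the stated shape with all $\kappa$-dependence confined to $C_\kappa$. This in turn hinges on the one structural point that lets the continuous family $\{z-z'\}$ collapse to a single bounded set of small Gaussian complexity and small radius: that $F(M)$ is a Cartesian product, so the argmin over $F(M)$ is coordinatewise.
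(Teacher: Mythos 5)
Your proof is correct, but it takes a genuinely different route from the paper's. The paper follows Bora et al.'s chaining strategy: it builds a dyadic chain of nets $M_0,\dots,M_\ell$ on $\mathbb{B}$, telescopes $z=z_0+\sum_i(z_{i+1}-z_i)+z^f$, controls the finite difference sets $T_i$ by pointwise subgaussian concentration of $\|Bz\|_2$ together with a level-dependent union bound (this is where the hypothesis $m\gtrsim k\log(\cdot)+k'\log(\cdot)$ is consumed, to keep $\sum_i|T_i|f_i$ exponentially small), and then invokes \autoref{thm:chen-1-new} only once, on the sphere $\sph^{n+m-1}$, to get an operator-norm bound $\|m^{-1/2}B\|\lesssim\tilde K(2+\sqrt{1+\kappa^{-1}})$ that kills the residual $z^f$ after taking $\ell\gtrsim\log(2+\kappa^{-1})$ levels; the $C_\kappa(k+k')$ term in $\log|M|$ comes from the number of levels. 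You instead use a single product net whose $G$-side fineness is shrunk by $\min(1,\sqrt\kappa)$, observe that every difference $z-z'$ lies in the fixed product of small balls $(L_G\delta_1)B^n\times(L_H\delta_2)B^m$, and apply \autoref{thm:chen-1-new} uniformly to that set with $t=\sqrt m$, paying the full ball complexity $L_G\delta_1\sqrt n$ and compensating via the $\sqrt\kappa$ factor, which reappears only as the additive $\tfrac{k}{2}\log(1/\kappa)\le C_\kappa(k+k')$ in $\log|M|$ — consistent with the stated cardinality bound. Your argument is shorter, yields the bound simultaneously for all $z\in F(\mathbb B)$ on a single event of probability $1-e^{-m}$, and notably does not use the lemma's hypothesis on $m$ at all (which is harmless, since you prove the stated conclusion under weaker assumptions); the trade-off is that it delegates all the probabilistic work to the generic-chaining-based deviation inequality over an uncountable set, whereas the paper's multiscale construction needs only pointwise concentration plus union bounds for the main terms and is the route that mirrors the original Lemma 8.2 of Bora et al. One small point of rigor worth tightening: state explicitly that $\mathcal D$ is the full product of balls (so the uniform bound covers every possible $z-z'$), and note that splitting $\log(1+2r/\delta_1)$ into the $\kappa$-free logarithm plus $\log(1/\min(1,\sqrt\kappa))$ uses $\log(1+AB)\le\log(1+A)+\log B$ for $B\ge1$.
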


Next, we establish that normalized subgaussian mixing matrices
satisfy~\nameref{defn:bora-defn-1} with high probability.

\begin{lem}
  \label{lem:s-rec}
  For $\alpha < 1$, if
  \begin{align*}
    m \geq \frac{C \tilde{K}^{2}}{\alpha^{2}}  \left( %
    k\log\left(\frac{L_{G}r}{\delta}\right) %
    + k' \log \left(\frac{L_{H}r'}{\delta}\right) %
    \right),
  \end{align*}
  then $m^{-1/2}B$ satisfies
  $\operatorname{S-REC}(F(\mathbb{B}), 1 - \alpha, \delta)$ with probability at
  least $1 - \exp(-cm)$.
\end{lem}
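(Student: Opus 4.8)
The plan is to obtain the S-REC from the deviation inequality for the mixing matrix $B$ (the improved version of \nameref{thm:chen-1}, i.e.\ \autoref{thm:chen-1-new}) applied to a suitably normalized difference set, and then to discretize via the net supplied by \autoref{lem:berk-8-2}. First I would reduce to controlling $m^{-1/2}\|B(z_1 - z_2)\|$ uniformly over $z_1, z_2 \in F(\mathbb{B})$. Write $z_i = (G(u_i), H(v_i))$; then $B(z_1 - z_2) = A(G(u_1) - G(u_2)) + \sqrt m\,(H(v_1) - H(v_2))$, which is exactly the object the deviation inequality controls, with the reference quantity $\sqrt m \cdot \sqrt{\|G(u_1)-G(u_2)\|^2 + \|H(v_1)-H(v_2)\|^2} = \sqrt m\,\|z_1 - z_2\|$. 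So on the difference set $\mathcal{T} := \{(x_1 - x_2, y_1 - y_2) : (x_i,y_i) \in F(\mathbb{B})\}$ we get, with probability $1 - \exp(-t^2)$,
\begin{align*}
  \sup_{(x,y)\in\mathcal{T}} \left| \|Ax + \sqrt m\,y\| - \sqrt m\,\sqrt{\|x\|^2 + \|y\|^2} \right| \leq C\tilde K [\gamma(\mathcal{T}) + t\cdot\rad(\mathcal{T})].
\end{align*}
Dividing by $\sqrt m$ turns this into a two-sided bound $\big| m^{-1/2}\|B(z_1-z_2)\| - \|z_1 - z_2\| \big| \leq C\tilde K\, m^{-1/2}[\gamma(\mathcal{T}) + t\,\rad(\mathcal{T})]$, and the lower bound is precisely an S-REC statement with $\gamma_{\text{SREC}} = 1$ and slack $\delta_{\text{SREC}} = C\tilde K\, m^{-1/2}[\gamma(\mathcal{T}) + t\,\rad(\mathcal{T})]$.

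The remaining work is to bound $\gamma(\mathcal{T})$ and $\rad(\mathcal{T})$ and to convert the continuum statement into the clean form in the lemma with multiplicative constant $1-\alpha$ and additive slack $\delta$. For the radius: $\rad(\mathcal{T}) \lesssim \diam F(\mathbb{B}) \lesssim L_G r + L_H r'$ using the Lipschitz bounds on $G, H$ and the ball radii. For the Gaussian complexity, I would use the standard fact that the Gaussian complexity of the difference set of a $q$-dimensional Lipschitz image of a ball of radius $\rho$ is controlled by the covering-number/Dudley bound, giving $\gamma(\mathcal{T}) \lesssim \sqrt{k}\,L_G r + \sqrt{k'}\,L_H r'$ up to logarithmic factors in $L_G r/\delta$ and $L_H r'/\delta$ — in fact one can route this directly through the net $M$ of \autoref{lem:berk-8-2}, whose log-cardinality is $\lesssim C_\kappa(k+k') + k\log(crL_G\tilde K/\delta) + k'\log(cr'L_H\tilde K/\delta)$, so that $\gamma(\mathcal{T}) \lesssim \sqrt{\log|M|}\cdot\rad(\mathcal{T}) + (\text{net resolution term})$. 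Plugging these into the slack and demanding $m \geq \frac{C\tilde K^2}{\alpha^2}\big(k\log\frac{L_G r}{\delta} + k'\log\frac{L_H r'}{\delta}\big)$ makes the $\gamma(\mathcal{T})/\sqrt m$ contribution at most, say, $\alpha/2$ of the relevant scale and absorbs it into the multiplicative $1-\alpha$ factor; choosing $t = c\sqrt m$ makes the probability $1 - \exp(-cm)$ while the $t\,\rad(\mathcal{T})/\sqrt m$ term contributes the absolute additive $\delta$ (after the same sample-complexity condition is used to kill logarithmic overhead). I should double-check the bookkeeping: the deviation bound as stated gives slack proportional to $\tilde K$, not $\tilde K^2$, yet the lemma asks for $m \gtrsim \tilde K^2/\alpha^2(\cdots)$; the extra $\tilde K$ is exactly what is needed so that $\tilde K\,\gamma(\mathcal{T})/\sqrt m \lesssim \alpha$, since $\gamma(\mathcal{T})$ already carries $\sqrt{k\log(\cdots) + k'\log(\cdots)}$ and $\sqrt m$ must dominate $\tilde K$ times that.

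The main obstacle I anticipate is the clean passage from the net to the full set $F(\mathbb{B})$ with the correct constants — i.e.\ showing that S-REC on the discrete set $F(M)$ with slack $\delta$, combined with \autoref{lem:berk-8-2}'s guarantee that every $z \in F(\mathbb{B})$ has a net point $z'$ with $m^{-1/2}\|B(z-z')\| \leq C\delta$, upgrades to S-REC on all of $F(\mathbb{B})$ while only degrading the multiplicative constant from $1$ to $1-\alpha$ and the additive constant by an absolute factor. Concretely, for arbitrary $z_1, z_2 \in F(\mathbb{B})$ with net approximants $z_1', z_2'$, one writes $m^{-1/2}\|B(z_1 - z_2)\| \geq m^{-1/2}\|B(z_1' - z_2')\| - m^{-1/2}\|B(z_1 - z_1')\| - m^{-1/2}\|B(z_2 - z_2')\| \geq \|z_1' - z_2'\| - \delta - 2C\delta \geq \|z_1 - z_2\| - (\text{net resolution}) - \delta - 2C\delta$, and then one must argue the net resolution term is itself $\leq \alpha\|z_1 - z_2\| + O(\delta)$ or simply $O(\delta)$ given the sample-complexity bound; managing the interplay between the multiplicative loss $\alpha$ and these additive terms, and confirming the union bound over $|F(M)|^2$ pairs still leaves probability $1 - \exp(-cm)$, is where care is required. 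Everything else is routine substitution of the Lipschitz and width estimates.
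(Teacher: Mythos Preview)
Your overall architecture --- deviation inequality for $B$ on a finite net, then upgrade to all of $F(\mathbb{B})$ via \autoref{lem:berk-8-2} and a triangle-inequality chain --- is exactly what the paper does. But there is a real gap in how you extract the multiplicative factor $1-\alpha$.

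You apply \autoref{thm:chen-1-new} to the \emph{unnormalized} difference set $\mathcal{T}$, obtaining
\[
\bigl|\,m^{-1/2}\|B\zeta\| - \|\zeta\|\,\bigr|
\;\le\; C\tilde K\,m^{-1/2}\bigl[\gamma(\mathcal{T}) + t\,\rad(\mathcal{T})\bigr]
\quad\text{for all }\zeta\in\mathcal{T}.
\]
This is an \emph{additive} slack, independent of $\|\zeta\|$, so it cannot be ``absorbed into the multiplicative $1-\alpha$ factor'' as you write: $\|\zeta\| = \|z_1 - z_2\|$ can be arbitrarily small. Your fallback, that choosing $t = c\sqrt m$ makes the term $t\,\rad(\mathcal{T})/\sqrt m$ contribute the additive $\delta$, is false: that term equals $c\,\rad(\mathcal{T}) \asymp c(L_G r + L_H r')$, a constant that does not shrink with $m$ and is in general $\gg \delta$. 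Likewise, the $\gamma(\mathcal{T})/\sqrt m$ contribution, with $\gamma(\mathcal{T}) \lesssim \sqrt{\log|M|}\,\rad(\mathcal{T})$, is $\lesssim \alpha\,\rad(\mathcal{T})$ under the stated sample complexity, again not $\delta$. So as written your argument yields $\SREC(F(\mathbb{B}),1,\Delta)$ with $\Delta \asymp \tilde K(L_Gr + L_Hr')$, not $\SREC(F(\mathbb{B}),1-\alpha,\delta)$.

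The paper's fix is simple but essential: it applies the deviation inequality not to the raw differences but to the \emph{normalized} finite set
\[
\mathcal{T} := \Bigl\{\tfrac{z_1 - z_2}{\|z_1 - z_2\|} : z_1,z_2 \in F(M)\Bigr\} \subseteq \sph^{n+m-1},
\]
for which $\rad(\mathcal{T}) = 1$ and $\gamma(\mathcal{T}) \le C\sqrt{\log|M|}$ by \autoref{prop:gaussian-complexity-finite-collection}. Homogeneity of $\zeta\mapsto\|B\zeta\|$ then gives directly
\[
\bigl|\,m^{-1/2}\|B(z_1-z_2)\| - \|z_1 - z_2\|\,\bigr| \;\le\; C\tilde K\sqrt{\log|M|/m}\;\|z_1 - z_2\|,
\]
which under $m \ge C\tilde K^2 \alpha^{-2}\log|M|$ becomes the multiplicative bound $(1-\alpha)\|z_1-z_2\| \le m^{-1/2}\|B(z_1-z_2)\|$ on the net with \emph{no} additive term. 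Your triangle-inequality chain in the final paragraph is then exactly right (and matches the paper), once this multiplicative bound replaces your ``$\ge \|z_1'-z_2'\| - \delta$'': the only additive $\delta$'s come from the net resolution and from \autoref{lem:berk-8-2}, and those are already $O(\delta)$ by construction.
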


The main result of this work follows by combining~\autoref{lem:berk-8-2}
and~\ref{lem:s-rec} with~\nameref{lem:bora-lem-4-3}.

\begin{thm}[Lipschitz demixing]
  \label{thm:gen-demix-whp}
  Fix $\delta > 0$ and let $A \in \reals^{m \times n}$ be a $K$-subgaussian
  matrix. Suppose $G : \reals^{k} \to \reals^{n}$ is $L_{G}$-Lipschitz,
  $H : \reals^{k'}\to\reals^{m}$ is $L_{H}$-Lipschitz. Let
  $z^{*} = [G(u^{*})^{T} H(v^{*})^{T}]^{T}$ for some
  $(u^{*}, v^{*}) \in \mathbb{B}$. Suppose $b = Bz^{*} + \eta$ for some
  $\eta \in \reals^{m}$. Suppose
  $\hat z \in F(\mathbb{B}) := G(B^{k}(r))\times H(B^{k'}(r'))$ satisfies
  $\|B\hat z - b \|_{2} \leq \min_{z \in F(\mathbb{B})} \|Bz - b\|_{2} +
  \varepsilon$. If
  \begin{align*}
    m %
    \geq C \tilde{K}^{2} \big( %
    k \log\Big(\frac{L_{G}r \tilde K}{\delta}\Big) %
    + k' \log \Big(\frac{L_{H}r' \tilde K}{\delta}\Big) %
    \Big),
  \end{align*}
  then with probability at least $1 - \exp(- cm)$ on the realization of $A$, it
  holds that
  \begin{align*}
    \|\hat z - z^{*}\| %
    \leq C \tilde K \cdot \min_{z \in F(\mathbb{B})}\|z - z^{*}\| %
    + C \left(2 \|\eta\| + \delta + \varepsilon\right). 
  \end{align*}
\end{thm}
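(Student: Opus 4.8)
The plan is to deduce the theorem from \nameref{lem:bora-lem-4-3} applied to the \emph{normalized} concatenated matrix $\tilde A := m^{-1/2}B \in \reals^{m\times(n+m)}$, using \autoref{lem:s-rec} and \autoref{lem:berk-8-2} to supply its two hypotheses. First I would pass to the normalized problem: dividing $b = Bz^{*}+\eta$ through by $\sqrt m$ gives $\tilde b := m^{-1/2}b = \tilde A z^{*} + \tilde\eta$ with $\tilde\eta := m^{-1/2}\eta$, and dividing the near-optimality condition through by $\sqrt m$ shows that $\hat z$ is within additive $m^{-1/2}\varepsilon$ of $\min_{z\in F(\mathbb{B})}\|\tilde A z - \tilde b\|_{2}$. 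Note $z^{*}\in F(\mathbb{B})$ because $(u^{*},v^{*})\in\mathbb{B}$, so the set $S := F(\mathbb{B})$ over which we optimize contains the ground truth; the $\min_{z\in F(\mathbb{B})}\|z-z^{*}\|_{2}$ term in the conclusion is retained only to mirror the form of \nameref{thm:bora-1-2} and vanishes here.

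For the first hypothesis of \nameref{lem:bora-lem-4-3} --- that $\tilde A$ satisfies $\SREC(S,\gamma,\delta)$ with high probability --- I would fix $\alpha$ to be an absolute constant (say $\alpha=\tfrac12$) and invoke \autoref{lem:s-rec}, which gives $\SREC(F(\mathbb{B}),1-\alpha,\delta)$ with probability $\ge 1-\exp(-cm)$ once $m \gtrsim \alpha^{-2}\tilde K^{2}\bigl(k\log(L_{G}r/\delta)+k'\log(L_{H}r'/\delta)\bigr)$; this is implied by the hypothesis of the theorem since $\alpha^{-2}$ is absorbed into $C$ and $\log(L_{G}r\tilde K/\delta)\ge\log(L_{G}r/\delta)$. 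The hypothesis of the theorem also dominates that of \autoref{lem:berk-8-2}: it is precisely the net supplied by that lemma, together with the bound $m^{-1/2}\|B(z-z')\|_{2}\le C\delta$ for the net projection $z'$, that lets \autoref{lem:s-rec} upgrade $\SREC$ from a finite net to all of $F(\mathbb{B})$, and this is where the factor $\tilde K$ inside the logarithms originates.

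For the second hypothesis --- that $\|\tilde A z\|_{2}\le 2\|z\|_{2}$ for each \emph{fixed} $z$ with probability $\ge 1-\exp(-cm)$ --- I would apply the improved deviation inequality \autoref{thm:chen-1-new} to the singleton $\mathcal{T}=\{z\}$ (or give a direct Bernstein estimate): writing $z=(x^{T},y^{T})^{T}$ one has $\|\tilde A z\|_{2}=m^{-1/2}\|Ax+\sqrt m\,y\|_{2}$, which concentrates around $\|z\|_{2}=\sqrt{\|x\|_{2}^{2}+\|y\|_{2}^{2}}$ at the required scale once $m\gtrsim\tilde K^{2}$. Feeding these two facts into \nameref{lem:bora-lem-4-3} (with matrix $\tilde A$, ground truth $z^{*}$, noise $\tilde\eta$, set $S=F(\mathbb{B})$, parameters $\gamma=1-\alpha$ and $\delta$, and optimization tolerance $m^{-1/2}\varepsilon$) yields, with probability $\ge 1-\exp(-cm)$,
\begin{align*}
  \|\hat z - z^{*}\|_{2}
  \;\le\; \Bigl(\tfrac{4}{1-\alpha}+1\Bigr)\min_{z\in F(\mathbb{B})}\|z-z^{*}\|_{2}
  \;+\;\tfrac{1}{1-\alpha}\bigl(2m^{-1/2}\|\eta\|_{2}+m^{-1/2}\varepsilon+\delta\bigr).
\end{align*}
Since $\tilde K>1$, $m^{-1/2}\le 1$, and $1-\alpha$ is a positive constant, the right-hand side is at most $C\tilde K\cdot\min_{z\in F(\mathbb{B})}\|z-z^{*}\|_{2}+C(2\|\eta\|_{2}+\delta+\varepsilon)$, which is the claim (the displayed constants being deliberately loose).

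I expect essentially all of the genuine work to sit in \autoref{lem:s-rec} and \autoref{lem:berk-8-2} rather than in the theorem itself: establishing $\SREC$ for the concatenated matrix $m^{-1/2}[A\ \sqrt m\,I]$ over the \emph{non-convex} set $F(\mathbb{B})$, with the improved $\tilde K$ (rather than $K^{2}$) dependence coming from \autoref{thm:chen-1-new}, is the crux. Within the proof of the theorem proper, the only real subtlety is bookkeeping --- keeping the $\sqrt m$ normalization straight so that the noise and optimization tolerances pick up only a harmless $m^{-1/2}$, and verifying the pointwise operator-norm bound for the \emph{concatenated} matrix (rather than for $A$ alone), which is why a fresh singleton-set concentration estimate is used in place of the Gaussian argument behind \nameref{thm:bora-1-2}.
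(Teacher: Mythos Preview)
Your approach is essentially the paper's: reduce to \nameref{lem:bora-lem-4-3} applied to $\tilde B := m^{-1/2}B$, feed in \autoref{lem:s-rec} for the $\SREC$ hypothesis, and verify a pointwise upper bound for the second hypothesis. The bookkeeping with the $\sqrt m$ normalization is handled just as the paper does.

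There is, however, one point you gloss over. You assert that for each fixed $z$ one has $\|\tilde B z\|_{2}\le 2\|z\|_{2}$ with probability at least $1-\exp(-cm)$ for an \emph{absolute} constant $c$. The singleton concentration from \autoref{thm:chen-1-new} (or directly from \autoref{lem:subgaussian-process-better-constant}) gives
\[
  \bigl|\,\|\tilde B z\|_{2}-\|z\|_{2}\,\bigr|\le C\tilde K\,\alpha\,\|z\|_{2}
  \quad\text{with probability}\quad 1-\exp(-\alpha^{2}m),
\]
so forcing the deviation below $\|z\|_{2}$ requires $\alpha\lesssim\tilde K^{-1}$, yielding probability only $1-\exp(-cm/\tilde K^{2})$, not $1-\exp(-cm)$. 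The paper sidesteps this by first proving a modified version of the lemma (\autoref{lem:bora-4-3-alt}) in which the hypothesis $\|Ax\|_{2}\le 2\|x\|_{2}$ is replaced by $\|\tilde Bz\|_{2}\le C\tilde K\|z\|_{2}$; that relaxed bound \emph{does} hold with probability $1-\exp(-cm)$ (take $\alpha=\tfrac12$), and re-running the proof of \nameref{lem:bora-lem-4-3} with it produces the factor $C\tilde K$ in front of the $\min$ term --- which is exactly why that factor appears in the theorem. Your route would instead give an absolute constant in front of the $\min$ but a $\tilde K$-degraded probability. As you yourself note, since $z^{*}\in F(\mathbb{B})$ the $\min$ term vanishes here, so the second hypothesis is in fact never used and the discrepancy is immaterial for the theorem as stated; but it is the reason the paper phrases the intermediate lemma the way it does.
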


We defer the proofs of~\autoref{lem:berk-8-2},~\ref{lem:s-rec}
and~\autoref{thm:gen-demix-whp}
to~\nameref{sec:proofs-gen-demix-random-mapping}.

\section{Improved subgaussian constant}
\label{sec:impr-k-depend}

Let $A \in \reals^{m \times n}$ be a $K$-subgaussian matrix. Where
$z = \bmat{r}{x\\ y} \in \reals^{n}\times\reals^{m}$, define the centered random
process $X_{z} = X_{x,y}$ by 
\begin{align}
  \label{eq:def-Xz}
  X_{z} %
  &:= \|Ax + \sqrt m y\|_{2} - \left(\E \|A x + \sqrt m y \|_{2}^{2}\right)^{1/2} %
    \nonumber
  \\
  & = \|Ax + \sqrt m y\|_{2} - \sqrt m \cdot \sqrt{\|x\|_{2}^{2} + \|y\|_{2}^{2}}
    \nonumber
  \\
  & = \|Bz\|_{2} - \sqrt m \|z\|_{2}
\end{align}
That $X_{z}$ is subgaussian (indeed, a process with subgaussian increments) was
established by~\nameref{thm:chen-1}. In this section we improve the dependence
of $X_{z}$ on $K$ from $K^{2}$ to $K\sqrt{\log K}$. The tools required for this
improvement were developed in~\cite{jeong2019non}.

\begin{lem}[Improved subgaussian increments]
  \label{lem:subgaussian-process-better-constant}
  Let $A \in \reals^{m \times n}$ be a $K$-subgaussian matrix. The centered
  process $X_{z}$ has subgaussian increments:
  \begin{align*}
    \|X_{z} - X_{z'}\|_{\psi_{2}} %
    &\leq C K \sqrt{\log K} \|z-z'\|_{2}, %
    &
      &\text{for every } z, z' \in \reals^{n}\times \reals^{m}.
  \end{align*}
  
\end{lem}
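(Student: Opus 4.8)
The plan is to rerun the argument behind the deviation inequality \cite[Theorem 1]{chen2018stable}, which shows that $X_z$ has subgaussian increments with the coarser parameter $K^{2}$, and to replace its single invocation of Bernstein's inequality by the improved Bernstein inequality of \cite{jeong2019non}. Since $X_{\lambda z}=\lambda X_z$ for $\lambda\ge 0$, it suffices to prove the bound for $z,z'$ on the unit sphere of $\reals^{n}\times\reals^{m}$: for general nonzero $z,z'$ with $\|z\|\ge\|z'\|$, write $z=\|z\|\hat z$, $z'=\|z'\|\hat z'$, so $X_z-X_{z'}=\|z\|(X_{\hat z}-X_{\hat z'})+(\|z\|-\|z'\|)X_{\hat z'}$, and conclude using $\|z\|\,\|\hat z-\hat z'\|\le 2\|z-z'\|$, $\bigl|\|z\|-\|z'\|\bigr|\le\|z-z'\|$, and the one‑point estimate $\|X_{\hat z'}\|_{\psi_2}\le C\tilde K$ (the $z'=0$ case, which comes out of the same computation). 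So assume $\|z\|=\|z'\|=1$ and set $w=z-z'$, $s=z+z'$, so $\langle w,s\rangle=0$ and $\|s\|\le2$; write $w=(w_1,w_2)$, $s=(s_1,s_2)$ with $w_1,s_1\in\reals^{n}$ and $w_2,s_2\in\reals^{m}$. Off an event of probability $\exp(-cm/\tilde K^{2})$ — a lower‑deviation estimate for $\|Bz\|^{2}$ around its mean $m$, obtained by separating its subgaussian and subexponential parts and applying the improved Bernstein inequality to the latter — one has $\|Bz\|+\|Bz'\|\ge\tfrac12\sqrt m$, hence
\begin{align*}
  \bigl|X_z-X_{z'}\bigr|
  \;\le\; \frac{2}{\sqrt m}\,\Bigl|\,\|Bz\|^{2}-\|Bz'\|^{2}\,\Bigr|
  \;=\; \frac{2}{\sqrt m}\,\Bigl|\,\sum_{i=1}^{m}\langle B_i,w\rangle\langle B_i,s\rangle\,\Bigr| ,
\end{align*}
a centered sum because $\sum_i\E\langle B_i,w\rangle\langle B_i,s\rangle=m\langle w,s\rangle=0$.

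Expanding each summand according to $B=[A\ \sqrt m\,I]$ as $\langle A_i,w_1\rangle\langle A_i,s_1\rangle+\sqrt m\bigl[(s_2)_i\langle A_i,w_1\rangle+(w_2)_i\langle A_i,s_1\rangle\bigr]+m(w_2)_i(s_2)_i$, the constant part is absorbed by the centering and the linear part is, after division by $\sqrt m$, a sum of independent mean‑zero subgaussians of $\psi_2$‑norm $\lesssim K\|w\|$, contributing $O(K\|w\|)=O(\tilde K\|w\|)$. The heart is the bilinear sum $\sum_i\zeta_i$ with $\zeta_i:=\langle A_i,w_1\rangle\langle A_i,s_1\rangle-\langle w_1,s_1\rangle$: by isotropy $\langle A_i,w_1\rangle/\|w_1\|$ and $\langle A_i,s_1\rangle/\|s_1\|$ are \emph{normalized} subgaussians of $\psi_2$‑norm $\le K$, and the sharpened fourth‑moment bound for normalized subgaussians recorded in \cite[$\S\,4.3$]{jeong2019non} gives $\E\langle A_i,w_1\rangle^{4}\le C\tilde K^{2}\|w_1\|^{4}$ in place of the trivial $CK^{4}\|w_1\|^{4}$; hence each $\zeta_i$ has variance $\le C\tilde K^{2}\|w_1\|^{2}\|s_1\|^{2}$ while its $\psi_1$‑norm is only $\le CK^{2}\|w_1\|\|s_1\|$. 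Feeding these two parameters into the variance‑based (improved) Bernstein inequality of \cite{jeong2019non} gives
\begin{align*}
  \pr\Bigl(\,\bigl|\,{\textstyle\sum_{i=1}^{m}}\zeta_i\,\bigr|>u\,\Bigr)
  \;\le\; 2\exp\!\Bigl(-c\min\bigl(\tfrac{u^{2}}{m\tilde K^{2}\|w_1\|^{2}},\,\tfrac{u}{K^{2}\|w_1\|}\bigr)\Bigr),
\end{align*}
which with $u\asymp\sqrt m\,s$, after dividing by $\sqrt m$, yields a subgaussian tail $\exp(-cs^{2}/(\tilde K^{2}\|w\|^{2}))$ for $X_z-X_{z'}$ throughout the range $s\lesssim\sqrt m\,\|w_1\|\log K$.

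Finally I would close off the range $s\gtrsim\sqrt m\,\|w_1\|\log K$, where Bernstein only gives a sub‑exponential tail, using the deterministic bound $\bigl|X_z-X_{z'}\bigr|=\bigl|\|Bz\|-\|Bz'\|\bigr|\le\|B(z-z')\|\le\|Aw_1\|+\sqrt m\,\|w_2\|$ together with the improved‑Bernstein tail estimate for $\|Aw_1\|^{2}$ around $m\|w_1\|^{2}$ (which pins $\|Aw_1\|$ to within $O(K\|w_1\|)$ of $\sqrt m\,\|w_1\|$ in its far tail); the two ranges overlap because $\log K\cdot\|w\|^{2}\ge\|w_1\|\,\|w_2\|$ (as $\|w_1\|\,\|w_2\|\le\tfrac12\|w\|^{2}$ and $\log K\ge1$), so $\pr(|X_z-X_{z'}|>s)\le C\exp(-cs^{2}/(\tilde K^{2}\|w\|^{2}))$ for every $s$, i.e.\ $\|X_z-X_{z'}\|_{\psi_2}\le C\tilde K\|z-z'\|$, and homogenization finishes the proof. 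I expect this last step to be the main obstacle: dividing a sub‑exponential numerator by $\sqrt m$ does not by itself produce a subgaussian variable, so the far tail must be controlled separately through the norm bound, and one has to check that the two parameter regimes meet without leaving a gap — precisely the place where the $K\sqrt{\log K}$ scaling furnished by the improved Bernstein inequality is essential.
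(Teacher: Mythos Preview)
Your proposal follows essentially the same route as the paper: rerun the argument of \cite[Lemma~5]{chen2018stable}, reduce general $z,z'$ to the unit sphere by homogeneity, handle the sphere case by polarization, and replace standard Bernstein by the improved Bernstein of \cite{jeong2019non} on the bilinear term $\sum_i\bigl[\langle A_i,w_1\rangle\langle A_i,s_1\rangle-\langle w_1,s_1\rangle\bigr]$. Two places where the paper is cleaner: (i) the form of New Bernstein it quotes has hypotheses $\E|Y_i|\le 2$ and $\|Y_i\|_{\psi_1}\le K^2$, which directly produces the $K^2\log K$ denominator in the subgaussian branch --- no separate fourth-moment/variance estimate is invoked; and (ii) the case split is placed once at $t=2\sqrt m$ --- for $t\ge 2\sqrt m$ one applies the one-point bound to $w/\|w\|$ via $|X_z-X_{z'}|\le\|Bw\|$, and for $t\le 2\sqrt m$ the subexponential branch of Bernstein never activates --- so the overlap bookkeeping you flag as the main obstacle simply does not arise. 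Your proposed overlap condition $\log K\cdot\|w\|^2\ge\|w_1\|\|w_2\|$ is in any case not the relevant one: the ceiling on the polarization regime is set by the fixed bad-event probability $\exp(-cm/\tilde K^2)$ for $\{\|Bz\|+\|Bz'\|\ge\tfrac12\sqrt m\}$, which caps that regime at $s\lesssim\sqrt m\,\|w\|$, not $\sqrt m\,\|w_1\|\log K$; splitting at $s\asymp\sqrt m\,\|w\|$ (equivalently $t\asymp\sqrt m$) makes both halves immediate.
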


The proof of this result is technical, but strongly resembles the one presented
in~\cite{chen2018stable}. To
prove~\autoref{lem:subgaussian-process-better-constant} essentially requires
replacing each application of Bernstein's inequality in the proof
of~\cite[Lemma~5]{chen2018stable} with one
of~\cite[New~Bernstein's~inequality]{jeong2019non}. We provide a detailed proof
of~\autoref{lem:subgaussian-process-better-constant}
in~\nameref{sec:impr-subg-const}.

Having established this result, it is straightforward to
combine~\autoref{lem:subgaussian-process-better-constant} with Talagrand's
majorizing measure theorem~\cite[Theorem~8.6.1]{vershynin2018high} to obtain the
desired improvement of~\nameref{thm:chen-1}. We relegate these technical details
to~\nameref{sec:impr-subg-const}.

\begin{thm}[Improved deviation inequality]
  \label{thm:chen-1-new}
  Let $A \in \reals^{m\times n}$ be a $K$-subgaussian matrix and
  $\mathcal{T} \subseteq \reals^{n} \times \reals^{m}$ be bounded. Then,
  \begin{align*}
    \underset{(x, y) \in \mathcal{T}}{\mathbb{E}} \left| %
    \|A x + \sqrt{m} y \|_{2} - \sqrt{m} \cdot
    \sqrt{\|x\|_{2}^{2} + \|y\|_{2}^{2}}\right| %
    \leq C\tilde K \cdot \gamma (\mathcal{T}).
  \end{align*}
  Moreover, for any $t \geq 0$, the event
  \begin{align*}
    \sup _{(x, y) \in \mathcal{T}} \left| \|A x + \sqrt{m} y\|_{2} %
    - \sqrt{m} \cdot \sqrt{\|x\|_{2}^{2}+\|y\|_{2}^{2}} \right| %
    \leq C \tilde K [\gamma(\mathcal{T})+t \cdot \rad(\mathcal{T})]
  \end{align*}
  holds with probability at least $1 - \exp \left(-t^{2}\right)$.
\end{thm}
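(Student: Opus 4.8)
The plan is to treat $X_z = \|Bz\|_2 - \sqrt m\|z\|_2$ as a centered random process on $\reals^{n}\times\reals^{m}$ and run the generic chaining argument of~\cite[Theorem~1]{chen2018stable} essentially verbatim, with the single substitution that its subgaussian-increment estimate is replaced by~\autoref{lem:subgaussian-process-better-constant}. That lemma supplies the only probabilistic input: $\|X_z - X_{z'}\|_{\psi_2}\le C\tilde K\,\|z-z'\|_2$ for all $z,z'$, so $X_z$ has subgaussian increments with respect to the Euclidean metric and parameter proportional to $\tilde K$ instead of $K^{2}$. Nothing downstream depends on \emph{which} constant multiplies $\|z-z'\|_2$.

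First I would reduce the two-sided supremum in the theorem to an increment supremum. Since $B\cdot 0 = 0$ we have $X_0 = 0$, hence, writing $\mathcal{T}_0 := \mathcal{T}\cup\{0\}$,
\[
  \sup_{(x,y)\in\mathcal{T}}\bigl|\,\|Ax+\sqrt m\,y\|_2 - \sqrt m\,\sqrt{\|x\|_2^2+\|y\|_2^2}\,\bigr|
  \;=\; \sup_{z\in\mathcal{T}}|X_z - X_0|
  \;\le\; \sup_{z,z'\in\mathcal{T}_0}|X_z - X_{z'}|.
\]
Then I would apply the generic chaining bound together with Talagrand's majorizing measure theorem, in the form of~\cite[Theorem~8.6.1]{vershynin2018high}: for a process with subgaussian increments of parameter $C\tilde K$ on a subset $\mathcal{T}_0$ of Euclidean space, one has $\E\sup_{z,z'\in\mathcal{T}_0}|X_z - X_{z'}|\le C\tilde K\,\w(\mathcal{T}_0)$, and for every $u\ge 0$ the event $\sup_{z,z'\in\mathcal{T}_0}|X_z - X_{z'}|\le C\tilde K\bigl[\w(\mathcal{T}_0)+u\,\diam(\mathcal{T}_0)\bigr]$ holds with probability at least $1-2\exp(-u^2)$ (the expectation bound either coming directly from the same circle of results or by integrating the tail).

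It then remains to replace $\w(\mathcal{T}_0)$ and $\diam(\mathcal{T}_0)$ by the quantities in the statement. For the width, $\w(\mathcal{T}_0)=\E\bigl[\sup_{t\in\mathcal{T}}\ip{t,g}\bigr]_+\le \E\sup_{t\in\mathcal{T}}|\ip{t,g}|=\gamma(\mathcal{T})$, using that $\sup$ over $\mathcal{T}_0$ of a linear functional is the positive part of its $\sup$ over $\mathcal{T}$; for the diameter, $\diam(\mathcal{T}_0)\le 2\rad(\mathcal{T})$ since every point of $\mathcal{T}_0$ has norm at most $\rad(\mathcal{T})$. Substituting these in, absorbing the factor $2$ into the absolute constant, and reparametrising $u$ so that $2\exp(-u^2)$ is dominated by $\exp(-t^2)$ (again at the cost of an absolute constant, matching the convention already used in~\nameref{thm:chen-1}), yields precisely the two displayed inequalities.

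The genuinely technical work — sharpening the increment constant from $K^2$ to $\tilde K$ via the new Bernstein inequality — has been isolated into~\autoref{lem:subgaussian-process-better-constant}, so within this theorem there is no serious obstacle; the only points demanding care are the passage from the increment supremum to $\sup_{z}|X_z|$ (licensed by $X_0=0$, i.e.\ by adjoining the origin to $\mathcal{T}$, which is needed because~\cite[Theorem~8.6.1]{vershynin2018high} is phrased for increments) and the elementary bookkeeping that converts $\bigl(\w(\mathcal{T}_0),\diam(\mathcal{T}_0)\bigr)$ into $\bigl(\gamma(\mathcal{T}),\rad(\mathcal{T})\bigr)$.
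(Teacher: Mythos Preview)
Your proposal is correct and follows essentially the same route as the paper: feed the improved increment bound from \autoref{lem:subgaussian-process-better-constant} into the majorizing measures theorem, then convert width/diameter into $\gamma(\mathcal{T})$ and $\rad(\mathcal{T})$. The only cosmetic difference is that you adjoin $0$ to $\mathcal{T}$ \emph{before} invoking majorizing measures and then bound $\w(\mathcal{T}\cup\{0\})\le\gamma(\mathcal{T})$, whereas the paper applies majorizing measures to $\mathcal{T}$ itself, handles the leftover term $\E|X_{z_0}|$ via the increment bound at $z'=0$, and then passes to $\gamma(\mathcal{T})$ using \autoref{prop:gaussian-complexity-finite-collection}; both reductions are standard and interchangeable.
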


\section{Numerics}
\label{sec:numerics}

In this section we describe numerical results supporting the theory developed
in~\nameref{sec:gen-demix-random-mapping}. We include only the main results of
this simulation; specific implementation details beyond the general experimental
set-up and simulation results may be found in~\nameref{sec:s1-numerics}.


For this simulation, two deep convolutional VAEs were trained on images of $1s$
and $8$s, respectively; the trained decoders
$D_{i}: \reals^{128} \to [0,1]^{28\times 28}$, $i = 1, 8$ from each VAE were
used as the Lipschitz-continuous deep generative networks for the demixing
problem. A description of the network architecture and training method for the
VAEs is detailed in~\nameref{sec:s1-numerics}. We claim no novelty for our VAE
or decoder architecture; extensions of VAEs involving convolutional layers
appear in existing work~\cite{kulkarni2015deep,pu2016variational}. The digits
used to train each VAE were a subclass of the MNIST
database~\cite{deng2012mnist} containing only images of the digits $1$ and
$8$. Each group of digits was randomly parititioned into training, validation
and test sets. Only training and validation sets were used in the training of
the VAEs.

Two ground-truth images (\emph{cf.} \autoref{fig:mnist-vae-true0} \&
\ref{fig:mnist-vae-true1}, resp.) were randomly sampled from the test partition
of each dataset and mixed according to~\eqref{eq:demixing-model}, where the
mixing matrix had iid normal random entries
(\emph{cf.}~\autoref{fig:mnist-vae-mixture}). In particular, where
$x_{8} \in [0,1]^{28\times 28}$ is the (vectorized) grayscale image of an $8$
and $x_{1} \in [0,1]^{28 \times 28}$ that of a $1$, the mixture is given by
\begin{align}
  \label{eq:b-mnist}
  b &= \Phi x_{8} + x_{1}, %
    &
  \Phi &\in \reals^{784 \times 784}, %
  \Phi_{ij} \iid \mathcal{N}(0, \tfrac{1}{784}).
\end{align}
We did not investigate the effect of additive noise in this simulation, and
leave that to a future work.

To solve the demixing problem, we used a variant of minibatch stochastic
gradient descent~\cite{kingma2014adam} to minimize the mean-squared error (MSE)
loss between the predicted mixture and the true mixture (using
PyTorch~\cite{paszke2017automatic}). Details for this numerical implementation,
including pseudo-code, are provided in~\nameref{sec:s1-numerics}. The
recovered mixture is depicted in~\autoref{fig:mnist-vae-mixture_pred}, and the
recovered signals, lying in the output of each generator $D_{i}$, are shown
in~\autoref{fig:mnist-vae-recovered0} and~\ref{fig:mnist-vae-recovered1} for
$i = 1, 8$, respectively.

Convergence of the numerical implementation of the demixing algorithm is shown
in~\autoref{fig:mnist-vae-demix-loss-plot}, where we plot the MSE loss between
the true mixture and the predicted mixture as a function of the iteration
number. The recovered image approximating $x_{1}$ had an MSE of
$1.59\cdot 10^{-3}$; that for $x_{8}$, $7.48 \cdot 10^{-4}$; and that for the
mixture, $3.92\cdot 10^{-4}$.

\begin{figure*}[t]
  \centering
  \begin{subfigure}[h]{0.15\linewidth}
    \centering
    \includegraphics[width=\linewidth]{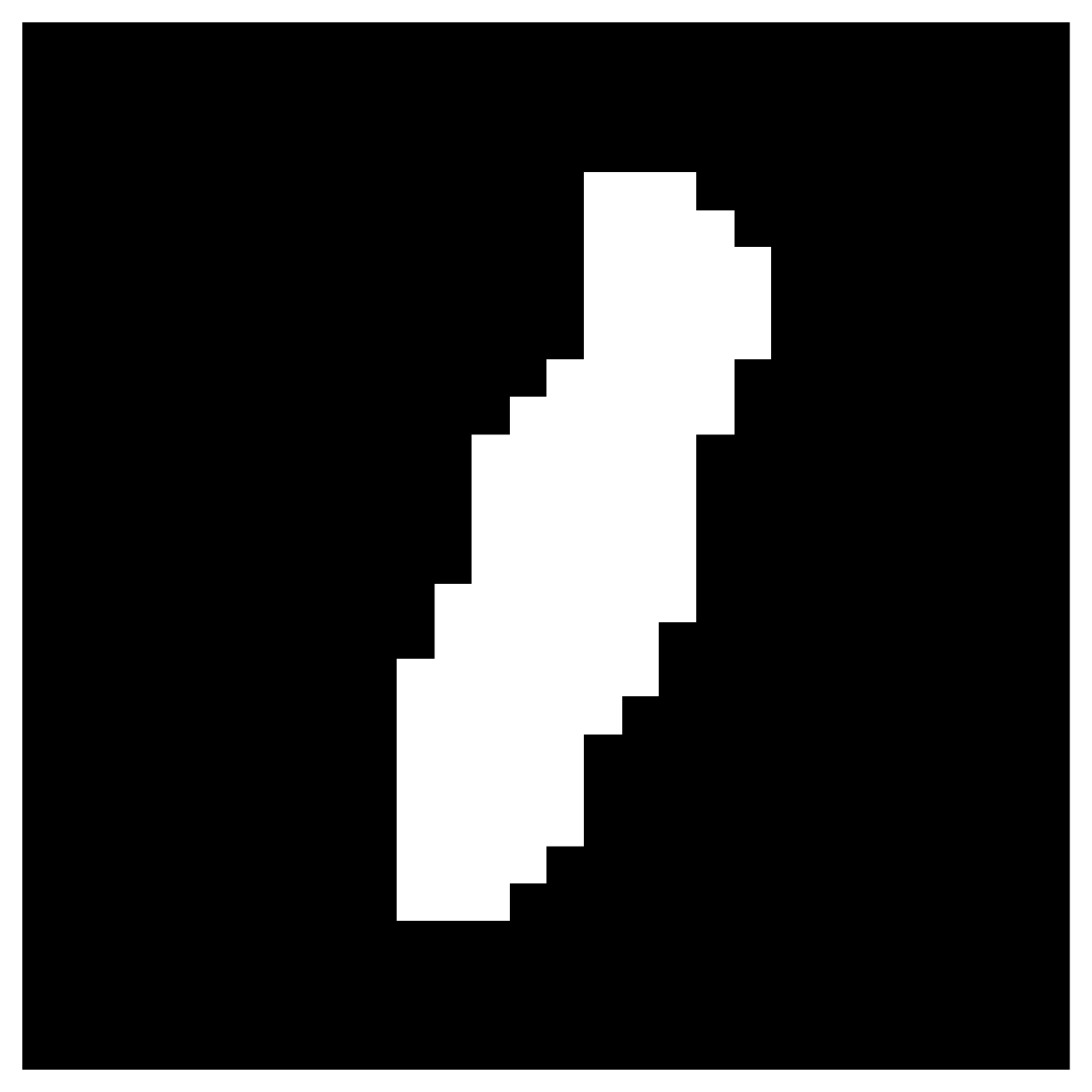}
    \caption{$x_{1}$}
    \label{fig:mnist-vae-true0}
  \end{subfigure}\hfill
  \begin{subfigure}[h]{0.15\linewidth}
    \centering
    \includegraphics[width=\linewidth]{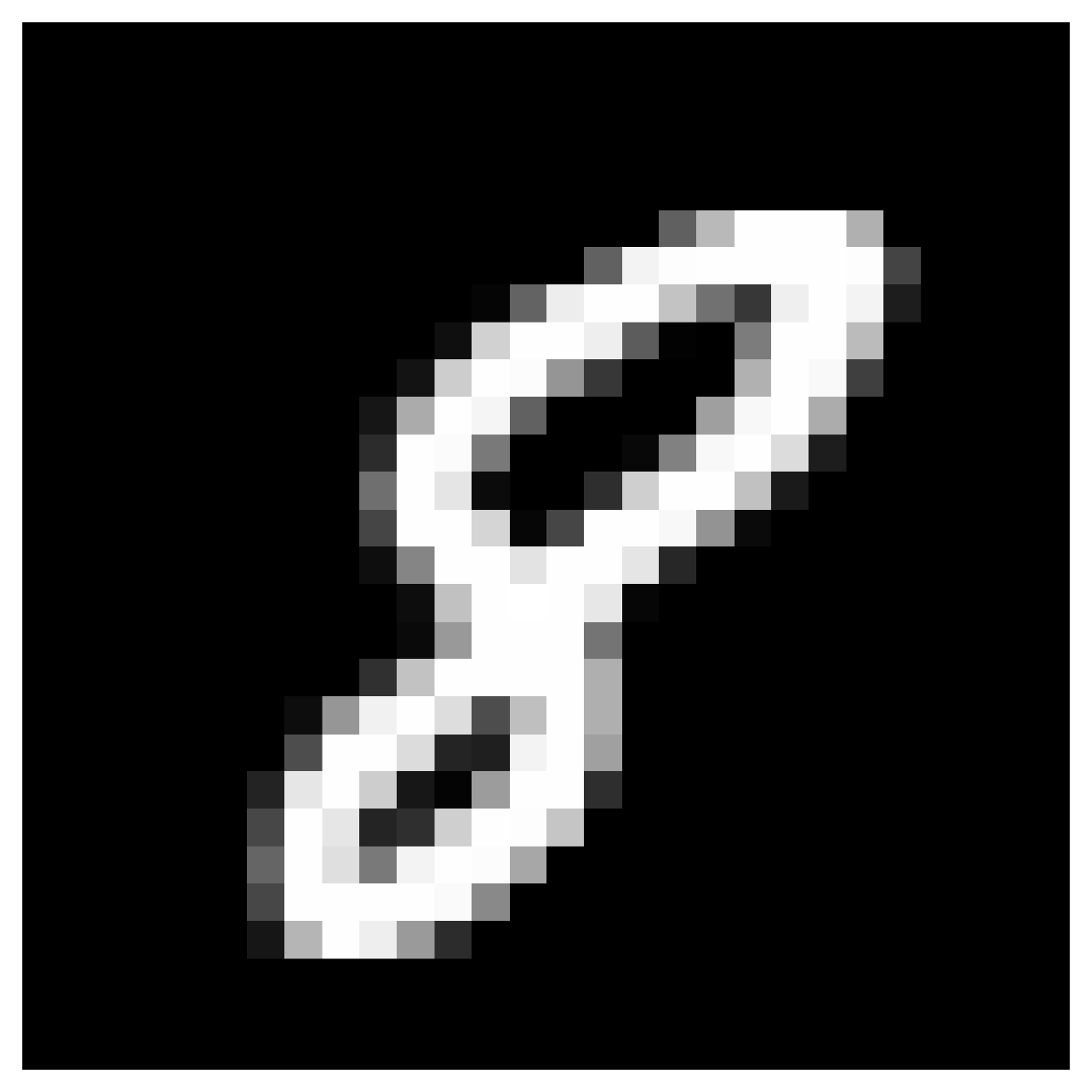}
    \caption{$x_{8}$}
    \label{fig:mnist-vae-true1}
  \end{subfigure}\hfill
  \begin{subfigure}[h]{0.15\linewidth}
    \centering
    \includegraphics[width=\linewidth]{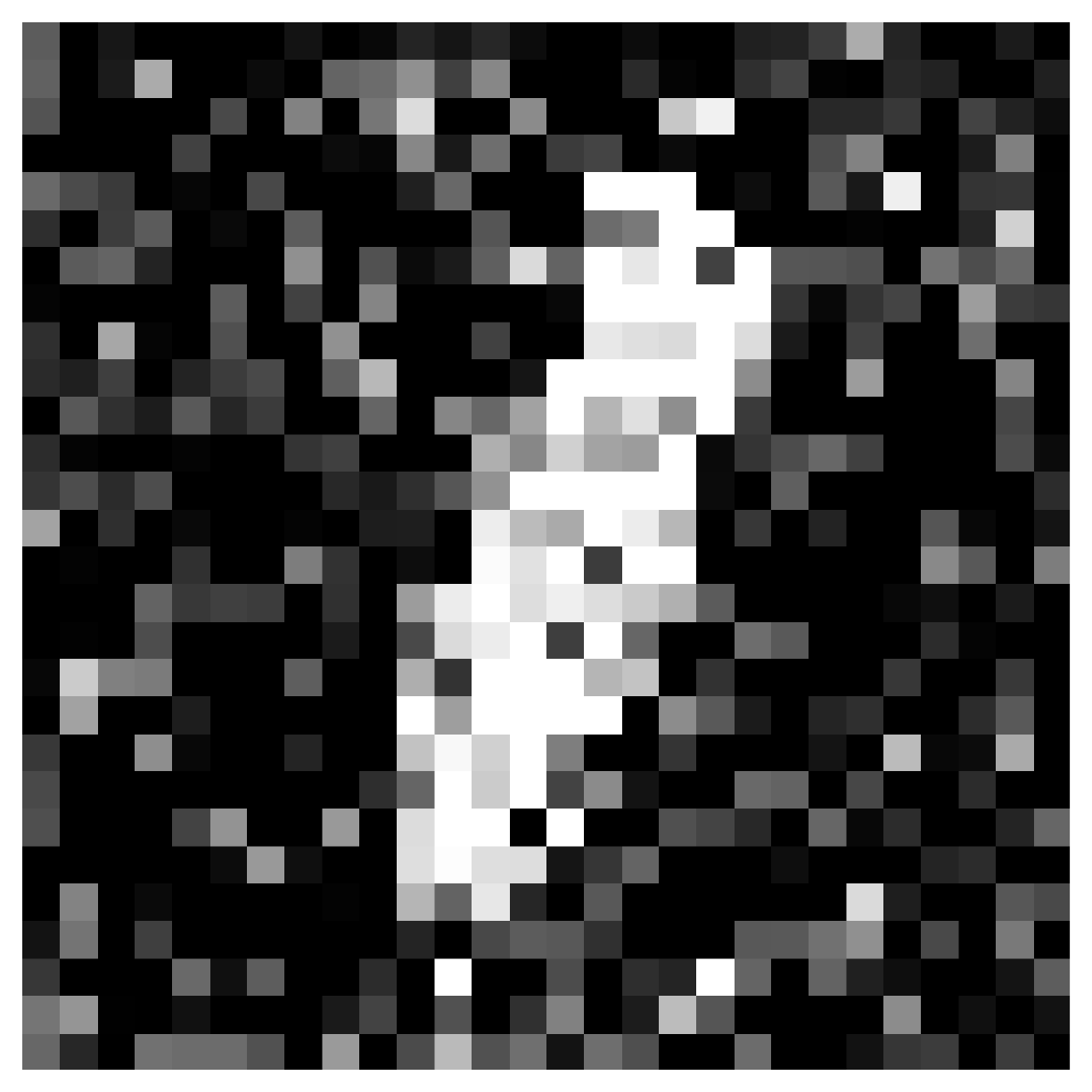}
    \caption{$b$}
    \label{fig:mnist-vae-mixture}
  \end{subfigure}\hfill
  \begin{subtable}[h]{0.35\linewidth}
    \centering
    \begin{tabular}[b]{cr}
      \toprule
      image & MSE\\
      \midrule
      $1$ & $1.59\cdot 10^{-3}$
      \\
      $8$ & $7.48 \cdot 10^{-4}$
      \\
      mixture & $3.92\cdot 10^{-4}$
      \\
      \bottomrule
    \end{tabular}
    \caption{Final MSE}
    \label{tab:mse}
  \end{subtable}

  \begin{subfigure}[h]{0.15\linewidth}
    \centering
    \includegraphics[width=\linewidth]{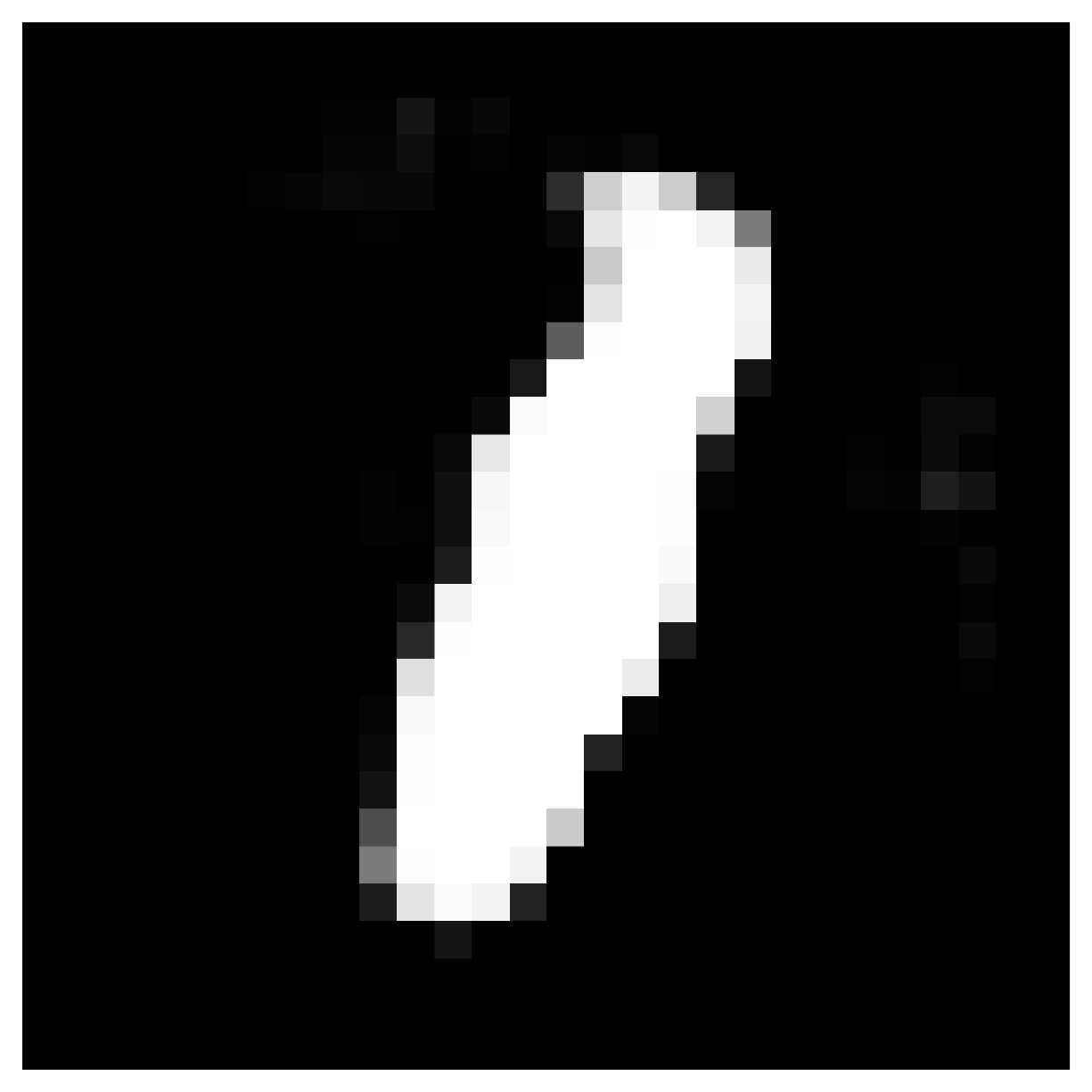}
    \caption{$\hat x_{1}$}
    \label{fig:mnist-vae-recovered0}
  \end{subfigure}\hfill
  \begin{subfigure}[h]{0.15\linewidth}
    \centering
    \includegraphics[width=\linewidth]{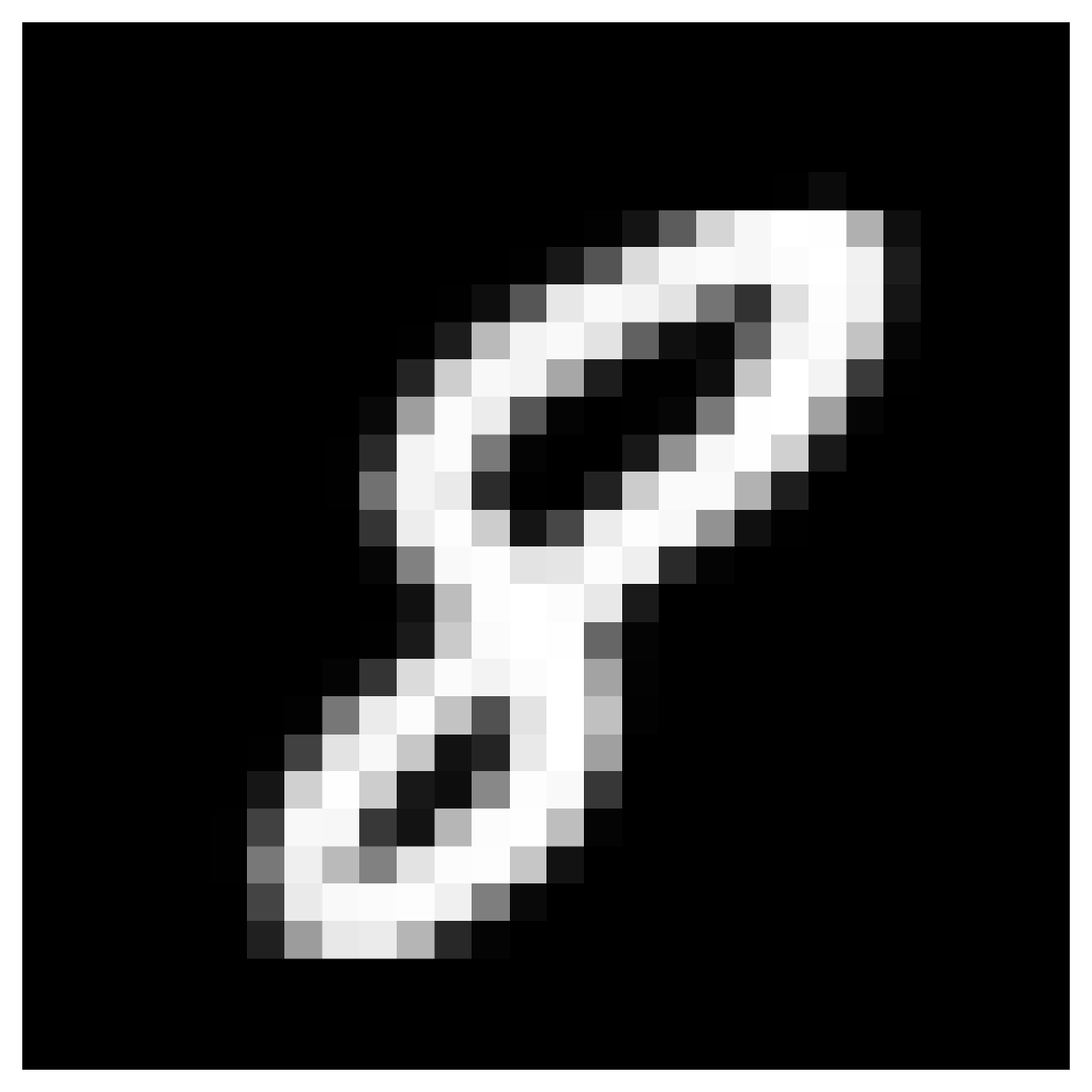}
    \caption{$\hat x_{8}$}
    \label{fig:mnist-vae-recovered1}
  \end{subfigure}\hfill
  \begin{subfigure}[h]{0.15\linewidth}
    \centering
    \includegraphics[width=\linewidth]{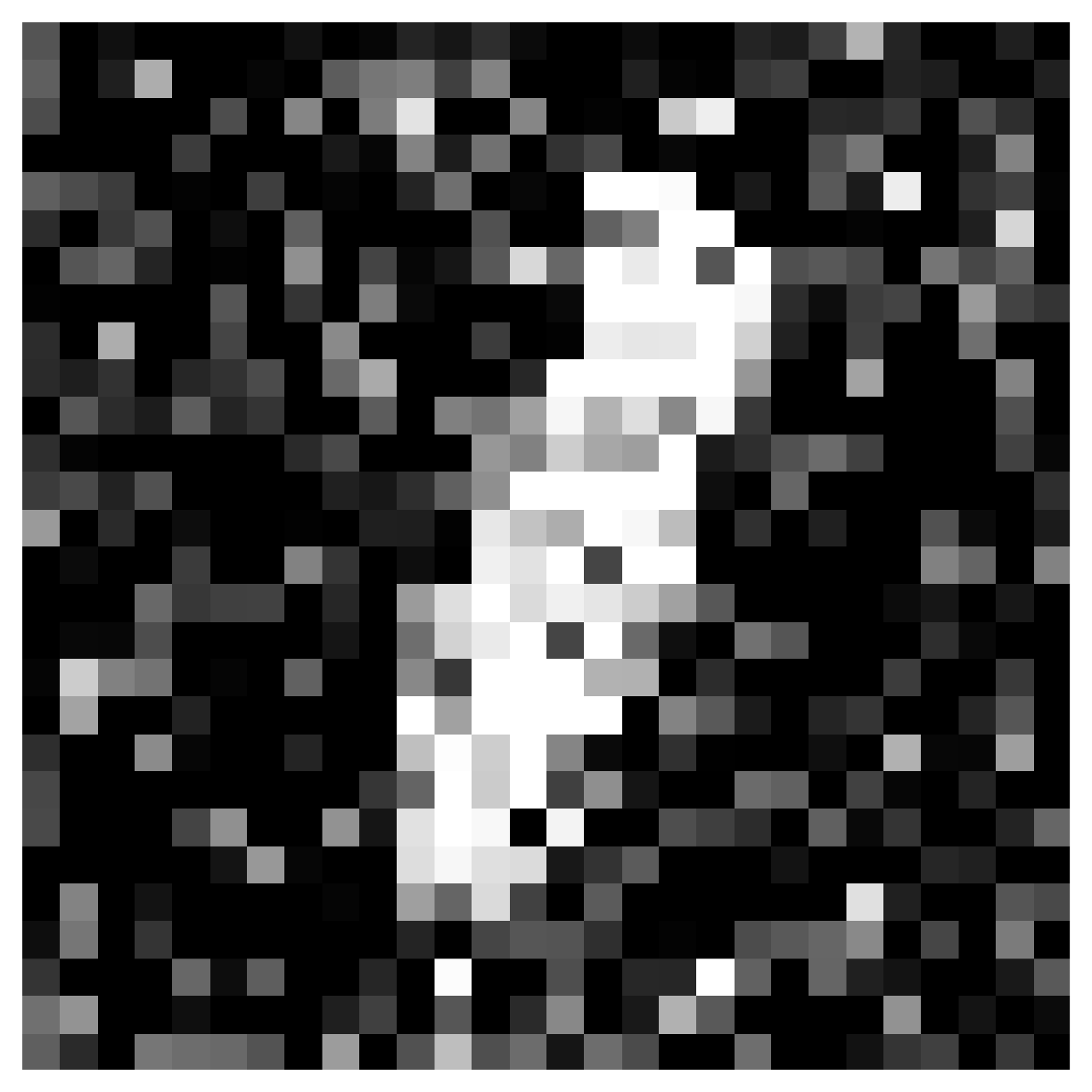}
    \caption{$\hat b$}
    \label{fig:mnist-vae-mixture_pred}
  \end{subfigure}\hfill
  \begin{subfigure}[h]{.35\linewidth}
    \centering
    \includegraphics[width=\linewidth]{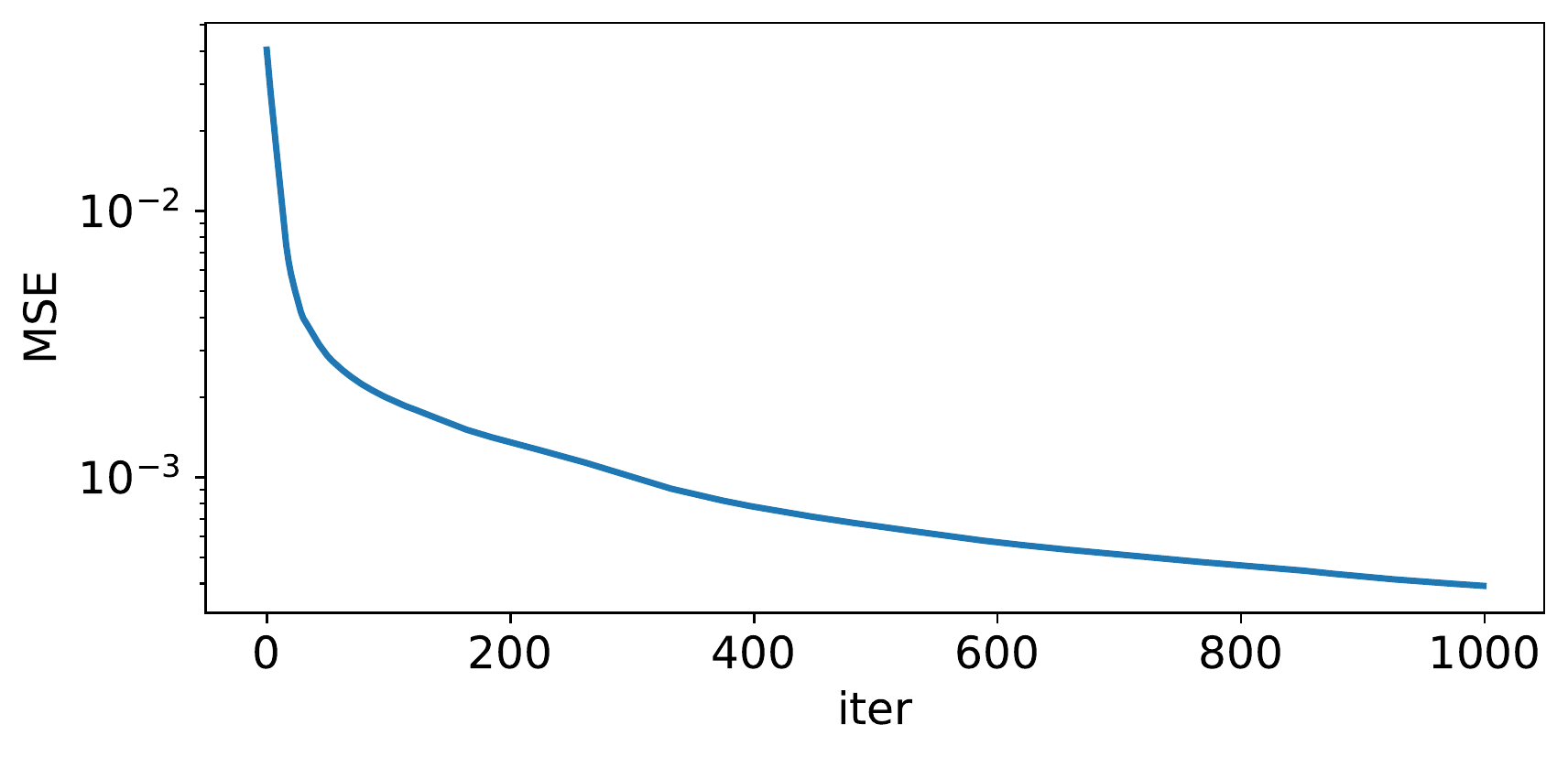}
    \caption{Mixture MSE {vs.}  iteration number}
    \label{fig:mnist-vae-demix-loss-plot}
  \end{subfigure}

  \caption{Demixing results for two deep generative networks. \textbf{(a) \&
      (b)} Ground-truth images used to create \textbf{(c)} the
    mixture. \textbf{(e) \& (f)} The recovered images are shown beside
    \textbf{(g)} the recovered mixture. \textbf{(d)} Final MSE for
    \textbf{(e--g)}, resp. \textbf{(h)} A plot of mixture MSE \emph{vs.}
    iteration number (log-y-scale)}
  \label{fig:mnist-vae-demixing-results}
\end{figure*}


\section{Conclusion}
\label{sec:conclusion}

We prove a sample complexity bound for approximately solving the Lipschitz
demixing problem with high probability on the subgaussian matrix $A$. This
result extends existing work from CS with Lipschitz signals and Gaussian random
matrices to the demixing problem with Lipschitz signals and $K$-subgaussian
matrices. We establish an improved dependence on the subgaussian constant $K$
appearing in the deviation inequality for subgaussian mixing matrices. We
connect the demixing problem for Lipschitz signals to demixing for GNNs, noting
that many popular GNNs are Lipschitz. To support the theory, we provide
numerical simulations for demixing images of handwritten digits. Through these
simulations we suggest a viable algorithm for demixing with trained GNNs, and
demonstrate efficacy of the algorithm's performance on a particular example.

Further analysis into the demixing of Lipschitz signals would be an interesting
subject of future study --- for instance, in the case where the mixing is not
random, or where the networks remain untrained. For example, it would be of both
theoretical and practical interest to analyze signal recovery in the setting
where two signals each in the range of a GNN are mixed \emph{without} the
subgaussian mixing matrix: $b = x^{*} + y^{*} + \eta$. Here, establishing error
bounds for approximate signal recovery may depend on the underlying geometric
structure of the GNNs.

Additionally, theoretical analysis and thorough empirical exploration of our
numerical algorithm remain open questions. For example, whether a convergence
guarantee exists for the algorithm we have proposed, using GNNs, remains
unknown. Further, we have chosen a particular initialization of the algorithm,
but have not analyzed its effect on signal recovery in detail. A comparison of
VAEs' performance on the demixing problem using other datasets would be of
interest, as would a comparison between VAEs and other GNNs such as
GANs. Additionally, it could be worthwhile to validate numerically the effect of
noise on demixing performance. 

Finally, a further investigation may also wish to examine the case where the
GNNs used as the Lipschitz functions $G$ and $H$ themselves have random
weights. Such untrained networks have recently garnered attention as good
structural priors for natural images~\cite{heckel2019deepdecoder,
  heckel2020compressive, lempitsky2018deep}, making them of potential interest
to demixing.







\section*{Acknowledgments}
\label{sec:acknowledgments}

I would like to thank Drs. Babhru Joshi, Yaniv Plan, and \"Ozg\"ur Yilmaz for
helpful discussion in the development and communication of this work.

\bibliographystyle{plain}
\bibliography{/Users/aberk/Dropbox/org/bibliography}

\appendix

\section[Proofs for subgaussian Lipschitz demixing]{Proofs for {$\S$\ref{sec:gen-demix-random-mapping}}}
\label{sec:proofs-gen-demix-random-mapping}

\subsection{Auxiliary results}
\label{sec:auxiliary-results}

\begin{prop}[Gaussian complexity for $|\mathcal{T}| = T$]
  \label{prop:gaussian-complexity-finite-collection}
  Assume that $|\mathcal{T}| = T < \infty$. Observe
  \begin{align*}
    w(\mathcal{T}) %
    & \leq C \sqrt{\log T} \diam(\mathcal{T}); %
    & %
      \frac{1}{3} \left[w(\mathcal{T}) + \rad(\mathcal{T})\right] %
    & \leq \gamma(\mathcal{T}) %
      \leq 2 \left[w(T) + \inf_{y \in \mathcal{T}}\|y\|_{2}\right].
  \end{align*}
\end{prop}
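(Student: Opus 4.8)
The plan is to prove the three displayed inequalities independently, each by a short appeal to standard facts about suprema of Gaussian processes. For the first, fix any reference point $y_{0} \in \mathcal{T}$; since $\E\ip{y_{0}, g} = 0$ we may write $\w(\mathcal{T}) = \E\sup_{x \in \mathcal{T}}\ip{x - y_{0}, g}$. The family $\{\ip{x - y_{0}, g} : x \in \mathcal{T}\}$ has at most $T$ members, each a centered Gaussian of variance $\|x - y_{0}\|_{2}^{2} \le \diam(\mathcal{T})^{2}$. The elementary Jensen/moment-generating-function bound $\E\max_{i \le T} Z_{i} \le \sigma\sqrt{2\log T}$ for centered Gaussians $Z_{i}$ of variance at most $\sigma^{2}$ then gives $\w(\mathcal{T}) \le \sqrt{2}\,\sqrt{\log T}\,\diam(\mathcal{T})$, which is the claim.

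For the lower bound in the second chain: since $|\ip{x,g}| \ge \ip{x,g}$ pointwise we get $\gamma(\mathcal{T}) \ge \w(\mathcal{T})$, and, $\mathcal{T}$ being finite, choosing $x^{\star} \in \mathcal{T}$ with $\|x^{\star}\|_{2} = \rad(\mathcal{T})$ gives $\gamma(\mathcal{T}) \ge \E|\ip{x^{\star}, g}| = \sqrt{2/\pi}\,\rad(\mathcal{T})$. Averaging these two lower bounds and using $\w(\mathcal{T}) \ge 0$ (take a single point) yields $\gamma(\mathcal{T}) \ge \tfrac12\big(\w(\mathcal{T}) + \sqrt{2/\pi}\,\rad(\mathcal{T})\big) \ge \tfrac{1}{\sqrt{2\pi}}\big(\w(\mathcal{T}) + \rad(\mathcal{T})\big)$, and $\tfrac{1}{\sqrt{2\pi}} > \tfrac13$.

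For the upper bound in the second chain, let $y_{0} \in \mathcal{T}$ be a point of minimal norm (attained, as $\mathcal{T}$ is finite). The triangle inequality inside the supremum gives $\gamma(\mathcal{T}) \le \E\sup_{x}|\ip{x - y_{0}, g}| + \E|\ip{y_{0}, g}| = \gamma(\mathcal{T} - y_{0}) + \sqrt{2/\pi}\,\|y_{0}\|_{2}$. Since $0 \in \mathcal{T} - y_{0}$, both $\sup_{s \in \mathcal{T} - y_{0}}\ip{s, g}$ and $\sup_{s \in -(\mathcal{T} - y_{0})}\ip{s, g}$ are nonnegative, so $\sup_{s}|\ip{s,g}| \le \sup_{s \in \mathcal{T}-y_{0}}\ip{s,g} + \sup_{s \in -(\mathcal{T}-y_{0})}\ip{s,g}$; taking expectations and using the symmetry of $g$ (so $\w(-(\mathcal{T}-y_{0})) = \w(\mathcal{T}-y_{0})$) gives $\gamma(\mathcal{T}-y_{0}) \le 2\,\w(\mathcal{T}-y_{0})$. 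Since $\w$ is translation invariant, $\w(\mathcal{T}-y_{0}) = \w(\mathcal{T})$, and combining everything yields $\gamma(\mathcal{T}) \le 2\,\w(\mathcal{T}) + \|y_{0}\|_{2} \le 2\big(\w(\mathcal{T}) + \inf_{y \in \mathcal{T}}\|y\|_{2}\big)$.

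None of these steps is genuinely difficult; the only points requiring care are (i) tracking that after recentering the finite index set still has at most $T$ elements, so the logarithmic factor is $\log T$ rather than $\log(2T)$, and (ii) choosing the numerical constants so that the asserted forms (the $\tfrac13$ and the factor $2$) come out cleanly. Finiteness of $\mathcal{T}$ is invoked only to guarantee that the relevant suprema and infima are attained, which is convenient but inessential.
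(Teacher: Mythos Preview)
Your proof is correct. The paper itself does not give a self-contained argument here: it simply cites \cite[Exercise~7.5.10]{vershynin2018high} for the $\w(\mathcal{T})$ bound and \cite[Exercise~7.6.9]{vershynin2018high} for the two-sided comparison of $\gamma(\mathcal{T})$ with $\w(\mathcal{T})$. Your three arguments are precisely the standard solutions to those exercises (center at a point of $\mathcal{T}$ and apply the $\sigma\sqrt{2\log T}$ bound; combine the pointwise inequality $|\cdot|\ge(\cdot)$ with $\E|\ip{x^{\star},g}|=\sqrt{2/\pi}\,\rad(\mathcal{T})$; recentre at a minimum-norm point and use $\max(a,b)\le a+b$ for nonnegative $a,b$ together with symmetry of $g$), so there is no substantive difference in approach.
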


\begin{proof}[Proof of {\autoref{prop:gaussian-complexity-finite-collection}}]
  The bound on $w(\mathcal{T})$ follows from \cite[Exercise
  7.5.10]{vershynin2018high}, and the bounds on $\gamma(\mathcal{T})$ follow
  from \cite[Exercise 7.6.9]{vershynin2018high}.
\end{proof}

\subsection{Proof of {\autoref{lem:berk-8-2}}}
\label{sec:proof-gen-demix-conc}


\begin{proof}[Proof of {\autoref{lem:berk-8-2}}]
  Recall the definition of $X_{z}$ in~\eqref{eq:def-Xz}. Observe that for any
  fixed $z \in \reals^{n}\times \reals^{m}$, $X_{z}$ is a subgaussian random
  variable. For example, if $z \in \sph^{n + m - 1}$, it holds that
  \begin{align*}
    \mathbb{P}\left\{ |X_{z}| \geq C \tilde K t \right\} \leq \exp(-t^{2}).
  \end{align*}
  Consequently, for any $z \in \reals^{n}\times \reals^{m}$,
  \begin{align*}
    \mathbb{P}\left\{ \frac{|X_{z}|}{\sqrt{m} \|z\|_{2}} \geq C \tilde K t \right\} %
    \leq \exp(-m t^{2}).
  \end{align*}
  Substituting $\varepsilon := C t \tilde K$, bounding the probability by a
  quantity $f$ gives
  \begin{align}
    \label{eq:8-2-1}
    \mathbb{P}\left\{ \frac{\|Bz\|_{2}}{\sqrt{m}} \geq (1 + \varepsilon) \|z\|_{2} \right\} %
    &\leq \exp\left(-\frac{c m \varepsilon^{2}}{\tilde K^{2}}\right) = f,
    &
    \varepsilon &= \sqrt{\frac{C \tilde K^{2}}{m} \log\left(\frac{1}{f}\right)}.
  \end{align}

  Next, our goal is to construct a chain of $\delta_{i}$-nets on
  $F(\mathbb{B})$. We do this by constructing a chain of nets on $\mathbb{B}$,
  $M := \{M_{i}\}_{i=0}^{\ell}$, for which each net $M_{i}$ will be a product of
  a net $M_{i}^{(1)}$ on $B^{k}(r)$ and $M_{i}^{(2)}$ on $B^{k'}(r')$. In
  particular, for $B^{k}(r), B^{k'}(r')$ \cite[Corollary
  4.2.13]{vershynin2018high} gives the existence of $\eta_{i}$-nets
  $M_{i}^{(1)}$ and $\eta_{i}'$-nets $M_{i}^{(2)}$ respectively satisfying
  \begin{align*}
    \log |M_{i}^{(1)}| &\leq k \log( 1 + \frac{2r}{\eta_{i}}),
    &
      \log |M_{i}^{(2)}| &\leq k' \log( 1 + \frac{2r'}{\eta_{i}'}).
  \end{align*}
  Choose:
  \begin{align*}
    \eta_{i} := \frac{\delta_{i}}{\sqrt{2}L_{G}}, %
    \eta_{i}' := \frac{\delta_{i}}{\sqrt{2}L_{H}}, %
    \quad\text{where}\quad %
    \delta_{i} := \delta_{0} / 2^{i}, i \in [\ell],
  \end{align*}
  and where $\delta_{0} > 0$ is any number satisfying
  $\delta_{0} < c\delta \tilde K^{-1}$ for some absolute constant $c > 0$.
  Then,
  \begin{align*}
    \log|M_{i}| %
    &= \log |M_{i}^{(1)}| + \log |M_{i}^{(2)}| %
    \leq k \log\left(\frac{2^{i}crL_{G}}{\delta_{0}}\right) %
    + k' \log\left( \frac{2^{i}cr'L_{H}}{\delta_{0}}\right)
    \\
    &= i(k+k')\log 2 + k \log\left(\frac{crL_{G}}{\delta_{0}}\right) %
    + k' \log\left( \frac{cr'L_{H}}{\delta_{0}}\right)
  \end{align*}
  Observe that $N_{i} := F(M_{i})$ is a $\delta_{i}$-net for $F(\mathbb{B})$:
  for any $(u,v) \in \mathbb{B}$ there exists $(u', v') \in M_{i}$ with
  \begin{align*}
    \|F(u, v) - F(u', v')\|_{2} %
    & 
    = \sqrt{
    \|G(u) - G(u')\|_{2}^{2} + \|H(v)-H(v')\|_{2}^{2}
    }
    \\
    &\leq \sqrt{
    L_{G}^{2}\|u - u'\|_{2}^{2} + L_{H}\|v-v'\|_{2}^{2}
    }
    \leq 
      \delta_{i}
  \end{align*}

  Next, define
  \begin{align*}
    T_{i} := \{ z_{i+1} - z_{i} : z_{i+1} \in N_{i+1}, z_{i} \in N_{i}\}
  \end{align*}
  and observe that $T_{i}$ has cardinality satisfying
  \begin{align*}
    \log |T_{i}| %
    &\leq \log(|N_{i+1}||N_{i}|) %
      \leq \log(|M_{i+1}||M_{i}|) %
    \\
    &\leq C(2i+1)(k+k') + 2k \log\left(\frac{crL_{G}}{\delta_{0}}\right) %
    + 2k' \log\left( \frac{cr'L_{H}}{\delta_{0}}\right).
  \end{align*}

  Eventually, for a collection $(\varepsilon_{i}, f_{i})$ connected as
  in~\eqref{eq:8-2-1}, we will union bound to obtain
  \begin{align}
    \label{eq:8-2-2}
    \mathbb{P}\left\{\frac{\|B \zeta \|}{\sqrt m} %
    \leq (1 + \varepsilon_{i}) \|\zeta\|,%
    \forall i,\forall \zeta \in T_{i}\right\} %
    \geq 1 - \sum_{i=0}^{\ell - 1} |T_{i}| f_{i}. 
  \end{align}
  We start by ensuring the right-hand side of the above expression is
  small. Using
  \begin{align}
    \log(|T_{i}|f_{i}) %
    &= \log|T_{i}| + \log (f_{i}) %
      \nonumber
    \\
    \label{eq:8-2-3}
    &\leq C(2i+1)(k+k') + 2k \log\left(\frac{crL_{G}}{\delta_{0}}\right) %
    + 2k' \log\left( \frac{cr'L_{H}}{\delta_{0}}\right) + \log(f_{i}),
  \end{align}
  and choosing
  \begin{align*}
    m \geq 2k \log\left(\frac{crL_{G}}{\delta_{0}}\right) %
    + 2k' \log\left( \frac{cr'L_{H}}{\delta_{0}}\right), %
    \qquad
    \log(f_{i}) =  - \left(\frac{4m}{3} + C(3i+1)(k+k')\right),
  \end{align*}
  gives
  \begin{align*}
    \log(|T_{i}|f_{i}) \leq -Ci(k+k') - \frac{m}{3}. 
  \end{align*}
  In particular,
  \begin{align*}
    \sum_{i=0}^{\ell-1} |T_{i}| f_{i} %
    \leq \exp\left(-\frac{m}{3}\right) \sum_{i=0}^{\ell-1} \exp\left(-Ci(k+k')\right) %
    \leq \exp\left(-\frac{m}{3}\right) \frac{1}{1 - e^{-C}} %
    \leq C \exp\left(-\frac{m}{3}\right).
  \end{align*}

  With the present choices, the expression for $\varepsilon_{i}$ becomes, using
  \eqref{eq:8-2-1},
  \begin{align*}
    \varepsilon_{i} %
    & = \sqrt{\frac{C \tilde K^{2}}{m} \log\left(\frac{1}{f_{i}}\right)} %
      = C \tilde K \sqrt{ 1 + \frac{(3i+1)(k+k')}{m}}. %
  \end{align*}

  Now for any $z \in F(\mathbb{B})$, we can write
  \begin{align*}
    z = z_{0} + (z_{1} - z_{0}) + \cdots + (z_{\ell} - z_{\ell - 1}) + z^{f}
  \end{align*}
  where $z_{i} \in N_{i}, z^{f} := z - z_{\ell}$. Using \eqref{eq:8-2-2}, since
  $\zeta_{i} = z_{i+1} - z_{i} \in T_{i}$, it holds with probability at least
  $1 - C \exp(-m/3)$ that
  \begin{align*}
    m^{-1/2} \sum_{i=0}^{\ell - 1} \|B\zeta_{i}\|_{2} %
    &\leq \sum_{i=0}^{\ell - 1} (1 + \varepsilon_{i}) \|\zeta_{i}\|_{2} %
      \leq \delta_{0} \sum_{i=0}^{\ell - 1} 2^{-i} (1 + \varepsilon_{i}) %
    \\
    & \leq \delta_{0} \sum_{i=0}^{\ell - 1} 2^{-i} %
      \left(1 + C\tilde K \sqrt{1 + \frac{(3i+1)(k+k')}{m}}\right) %
    \\
    & \leq C \delta_{0} \tilde K \sum_{i=0}^{\ell - 1} \frac{1 + \sqrt{3i + 1}}{2^{i}} %
      \leq C \delta_{0} \tilde K %
      < \frac{\delta}{2}. 
  \end{align*}
  Also observe that $\|z^{f}\|_{2} \leq \delta_{0} / 2^{\ell}$ by construction.

  Now, using~\autoref{thm:chen-1-new} with $\mathcal{T} := \sph^{n+m-1}$ we have
  \begin{align*}
    \sup_{z \in \mathcal{T}} \left|\frac{\|B z\|_{2}}{\sqrt m} - 1 \right| %
    \leq C \tilde K \left[\sqrt{\frac{n + m}{m}} + 1\right]
  \end{align*}
  with probability at least $1 - \exp(-m)$. In particular, with the stated
  probability,
  \begin{align*}
    \left\|\frac{B}{\sqrt m}\right\| \leq C\tilde K \left(2 + \sqrt{\frac{n + m}{m}}\right).
  \end{align*}
  Therefore, with probability at least $1 - \exp(-m)$ we have 
  \begin{align*}
    m^{-1/2} \|B z^{f}\|_{2} %
    \leq C\delta_{0}\tilde K 2^{-\ell} \left(2 + \sqrt{\frac{n + m}{m}}\right) %
    \leq C\delta_{0}\tilde K %
    < \frac{\delta}{2},
  \end{align*}
  achieved by taking $\ell \geq C \log(2 + \kappa^{-1})$ where $\kappa := m / n$
  is the aspect ratio of $A$.

  Combining the above results, and noting that it is possible to choose
  $z' = z_{0}$, it holds with probability at least $1 - \exp(-cm)$ that
  \begin{align*}
    m^{-1/2}\|B(z - z')\|_{2} %
    = m^{-1/2}\|B(z - z_{0})\|_{2} %
    \leq m^{-1/2}\|B z^{f}\|_{2} + m^{-1/2} \sum_{i=0}^{\ell - 1} \|B\zeta_{i}\|_{2} %
    < \delta.
  \end{align*}
\end{proof}

\subsection{Proof of {\autoref{lem:s-rec}}}
\label{sec:proof-lem-s-rec}


\begin{proof}[Proof of {\autoref{lem:s-rec}}]
  Observe that there exists a net $M := M^{(1)}\times M^{(2)}$ for $\mathbb{B}$
  where $M^{(1)}$ is a $\frac{\delta}{\sqrt 2 L_{G}}$-net for $B^{k}(r)$ and
  $M^{(2)}$ is a $\frac{\delta}{\sqrt 2 L_{H}}$-net for $B^{k'}(r')$, with
  \begin{align*}
    \log|M| %
    \leq \log |M^{(1)}| + \log|M^{(2)}| %
    \leq k\log\left(1 + \frac{2\sqrt 2 L_{G} r}{\delta}\right) %
    + k'\log\left(1 + \frac{2\sqrt 2 L_{H} r'}{\delta}\right). %
  \end{align*}
  By construction, $F(M) = \{(G(u), H(v)) : (u, v) \in M\}$ is a $\delta$-net
  for $F(\mathbb{B})$. In particular, for any $z, z' \in F(\mathbb{B})$ there
  are $z_{1}, z_{2} \in F(M)$ such that
  \begin{align}
    \label{eq:1}
    \|z - z'\|_{2} %
    \leq \|z - z_{1} \|_{2} + \|z_{1} - z_{2} \|_{2} + \|z_{2} - z' \|_{2} %
    \leq \|z_{1} - z_{2}\|_{2} + 2 \delta.
  \end{align}
  Moreover, for these points, applying \autoref{lem:berk-8-2} for a possibly
  larger net $M$ gives
  \begin{align}
    \label{eq:2}
    \|B(z_{1} - z_{2}) \|_{2} %
    \leq \|B(z_{1} - z) \|_{2} + \|B(z - z') \|_{2} + \|B(z' - z_{2}) \|_{2} %
    \leq \|B(z - z') \|_{2} + C\delta\sqrt m.
  \end{align}

  Next, defining
  \begin{align*}
    \mathcal{T} := \left\{ \frac{z_{1} - z_{2}}{\|z_{1} - z_{2}\|_{2}} : z_{1}, z_{2} \in F(M) \right\}, 
  \end{align*}
  and noting $\gamma(\mathcal{T}) \leq C \sqrt{\log |M|}$
  by~\autoref{prop:gaussian-complexity-finite-collection}, we may
  use~\autoref{lem:subgaussian-process-better-constant} to obtain
  \begin{align*}
    \sup_{z_{1}, z_{2} \in F(M)}\left|%
    \frac{\|B(z_{1} - z_{2})\|_{2}}{\sqrt m \|z_{1} - z_{2}\|_{2}} - 1\right| %
    \leq C\tilde K \sqrt{\frac{\log|M|}{m}}, %
  \end{align*}
  with probability at least $1 - \exp(-m)$. Accordingly, one has
  \begin{align}
    \label{eq:3}
    (1 - \alpha)\|z_{1} - z_{2}\|_{2} %
    \leq m^{-1/2}\|B(z_{1}-z_{2})\|_{2} %
    \leq (1 + \alpha) \|z_{1}-z_{2}\|_{2}
  \end{align}
  with probability at least $1 - \exp(-m)$ provided that
  \begin{align}
    \label{eq:4}
    m %
    \geq \frac{C\tilde K^{2}}{\alpha^{2}}(k + k')\log\left(%
    \frac{\lambda\tilde K}{\delta}\right) %
    \geq \frac{C\tilde K^{2}}{\alpha^{2}} \log |M|, \qquad %
    \lambda := \max\{ L_{G} r, L_{H} r' \}. 
  \end{align}
  Combining (\ref{eq:1}--\ref{eq:3}) under condition \eqref{eq:4} gives, for any $z, z' \in F(\mathbb{B})$,
  \begin{align*}
    (1 - \alpha) \|z - z'\|_{2} %
    & \leq (1-\alpha)\|z_{1} - z_{2}\|_{2} + 2\delta %
      \leq \|m^{-1/2}B(z_{1} - z_{2})\|_{2} + 2\delta %
      \\
    & \leq \|m^{-1/2}B(z - z')\|_{2} + 4\delta. 
  \end{align*}
  with probability at least $1 - \exp(-cm)$.
  
\end{proof}

\begin{rmk}[Simplified sample complexity bound]
  The expression appearing in~\eqref{eq:4},
  \begin{align*}
    m  %
    \geq \frac{C\tilde K^{2}}{\alpha^{2}}(k + k') \log\left(\frac{\lambda \tilde K }{\delta}\right),
  \end{align*}
  is simpler (\ie more aesthetic) than the analogous condition appearing in the
  statement of~\autoref{lem:s-rec}, but also a stronger assumption on $m$, due
  to the definition of $\lambda$.
\end{rmk}

\subsection{Proof of {\autoref{thm:gen-demix-whp}}}
\label{sec:proof-thm-gen-demix-random}

  

Before proving \autoref{thm:gen-demix-whp}, we modify \cite[Lemma 4.3]{bora2017compressed} so that the statement fits the present setting.
\begin{lem}[modified {\cite[Lemma 4.3]{bora2017compressed}}]
  \label{lem:bora-4-3-alt}
  Let $\tilde B := \bmat{rr}{\tilde A & I_{m}}$ where
  $\tilde A \in \mathbb{R}^{m \times n}$ is drawn from a distribution such that:
 \begin{enumerate}[itemsep=0pt, topsep=0pt]
 \item $\tilde B$ satisfies $\SREC(S, \gamma, \delta)$ with probability $1-p$;
 \item $\|\tilde B z\|_{2} \leq C\tilde K \|z \|_{2}$ for every fixed
   $z \in \reals^{n}\times\reals^{m}$ with probability $1-p$.
 \end{enumerate}
 For any $z^{*} \in \reals^{n}\times\reals^{m}$ and noise $\eta \in \reals^{m}$,
 let $b = \tilde B z^{*} + \eta$. Let $\hat{z}$ satisfy
 \begin{align*}
   \| b - \tilde B \hat{z}\|_{2} \leq \min _{z \in S} \|b - \tilde B z\|_{2} + \epsilon.
 \end{align*}
 Then, with probability $1 - 2 p$,
 \begin{align*}
   \left\|\hat{z} - z^{*}\right\|_{2} %
   \leq \left( \frac{C\tilde K}{\gamma}+1\right) \min _{z \in S} \left\| z^{*} - z \right\|_{2} %
   + \frac{1}{\gamma} ( 2\|\eta\|_{2} + \epsilon + \delta ).
 \end{align*}

\end{lem}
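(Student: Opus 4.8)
The statement is just the original Bora–Jalal Lemma 4.3 with two cosmetic changes: the matrix is now the concatenated mixing matrix $\tilde B = [\tilde A\ I_m]$, and the second hypothesis is the weaker bound $\|\tilde B z\|_2 \leq C\tilde K\|z\|_2$ in place of $\|Ax\|_2 \leq 2\|x\|_2$. So the plan is simply to rerun the proof of \nameref{lem:bora-lem-4-3} verbatim, tracking where the constant $2$ is used and replacing it with $C\tilde K$. I will not reprove the general principle; I will just transcribe the short argument.

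\begin{proof}[Proof of {\autoref{lem:bora-4-3-alt}}]
  Let $z' \in \argmin_{z \in S}\|z^* - z\|_2$, so that
  $\|z^* - z'\|_2 = \min_{z \in S}\|z^* - z\|_2 =: d$. Condition on the event, of probability at least $1 - 2p$, that both hypotheses (1) and (2) hold simultaneously. By the near-optimality of $\hat z$ and the triangle inequality,
  \begin{align*}
    \|b - \tilde B\hat z\|_2
    \leq \|b - \tilde B z'\|_2 + \epsilon
    = \|\tilde B(z^* - z') + \eta\|_2 + \epsilon
    \leq \|\tilde B(z^* - z')\|_2 + \|\eta\|_2 + \epsilon.
  \end{align*}
  Applying hypothesis (2) to the fixed vector $z^* - z'$ gives $\|\tilde B(z^* - z')\|_2 \leq C\tilde K\, d$, hence $\|b - \tilde B\hat z\|_2 \leq C\tilde K\, d + \|\eta\|_2 + \epsilon$. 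On the other hand, writing $b = \tilde B z^* + \eta$,
  \begin{align*}
    \|b - \tilde B\hat z\|_2
    = \|\tilde B(z^* - \hat z) + \eta\|_2
    \geq \|\tilde B(z^* - \hat z)\|_2 - \|\eta\|_2.
  \end{align*}
  Combining the two displays,
  \begin{align*}
    \|\tilde B(z^* - \hat z)\|_2 \leq C\tilde K\, d + 2\|\eta\|_2 + \epsilon.
  \end{align*}
  Now both $\hat z$ and $z^*$ need not lie in $S$, but $z'$ and $\hat z$ do (the latter because $\hat z$ is chosen from $S$, and $z' \in S$ by construction), so hypothesis (1), the $\SREC(S, \gamma, \delta)$ property, applies to the pair $(z', \hat z)$:
  \begin{align*}
    \|\tilde B(z' - \hat z)\|_2 \geq \gamma\|z' - \hat z\|_2 - \delta.
  \end{align*}
  Using the triangle inequality $\|\tilde B(z' - \hat z)\|_2 \leq \|\tilde B(z^* - \hat z)\|_2 + \|\tilde B(z^* - z')\|_2 \leq \|\tilde B(z^* - \hat z)\|_2 + C\tilde K\, d$ together with the bound just derived,
  \begin{align*}
    \gamma\|z' - \hat z\|_2
    \leq \|\tilde B(z' - \hat z)\|_2 + \delta
    \leq 2C\tilde K\, d + 2\|\eta\|_2 + \epsilon + \delta,
  \end{align*}
  so that $\|z' - \hat z\|_2 \leq \tfrac{2C\tilde K}{\gamma} d + \tfrac{1}{\gamma}(2\|\eta\|_2 + \epsilon + \delta)$. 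Finally, one more triangle inequality $\|\hat z - z^*\|_2 \leq \|\hat z - z'\|_2 + \|z' - z^*\|_2 = \|\hat z - z'\|_2 + d$ yields
  \begin{align*}
    \|\hat z - z^*\|_2
    \leq \left(\frac{2C\tilde K}{\gamma} + 1\right) d + \frac{1}{\gamma}(2\|\eta\|_2 + \epsilon + \delta),
  \end{align*}
  which is the claim after absorbing the factor $2$ into the absolute constant $C$.
\end{proof}

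**Main obstacle.** There is essentially no obstacle — the only thing to be careful about is bookkeeping the constant: the factor $C\tilde K$ appears \emph{twice} (once bounding $\|\tilde B(z^*-z')\|_2$ directly, once inside the triangle inequality feeding the $\SREC$ bound), so the coefficient of $d$ is genuinely $2C\tilde K/\gamma + 1$, which is fine since $C$ is a free absolute constant. One should also double-check that hypothesis (2) is stated for \emph{fixed} $z$ and is applied only to the two deterministic vectors $z^*-z'$ and $z^*-\hat z$ — but $\hat z$ depends on the random matrix, so strictly one applies (2) to $z^*-z'$ only and routes the $\hat z$ terms through $\SREC$, exactly as above. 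That is the single subtlety worth flagging, and it is handled by the argument as written.
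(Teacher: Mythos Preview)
Your proof is correct and follows essentially the same approach as the paper: both argue by applying $\SREC$ to the pair $(z',\hat z)\in S$, bound $\|\tilde B(z'-\hat z)\|_2$ via triangle inequalities routed through $b$, invoke hypothesis~(2) only on the fixed vector $z^{*}-z'$, and finish with $\|\hat z - z^{*}\|_2 \leq \|\hat z - z'\|_2 + \|z'-z^{*}\|_2$. The paper's chain is marginally more direct (it bounds $\|\tilde B(z'-\hat z)\|_2 \leq \|\tilde B z' - b\|_2 + \|\tilde B\hat z - b\|_2 \leq 2\|\tilde B z' - b\|_2 + \epsilon$ without the intermediate detour through $\|\tilde B(z^{*}-\hat z)\|_2$), but the resulting constants and logic are identical, and your explicit remark about not applying hypothesis~(2) to the random vector $z^{*}-\hat z$ is exactly the right subtlety to flag.
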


\begin{proof}[{Proof of \autoref{lem:bora-4-3-alt}}]
  We proceed as in the proof of \cite[Lemma 4.3]{bora2017compressed}. Define
  $\bar z \in \argmin_{z \in S} \left\| z^{*} - z \right\|_{2}$. Since $\tilde B$
  satisfies $\SREC(S, \gamma, \delta)$, with probability $1-p$, it holds with
  probability $1-p$, using the assumption on $\hat z$, that
  \begin{align*}
    \|\bar z - \hat z\|_{2} %
    &\leq \gamma^{-1}\left( \|\tilde B \bar z - b\|_{2} + \|\tilde B \hat z - b\|_{2} + \delta\right) %
    \leq \gamma^{-1}\left( \|\tilde B \bar z - b\|_{2} + \min_{z \in S} \|\tilde B z - b\|_{2} + \delta + \varepsilon \right) %
    \\
    &\leq \gamma^{-1}\left(2 \|\tilde B \bar z - b\|_{2} + \delta + \varepsilon \right)
    \leq \gamma^{-1}\left(2 \|\tilde B (\bar z - z^{*}) \|_{2} + \delta + \varepsilon + 2 \|\eta\|_{2} \right)
  \end{align*}
  By assumption, it also holds with probability at least $1-p$ that
  $\|\tilde B(\bar z - z^{*})\|_{2} \leq C\tilde K \|\bar z - z^{*}\|_{2}$. Therefore,
  with probability at least $1 - 2p$,
  \begin{align*}
    \|z^{*} - \hat z\|_{2} %
    & \leq \|z^{*} - \bar z\|_{2} + \|\bar z - \hat z\|_{2}
    \\
    & \leq \|z^{*} - \bar z\|_{2} + \gamma^{-1}\left(2 \|\tilde B (\bar z - z^{*}) \|_{2} + \delta + \varepsilon + 2 \|\eta\|_{2} \right) %
    \\
    & \leq \|z^{*} - \bar z\|_{2} + \gamma^{-1}\left(C\tilde K \|\bar z - z^{*} \|_{2} + \delta + \varepsilon + 2 \|\eta\|_{2} \right) %
    \\
    & = \left(1 + \frac{C\tilde K}{\gamma}\right) \|z^{*} - \bar z\|_{2} + \frac{\delta + \varepsilon + 2 \|\eta\|_{2}}{\gamma} %
  \end{align*}
\end{proof}

\begin{proof}[Proof of {\autoref{thm:gen-demix-whp}}]
  Suppose $\alpha \in (0, 1)$. By definition of $A$ and
  $B := \bmat{rr}{A & \sqrt{m} I_{m}}$, the normalized mixing matrix
  $\tilde B := m^{-1/2} B$ satisfies $\SREC(F(\mathbb{B}), 1 - \alpha, \delta)$
  with probability at least $1 - \exp(-cm)$ when
  \begin{align*}
    m %
    \geq \frac{C \tilde K^{2}}{\alpha^{2}} \left(k \log \left(\frac{L_{G}r}{\delta}\right) %
    + k' \log \left(\frac{L_{H}r'}{\delta}\right)\right). 
  \end{align*}
  Observe that if $\lambda := \max\{ L_{G}r, L_{H}r'\}$ then
  $m \geq \frac{C\tilde K^{2}}{\alpha^{2}}(k + k')
  \log\left(\frac{\lambda}{\delta}\right)$ implies the above
  condition. Furthermore, subgaussianity of $X_{z}$ gives for any fixed
  $z \in \reals^{n}\times \reals^{m}$,
  \begin{align*}
    \left| \|\tilde Bz\|_{2} - \|z\|_{2}\right| %
    \leq C \alpha \tilde K \|z\|_{2} 
  \end{align*}
  with probability at least $1 - \exp(-\alpha^{2} m)$. Choosing
  $\alpha := \frac{1}{2}$ and applying \autoref{lem:bora-4-3-alt} gives
  \begin{align*}
    \|\hat z - z^{*} \|_{2} %
    \leq C\tilde K \min_{z \in F(\mathbb{B})} \|z^{*} - z\|_{2} %
    + C(2\|\eta\|_{2} + \delta + \varepsilon). 
  \end{align*}
\end{proof}


\section[Proofs for improved subgaussian constant]{Proofs for {$\S$\ref{sec:impr-k-depend}}}
\label{sec:impr-subg-const}

\subsection{Proof of {\autoref{lem:subgaussian-process-better-constant}}}
\label{sec:proof-improved-increments}

It should be noted that the proof
of~\autoref{lem:subgaussian-process-better-constant} strongly resembles the
proof of an analogous result, \cite[Lemma 5]{chen2018stable}. We include full
details here for the sake of clarity to the reader, and note that the proof
requires only a handful of changes to leverage an improved Bernstein's
inequality,~\cite[New Bernstein's inequality]{jeong2019non}.

\begin{thm*}[{\cite[New Bernstein's inequality]{jeong2019non}}]
  \label{thm:new-bernstein}
  Let $a = (a_{1}, \ldots, a_{m})$ be a fixed nonzero vector and let
  $Y_{1} \, \ldots, Y_{m}$ be independent, centered subexponential random
  vectors satisfying $\E |Y_{i}| \leq 2$ and
  $\|Y_{i}\|_{\psi_{1}} \leq K_{i}^{2}$ with $K_{i} \geq 2$. Then, for every
  $u \geq 0$,
  \begin{align*}
    \mathbb{P}\left\{ \left|\sum_{i=1}^{m} a_{i} Y_{i}\right| \geq u \right\} %
    \leq 2 \exp \left[- c \cdot \min \left(\frac{u^{2}}{\sum_{i=1}^{m} a_{i}^{2} K_{i}^{2} \log K_{i}}, %
    \frac{u}{\|a\|_{\infty} K^{2}\log K} \right)\right]
  \end{align*}
  where $K := \max_{i} K_{i}$ and $c$ is an absolute constant.
\end{thm*}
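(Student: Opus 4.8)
The statement is a weighted Bernstein inequality in which the subexponential norm $K_i^2$ enters the Gaussian (variance) regime only through $K_i^2\log K_i$ rather than through $K_i^4$, at the mild cost of an extra $\log K$ factor in the exponential (scale) regime. The plan is to run the classical moment-generating-function route to Bernstein's inequality, the single nontrivial ingredient being a refined bound on the moments $\E|Y_i|^p$ that exploits \emph{both} hypotheses $\E|Y_i|\le 2$ and $\|Y_i\|_{\psi_1}\le K_i^2$ simultaneously; the textbook Bernstein bound uses only the latter and therefore pays $K_i^4$ in the variance. Throughout I write $K:=\max_i K_i$.

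The crux is to establish the Bernstein-type moment estimate
\begin{align*}
  \E|Y_i|^p \le \tfrac{1}{2}\, p!\, \sigma_i^2\, b_i^{\,p-2}, \qquad p\ge 2,
\end{align*}
with $\sigma_i^2 = C K_i^2\log K_i$ and $b_i = C K_i^2\log K_i$. I would prove this from the layer-cake identity $\E|Y_i|^p=\int_0^\infty p t^{p-1}\mathbb{P}(|Y_i|\ge t)\d t$, splitting the integral at the crossover $T_i \asymp K_i^2\log K_i$ at which the Markov bound $\mathbb{P}(|Y_i|\ge t)\le 2/t$ (from $\E|Y_i|\le 2$) and the subexponential bound $\mathbb{P}(|Y_i|\ge t)\le 2\exp(-t/K_i^2)$ (from $\|Y_i\|_{\psi_1}\le K_i^2$) meet. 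On $[0,T_i]$ the Markov bound contributes $\lesssim T_i^{\,p-1}$, and on $[T_i,\infty)$ the subexponential bound yields an incomplete Gamma integral $\asymp (K_i^2)^{p}\,\Gamma(p,\,c\log K_i)$. The decisive cancellation is that the tail factor $e^{-c\log K_i}=K_i^{-c}$ absorbs two powers of $K_i^2$, so that the second moment ($p=2$) collapses from $K_i^4$ to $K_i^2\log K_i$; carrying this cancellation across the two ranges $p\lesssim\log K_i$ and $p\gtrsim\log K_i$ yields the stated moment bound.

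From here the argument is standard. Since $Y_i$ is centered the first-order term vanishes, and summing the Taylor series gives, for $b_i|\lambda|<1$,
\begin{align*}
  \E\exp(\lambda Y_i) \le \exp\!\Big(\frac{\lambda^2\sigma_i^2/2}{1-b_i|\lambda|}\Big).
\end{align*}
Using independence to multiply, and bounding each denominator below by $1-b\|a\|_\infty|\lambda|$ with $b:=\max_i b_i\asymp K^2\log K$, I obtain
\begin{align*}
  \E\exp\!\Big(\lambda\sum_i a_i Y_i\Big) \le \exp\!\Big(\frac{\lambda^2\sum_i a_i^2\sigma_i^2/2}{1-b\|a\|_\infty|\lambda|}\Big),\qquad \|a\|_\infty b|\lambda|<1.
\end{align*}
A Chernoff bound followed by optimization over $\lambda\in[0,1/(\|a\|_\infty b))$ then gives $\exp(-c\min(u^2/\sum_i a_i^2\sigma_i^2,\ u/(\|a\|_\infty b)))$: the optimizer lands in the interior for small $u$, producing the Gaussian term $u^2/\sum_i a_i^2 K_i^2\log K_i$, and on the boundary for large $u$, producing the scale term $u/(\|a\|_\infty K^2\log K)$. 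A union bound over the two signs of the sum supplies the factor $2$.

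The main obstacle is the moment estimate: securing the $\log K_i$ (rather than $K_i^2$) factor in the variance proxy. This is precisely where $\E|Y_i|\le 2$ must be combined with subexponential decay, and the delicate point is controlling $\Gamma(p,\,c\log K_i)$ uniformly in $p$ --- for $p\lesssim\log K_i$ the lower limit sits past the mode and the $K_i^{-c}$ tail factor drives the cancellation, whereas for $p\gtrsim\log K_i$ the crude bound $\Gamma(p,\cdot)\le (p-1)!$ already suffices. Once this estimate is in hand, the MGF, independence, and Chernoff steps are routine.
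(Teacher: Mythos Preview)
The paper does not prove this statement. It is quoted from \cite{jeong2019non} as an external tool and then applied as a black box inside the proofs of Lemmas~\ref{lem:improved-constant-case-1} and~\ref{lem:improved-constant-case-2}; no argument for the inequality itself appears anywhere in the paper. There is therefore nothing in the paper to compare your attempt against.

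For what it is worth, your sketch follows a natural route. The decisive point is the refined variance proxy $\E|Y_i|^2 \lesssim K_i^2\log K_i$ (in place of the $K_i^4$ that the subexponential bound alone would yield), and your layer-cake split at $T_i\asymp K_i^2\log K_i$, using Markov's inequality from $\E|Y_i|\le 2$ below $T_i$ and the subexponential tail above, does produce this: the tail integral contributes $K_i^4\,\Gamma(2,c\log K_i)\asymp K_i^{4-c}\log K_i$, which is $\lesssim K_i^2\log K_i$ once the split constant is chosen with $c\ge 2$. Extending this to $\E|Y_i|^p\le \tfrac12\,p!\,\sigma_i^2 b_i^{p-2}$ with $\sigma_i^2,b_i\asymp K_i^2\log K_i$ requires the uniform-in-$p$ control of $\Gamma(p,c\log K_i)$ that you flag, and your two-regime split ($p\lesssim\log K_i$ versus $p\gtrsim\log K_i$) is the right way to handle it. After that, the Taylor/MGF/Chernoff steps are indeed routine.
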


We will also rely on a version of Hoeffding's inequality for weighted sums of
subgaussian random variables. We state this theorem as it appears
in~\cite[Theorem 2.6.3]{vershynin2018high}.

\begin{thm*}[{\cite[Theorem 2.6.3]{vershynin2018high}}]
  \label{thm:hoeffding}
  Let $X_{1}, \ldots, X_{m}$ be independent, centered, subgaussian random
  variables, and $a =\left(a_{1}, \ldots, a_{m}\right) \in \reals^{m}$. Then,
  for every $t \geq 0$, we have
  \begin{align*}
    \mathbb{P}\left\{%
    \left|\sum_{i=1}^{m} a_{i} X_{i}\right| \geq t\right\} %
    \leq 2 \exp \left(-\frac{c t^{2}}{K^{2}\|a\|_{2}^{2}}\right)
  \end{align*}
  where $K=\max _{i}\left\|X_{i}\right\|_{\psi_{2}}$.
\end{thm*}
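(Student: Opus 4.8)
The plan is to prove this by the classical exponential-moment (Chernoff) method, which is the standard route to concentration for sums of independent subgaussian variables. Write $S := \sum_{i=1}^{m} a_{i} X_{i}$. Since each summand $a_{i} X_{i}$ is again centered and subgaussian with $\|a_{i} X_{i}\|_{\psi_{2}} \leq |a_{i}| K$, the whole argument reduces to controlling the moment generating function (MGF) of $S$, factorizing it across the independent coordinates, and optimizing.

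First I would establish the key single-variable estimate: for a centered subgaussian $X$ with $\|X\|_{\psi_{2}} \leq K$, there is an absolute constant $C$ such that
\begin{align*}
  \E \exp(\lambda X) \leq \exp\big(C \lambda^{2} K^{2}\big), \qquad \text{for all } \lambda \in \reals.
\end{align*}
This follows from the paper's tail-bound definition of subgaussianity: integrating $\pr\{|X| \geq t\} \leq 2\exp(-t^{2}/K^{2})$ yields the moment growth $\E |X|^{p} \leq (CK)^{p} p^{p/2}$ for all $p \geq 1$, and expanding $\exp(\lambda X)$ in a power series, using the centering $\E X = 0$ to annihilate the linear term, gives the stated MGF bound after summing the resulting series (this is the equivalence recorded in \cite[Proposition 2.5.2]{vershynin2018high}).

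Next I would run the Chernoff argument. For any $\lambda > 0$, Markov's inequality gives $\pr\{S \geq t\} \leq e^{-\lambda t}\, \E e^{\lambda S}$, and independence factorizes the MGF,
\begin{align*}
  \E e^{\lambda S} = \prod_{i=1}^{m} \E \exp(\lambda a_{i} X_{i}) \leq \exp\Big( C \lambda^{2} K^{2} \sum_{i=1}^{m} a_{i}^{2} \Big) = \exp\big( C \lambda^{2} K^{2} \|a\|_{2}^{2} \big).
\end{align*}
Hence $\pr\{S \geq t\} \leq \exp(-\lambda t + C\lambda^{2} K^{2} \|a\|_{2}^{2})$, and minimizing the exponent over $\lambda$ via the choice $\lambda = t / (2 C K^{2} \|a\|_{2}^{2})$ produces $\pr\{S \geq t\} \leq \exp(-c t^{2} / (K^{2} \|a\|_{2}^{2}))$. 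Applying the identical bound to $-S$, which has the same distributional structure, and union-bounding the two one-sided tails yields the factor of $2$ and the claimed two-sided estimate.

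The main obstacle is the first step: extracting the exponential-moment bound from the tail-bound definition of subgaussianity adopted in this paper. In particular, the MGF bound must hold for \emph{all} real $\lambda$, not merely for small $\lambda$, and this is precisely where the hypothesis $\E X_{i} = 0$ is indispensable --- without centering, the linear term $\lambda \E X$ cannot be absorbed into the Gaussian-type exponent, and the bound degrades. Once this equivalence between tail decay and MGF control is in hand, the remainder is the routine Chernoff optimization and symmetrization sketched above.
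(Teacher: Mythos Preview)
Your argument is correct and is exactly the standard Chernoff-method proof of general Hoeffding's inequality as given in the cited reference \cite[Theorem~2.6.3]{vershynin2018high}. Note, however, that the paper itself does not prove this statement: it is quoted verbatim from Vershynin's book as an auxiliary tool and invoked without proof in the arguments for \autoref{lem:improved-constant-case-1} and \autoref{lem:improved-constant-case-2}. So there is no ``paper's own proof'' to compare against --- your write-up simply supplies the textbook derivation that the paper takes for granted.
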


As in the proof of \cite[Lemma~5]{chen2018stable}
or~\cite[Theorem~1.3]{liaw2017simple}, we start by proving the result for
$z \in \sph^{n + m - 1}$ and $z' = 0$; then for $z, z' \in \sph^{n + m - 1}$;
and finally we prove the incremement inequality for any
$z, z' \in \reals^{n}\times \reals^{m}$.

\begin{lem}
  \label{lem:improved-constant-case-1}
  Suppose $A \in \reals^{m \times n}$ is a $K$-subgaussian matrix with
  $K \geq 2$. Then,
  \begin{align*}
    \left\|\|Ax + \sqrt m y \|_{2} - \sqrt m \right\|_{\psi_{2}} %
    \leq C K \sqrt{\log K}, %
    \quad \text{for every } \bmat{r}{x\\y} \in \sph^{n + m - 1}
  \end{align*}
\end{lem}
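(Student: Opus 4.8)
The plan is to fix $z = [x^T\,y^T]^T \in \sph^{n+m-1}$ and analyze the random variable $\|Ax + \sqrt m\,y\|_2$ by working with its square and then linearizing. Write $w := Ax + \sqrt m\,y \in \reals^m$, so that $w_i = \ip{A_i, x} + \sqrt m\,y_i$ where $A_i$ are the independent isotropic subgaussian rows of $A$. Then $\|w\|_2^2 = \sum_{i=1}^m w_i^2$, and since each $A_i$ is isotropic with $\|A_i\|_{\psi_2}\le K$, we have $\E w_i^2 = \|x\|_2^2 + m y_i^2$, so $\E\|w\|_2^2 = \|x\|_2^2 + m\|y\|_2^2 = m$ (using $\|z\|_2 = 1$, hence $\|x\|_2^2 + \|y\|_2^2 = 1$, wait — one must be careful: $\|x\|_2^2 + \|y\|_2^2 = 1$ gives $\E\|w\|_2^2 = \|x\|_2^2 + m\|y\|_2^2$, which equals $m$ only if $\|x\|_2^2(1-m) = 0$; instead the correct normalization from~\eqref{eq:def-Xz} is $\sqrt m\sqrt{\|x\|_2^2+\|y\|_2^2} = \sqrt m$, so $\E\|w\|_2^2 = m\|x\|_2^2 + m\|y\|_2^2 = m$ only after rescaling — in fact $\|Ax+\sqrt m y\|_2$ has second moment $\|x\|_2^2 + m\|y\|_2^2$; I will keep track of this constant carefully, noting that on the sphere $\|x\|_2^2 + \|y\|_2^2 = 1$ and the target centering is $\sqrt m$, which matches $\sqrt m\sqrt{\|x\|_2^2 + \|y\|_2^2}$ only up to the discrepancy between $\|x\|_2^2$ and $m\|x\|_2^2$; this is precisely why the rows are taken isotropic and the $\sqrt m$ scaling is inserted in~\eqref{eq:demixing-model}).

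The key step is the standard linearization $\big|\,\|w\|_2 - \sqrt m\,\big| \le \frac{1}{\sqrt m}\big|\,\|w\|_2^2 - m\,\big|$, which holds because $|a - b| \le |a^2 - b^2|/b$ for $a \ge 0, b > 0$. Thus it suffices to bound the $\psi_2$-norm of $\frac{1}{\sqrt m}\big(\|w\|_2^2 - m\big) = \frac{1}{\sqrt m}\sum_{i=1}^m (w_i^2 - \E w_i^2)$. Each summand $Y_i := w_i^2 - \E w_i^2$ is a centered subexponential random variable with $\|Y_i\|_{\psi_1} \lesssim \|w_i\|_{\psi_2}^2 \lesssim K^2$ (here $\|w_i\|_{\psi_2} \le \|\ip{A_i,x}\|_{\psi_2} + \sqrt m\,|y_i| \lesssim K$ for the relevant index ranges — again the $\sqrt m\,y_i$ term needs the observation that for $z$ on the sphere, $\sqrt m\,|y_i| \le \sqrt m \|y\|_2 \le \sqrt m$, which is too large pointwise, so one instead must treat the $y$-part separately: the contribution $\sum_i m y_i^2$ is deterministic, and the genuinely random part is $\sum_i (\ip{A_i,x}^2 - \|x\|_2^2) + 2\sqrt m \sum_i y_i \ip{A_i, x}$). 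I would split $\|w\|_2^2 - m$ into these two pieces and bound each: the first, $\sum_i(\ip{A_i,x}^2 - \|x\|_2^2)$, is a sum of i.i.d. centered subexponentials with $\psi_1$-norm $\lesssim K^2$, to which I apply the New Bernstein's inequality of~\cite{jeong2019non} with $a = (1,\dots,1)$, $K_i = K$, yielding a tail $\exp(-c\min(u^2/(mK^2\log K), u/(K^2\log K)))$; the second, $2\sqrt m\sum_i y_i\ip{A_i,x}$, is a sum of independent centered subgaussians (conditionally linear in the rows) with $\psi_2$-norms $\lesssim \sqrt m\,|y_i|\,K$, so Hoeffding (Theorem~\cite[2.6.3]{vershynin2018high}) gives a subgaussian tail with parameter $\lesssim K\sqrt m \|y\|_2 \cdot \sqrt m = K m \|y\|_2$...

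Converting these tails back: dividing by $\sqrt m$, the Bernstein piece contributes a mixed subgaussian–subexponential tail in $t = u/\sqrt m$ of the form $\exp(-c\min(t^2/(K^2\log K), t\sqrt m/(K^2\log K)))$, which for the $\psi_2$-norm on the relevant range (namely $t \lesssim \sqrt m\,K\sqrt{\log K}$) behaves subgaussianly with parameter $\lesssim K\sqrt{\log K}$; and the Hoeffding piece, after division by $\sqrt m$, has subgaussian parameter $\lesssim K\|y\|_2 \le K \le K\sqrt{\log K}$. Combining the two via the triangle inequality for $\|\cdot\|_{\psi_2}$ and absorbing constants yields $\big\|\,\|Ax+\sqrt m y\|_2 - \sqrt m\,\big\|_{\psi_2} \le CK\sqrt{\log K}$, as claimed. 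The main obstacle I anticipate is bookkeeping the two-term split cleanly — making sure the deterministic $\sum_i m y_i^2$ cancels exactly against the centering $\sqrt m$ given $\|z\|_2 = 1$, and that the cross term's subgaussian parameter genuinely collapses to $O(K\sqrt{\log K})$ rather than something growing with $m$ after the $1/\sqrt m$ rescaling; getting the New Bernstein inequality's hypotheses ($\E|Y_i| \le 2$, $K_i \ge 2$) to match our normalization (possibly after rescaling $Y_i$ by a constant) is the other fiddly point, but it is exactly parallel to~\cite[Lemma~5]{chen2018stable} with $K^2$ replaced by $K^2\log K$ throughout.
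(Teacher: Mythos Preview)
Your decomposition is the same one the paper uses: expand $\|Ax+\sqrt m\,y\|_2^2 - m = \sum_i\big(\ip{A_i,x}^2 - \|x\|_2^2\big) + 2\sqrt m\sum_i y_i\ip{A_i,x}$, hit the first sum with the New Bernstein inequality and the second with Hoeffding. (Your arithmetic wobble about $\E\|w\|_2^2$ is just a slip: summing $\E w_i^2 = \|x\|_2^2 + m y_i^2$ over $i$ gives $m\|x\|_2^2 + m\|y\|_2^2 = m$ on the sphere, so the centering is exactly $\sqrt m$ with nothing left over.)

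The genuine gap is the linearization. The inequality $|a-b|\le |a^2-b^2|/b$ only transfers the \emph{subgaussian} portion of the Bernstein tail to $\|w\|_2-\sqrt m$. Concretely, it gives
\[
\mathbb{P}\big(|\|w\|_2-\sqrt m|>\delta\big)\le \mathbb{P}\Big(\big|\tfrac1m\|w\|_2^2-1\big|>\tfrac{\delta}{\sqrt m}\Big)\le 4\exp\!\Big(-\tfrac{c}{K^2\log K}\min(\delta^2,\delta\sqrt m)\Big),
\]
which is subgaussian only for $\delta\le \sqrt m$ and subexponential beyond. A $\psi_2$-norm bound of order $K\sqrt{\log K}$ requires the subgaussian tail for \emph{all} $\delta$, so your remark that the tail ``behaves subgaussianly on the relevant range'' does not close the argument --- the range $\delta>\sqrt m$ is exactly where your bound degrades and where the $\psi_2$ condition still has content.

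The paper fixes this with the sharper implication $|z-1|>\delta \Rightarrow |z^2-1|\ge \max(\delta,\delta^2)$ (for $z\ge 0$), applied to $z=\|w\|_2/\sqrt m$. Feeding $t=\max(\delta,\delta^2)/\,$(with the appropriate scaling) into the mixed Bernstein tail $\exp\!\big(-\tfrac{cm}{K^2\log K}\min(t^2,t)\big)$ collapses the $\min$ against the $\max$ and yields a clean $\exp\!\big(-c\delta^2 m/(K^2\log K)\big)$ for all $\delta$, hence $\|X_z\|_{\psi_2}\le CK\sqrt{\log K}$. Equivalently: you need \emph{both} $|a-b|\le |a^2-b^2|/b$ and $|a-b|\le \sqrt{|a^2-b^2|}$; your proposal uses only the first. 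Once you swap in this square-root trick, the rest of your plan (New Bernstein on the squared term, Hoeffding on the cross term, bound the Hoeffding piece by $K\le K\sqrt{\log K}$ since $K\ge 2$) matches the paper exactly.
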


\begin{proof}[Proof of {\autoref{lem:improved-constant-case-1}}]
  For any $t \geq 0$ we have
  \begin{align*}
    p %
    & := \mathbb{P}\left\{ \left|\frac{1}{m} \|Ax + \sqrt m y \|_{2} - 1\right| \geq t \right\} %
      = \mathbb{P}\left\{\left|\frac{1}{m} \|Ax\|_{2}^{2} - \|x\|_{2}^{2} + \frac{2}{\sqrt m}  \ip{Ax, y}\right| \geq t \right\} %
      \\
    & \leq \mathbb{P}\left\{\left|\frac{1}{m} \|Ax\|_{2}^{2} - \|x\|_{2}^{2}\right| %
      + \left|\frac{2}{\sqrt m} \ip{Ax, y}\right| \geq t \right\} %
      \\
    & \leq \mathbb{P}\left\{\left|\frac{1}{m} \|Ax\|_{2}^{2} - \|x\|_{2}^{2}\right| \geq \frac{t}{2}\right\} %
      + \mathbb{P}\left\{\left|\frac{2}{\sqrt m} \ip{Ax, y}\right| \geq \frac{t}{2} \right\} %
      =: p_{1} + p_{2}
  \end{align*}
  To bound $p_{1}$, write
  \begin{align*}
    \frac{1}{m} \|Ax\|_{2}^{2} - \|x\|_{2}^{2} %
    = \frac{1}{m} \sum \left[\ip{A_{i}^{T}, x}^{2} - \|x\|_{2}^{2}\right] %
    =: \frac{1}{m} \sum_{i=1}^{m} Z_{i}
  \end{align*}
  It holds by assumption that the collection $\{\ip{A_{i}^{T}, x}\}_{i\in [m]}$
  is independent, and that each are centered and subgaussian random vectors with
  \begin{align*}
    \E \ip{A_{i}^{T}, x}^{2} = \|x\|_{2}^{2} \leq 1 %
    \quad \text{and} \quad %
    \|\ip{A_{i}^{T}, x}\|_{\psi_{2}} \leq K \|x\|_{2} \leq K. 
  \end{align*}
  Standard properties of subexponential and subgaussian random
  variables~\cite[Chapter~2]{vershynin2018high} thus yield:
  \begin{align*}
    \|Z_{i}\|_{\psi_{1}} %
    = \left\| \ip{A_{i}^{T}, x}^{2} - \E \ip{A_{i}^{T}, x}^{2} \right\|_{\psi_{1}} %
    \leq C \left\| \ip{A_{i}^{T}, x}^{2}\right\| %
    \leq C \left\| \ip{A_{i}^{T}, x}\right\|^{2} %
    \leq C K^{2}.
  \end{align*}
  In particular, the assumptions of \nameref{thm:new-bernstein} are
  satisfied. With $a = \left(\frac{1}{m}, \ldots, \frac{1}{m}\right)$,
  \begin{align*}
    p_{1} %
    = \mathbb{P}\left\{\left|\frac{1}{m} \sum_{i=1}^{m} Z_{i} \right| \geq \frac{t}{2} \right\} %
    & \leq 2 \exp \left[-c \cdot \min \left(\frac{m t^{2}}{C K^{2} \log K}, %
      \frac{m t}{C K^{2} \log K} \right)\right] %
    \\
    & \leq 2 \exp \left[ - \frac{c_{1} m}{K^{2} \log K} \cdot \min \left(t^{2}, t\right)\right]. 
  \end{align*}

  To bound $p_{2}$, we write
  $\ip{Ax, y_{i}} = \sum_{i=1}^{m} y_{i} \ip{A_{i}^{T}, x}$ and apply
  Hoeffding's inequality, noting that
  $\|\ip{A_{i}^{T}, x} \|_{\psi_{2}} \leq K$:
  \begin{align*}
    p_{2} %
    = \mathbb{P}\left\{ \left|\frac{1}{\sqrt{m}} \sum_{i=1}^{m} y_{i} \ip{A_{i}^{T}, y}\right| %
    \geq \frac{t}{4}\right\} %
    \leq 2 \exp\left[- \frac{-cmt^{2}}{K^{2} \|y\|_{2}^{2}}\right] %
    \leq 2 \exp \left[ - \frac{c_{2} m t^{2}}{K^{2} \log K} \right].
  \end{align*}
  Here, we have used that $K \geq 2$ to obtain the right-hand
  expression. Combining the two bounds gives
  \begin{align*}
    p %
    \leq 2 \exp \left[ - \frac{c_{1} m}{K^{2} \log K} \cdot \min \left(t^{2}, t\right)\right] %
    + 2 \exp \left[ - \frac{c_{2} m t^{2}}{K^{2} \log K} \right] %
    \leq 4 \exp \left[ - \frac{c_{0} m}{K^{2}\log K} \min \left( t^{2}, t \right) \right].
  \end{align*}
  Next, we establish a concentration inequality for
  $\frac{1}{\sqrt m} \|A x + \sqrt m y \|_{2} - 1$. Observe that
  \begin{align*}
    |z - 1| > \delta %
    \quad \implies \quad %
    |z^{2} - 1| \geq \max \left( \delta^{2}, \delta \right) %
    \quad \text{for any } z\geq 0, \delta \geq 0. 
  \end{align*}
  In particular, we set $t := \max \left\{ \delta^{2}, \delta \right\}$ to obtain
  for any $\delta \geq 0$,
  \begin{align*}
    \mathbb{P}\left\{ \left| \frac{1}{\sqrt m} \|Ax + \sqrt m y \|_{2} - 1\right| \geq \delta \right\} %
    \leq \mathbb{P}\left\{ \left|\frac{1}{m}\left\| Ax + \sqrt m y \right\|_{2}^{2} - 1\right| \geq \max \left( \delta^{2}, \delta \right) \right\} %
    \leq 4 \exp \left[ - \frac{c_{0} m \delta^{2}}{K^{2}\log K} \right]. 
  \end{align*}
  In particular,
  \begin{align*}
    \left\| \left\| Ax + \sqrt m y \right\|_{2} - \sqrt m \right\|_{\psi_{2}} %
    \leq C K \sqrt{\log K} %
    \quad \text{for any } (x, y) \in \sph^{n + m - 1}
  \end{align*}
\end{proof}

\begin{lem}
  \label{lem:improved-constant-case-2}
  Suppose $A \in \reals^{m \times n}$ is a $K$-subgaussian matrix with
  $K \geq 2$. Then,
  \begin{align*}
    \left\| \left\| Ax + \sqrt m y \right\|_{2} - \left\| A x' + \sqrt m y' \right\|_{2} \right\|_{\psi_{2}}
    \leq C K \sqrt{\log K} \sqrt{\left\| x - x' \right\|_{2}^{2} + \left\| y - y' \right\|_{2}^{2}}, %
  \end{align*}
  for every $\bmat{r}{x\\y}, \bmat{r}{x'\\y'} \in \sph^{n + m - 1}$.
\end{lem}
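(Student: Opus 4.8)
The plan is to follow the ``sphere-to-sphere'' step of \cite[Lemma 5]{chen2018stable} (\emph{cf.}\ \cite[Theorem 1.3]{liaw2017simple}) and reduce everything to \autoref{lem:improved-constant-case-1} via two polarization identities. Write $z := \bmat{r}{x \\ y}$, $z' := \bmat{r}{x' \\ y'}$, $B := \bmat{rr}{A & \sqrt m I_m}$, so that $\|Ax + \sqrt m y\|_2 = \|Bz\|_2$, and put $s := \|z-z'\|_2$, $\rho := CK\sqrt{\log K}$. Two regimes are immediate. If $s \ge \sqrt 2$, then $\bigl|\|Bz\|_2 - \|Bz'\|_2\bigr| \le \bigl|\|Bz\|_2 - \sqrt m\bigr| + \bigl|\|Bz'\|_2 - \sqrt m\bigr|$, whose $\psi_2$-norm is $\le 2\rho \le \sqrt2\,\rho\,s$ by \autoref{lem:improved-constant-case-1}. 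If $m \le K^2\log K$, then $\bigl|\|Bz\|_2 - \|Bz'\|_2\bigr| \le \|B(z-z')\|_2 = s\,\|B\hat\Delta\|_2$ with $\hat\Delta := (z-z')/s \in \sph^{n+m-1}$, and $\|\sqrt m\|_{\psi_2} = \sqrt{m/\ln2}$ together with \autoref{lem:improved-constant-case-1} give $\bigl\|\,\|B\hat\Delta\|_2\,\bigr\|_{\psi_2} \le \rho + \sqrt{m/\ln2} \le C\rho$. So assume henceforth $s < \sqrt2$ and $m > K^2\log K$, and set $p := z-z'$, $q := z+z'$. Since $z,z' \in \sph^{n+m-1}$, we have $\langle p,q\rangle = \|z\|_2^2 - \|z'\|_2^2 = 0$, $\|p\|_2 = s$, $\|q\|_2 = \sqrt{4-s^2} > \sqrt2$; write $\hat p := p/s$, $\hat q := q/\|q\|_2$, an orthonormal pair. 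From $\|Bz\|_2^2 - \|Bz'\|_2^2 = \langle Bp, Bq\rangle$ and $\|Bz\|_2 + \|Bz'\|_2 \ge \|Bz+Bz'\|_2 = \|Bq\|_2$,
\begin{align*}
  \bigl|\|Bz\|_2 - \|Bz'\|_2\bigr|
  = \frac{|\langle Bp, Bq\rangle|}{\|Bz\|_2 + \|Bz'\|_2}
  \le \frac{|\langle Bp, Bq\rangle|}{\|Bq\|_2}
  = s\cdot\frac{|\langle B\hat p, B\hat q\rangle|}{\|B\hat q\|_2}
  =: s\,W,
\end{align*}
so it suffices to prove $\|W\|_{\psi_2} \le C\rho$.

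To bound $W$ I would polarize once more: with $\hat r_\pm := (\hat p \pm \hat q)/\sqrt2 \in \sph^{n+m-1}$ one has $\langle B\hat p, B\hat q\rangle = \tfrac12\bigl(\|B\hat r_+\|_2^2 - \|B\hat r_-\|_2^2\bigr)$. Setting $R_\pm := \|B\hat r_\pm\|_2 - \sqrt m$ (so $\|R_\pm\|_{\psi_2} \le \rho$ by \autoref{lem:improved-constant-case-1}) and expanding the squares, on the event $\{\|B\hat q\|_2 \ge \sqrt m/\sqrt2\}$ we get $W \le \sqrt2\,|R_+ - R_-| + |R_+^2 - R_-^2|/(\sqrt2\,\sqrt m)$, whose first summand has $\psi_2$-norm $\le 2\sqrt2\,\rho$ and whose second has $\psi_1$-norm $\le 4\rho^2/(\sqrt2\,\sqrt m) \le C\rho$ (using $m > K^2\log K$). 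Now pass to tails. For $0 \le u \le 2\sqrt m$, union-bound $\mathbb P(W \ge u)$ over: the ``bad'' event $\{\|B\hat q\|_2 < \sqrt m/\sqrt2\}$, of probability $\le 2\exp(-cm/\rho^2) \le 2\exp(-cu^2/\rho^2)$ by \autoref{lem:improved-constant-case-1}; the event $\{\sqrt2\,|R_+ - R_-| \ge u/2\}$, of probability $\le 2\exp(-cu^2/\rho^2)$; and the event $\{|R_+^2 - R_-^2| \ge u\sqrt m/\sqrt2\}$, of probability $\le 2\exp(-cu\sqrt m/\rho^2) \le 2\exp(-cu^2/\rho^2)$ precisely because $u \le 2\sqrt m$. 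For $u > 2\sqrt m$, use instead the crude bound $W \le \|B\hat p\|_2$ (Cauchy--Schwarz), whence $\mathbb P(W \ge u) \le \mathbb P\bigl(\|B\hat p\|_2 - \sqrt m \ge u/2\bigr) \le 2\exp(-cu^2/\rho^2)$ by \autoref{lem:improved-constant-case-1}. Combining, $\mathbb P(W \ge u) \le C\exp(-cu^2/(K^2\log K))$ for every $u \ge 0$, i.e.\ $\|W\|_{\psi_2} \le CK\sqrt{\log K}$, and therefore $\bigl\|\,\|Bz\|_2 - \|Bz'\|_2\,\bigr\|_{\psi_2} \le s\,\|W\|_{\psi_2} \le CK\sqrt{\log K}\,\|z-z'\|_2$; re-absorbing the floating constants into a single $C$ finishes the proof.

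The main obstacle is that last tail step: extracting a genuinely \emph{subgaussian} (rather than merely subexponential) bound for $W$ uniformly over the deviation level. The $\sqrt m\,I_m$ block forces $\langle B\hat p, B\hat q\rangle$ to fluctuate at the large scale $\sim \sqrt m\,\rho$, which is only tamed after dividing by $\|B\hat q\|_2 \sim \sqrt m$ --- hence the need to work on the event where $\|B\hat q\|_2$ is not too small --- while the quadratic remainder $R_+^2 - R_-^2$ is only subexponential, so for $u \gg \sqrt m$ one must fall back on the separate Cauchy--Schwarz estimate $W \le \|B\hat p\|_2$ to recover Gaussian decay. Everything else --- the two polarization identities, the orthogonalities $p \perp q$ and $\hat p \perp \hat q$ (which hold precisely because $z, z'$ lie on the sphere), and the $\psi_1$/$\psi_2$ bookkeeping --- is routine, and amounts to \cite[Lemma 5]{chen2018stable} with the improved constant $\rho = K\sqrt{\log K}$ already supplied by \autoref{lem:improved-constant-case-1}.
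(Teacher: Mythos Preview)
Your argument is correct, and it follows a genuinely different route from the paper's. The paper, tracking \cite[Lemma~5]{chen2018stable} line by line, unpacks $B = [A \mid \sqrt m I]$ into its two blocks: after conditioning on $\{\|Ax+\sqrt m y\|_2 \ge \sqrt m/2\}$ it expands $\langle Au+\sqrt m v, Au'+\sqrt m v'\rangle$ into three pieces, uses the on-sphere identity $\langle v,v'\rangle = -\langle u,u'\rangle$ to center the quadratic term, and then applies the New~Bernstein inequality to $\sum_i\bigl(\langle A_i,u\rangle\langle A_i,u'\rangle - \langle u,u'\rangle\bigr)$ and Hoeffding to the two cross terms. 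You instead keep $B$ as a black box: the second polarization $\langle B\hat p, B\hat q\rangle = \tfrac12(\|B\hat r_+\|_2^2 - \|B\hat r_-\|_2^2)$ together with the denominator bound $\|Bz\|_2+\|Bz'\|_2 \ge \|B\hat q\|_2\cdot\|q\|_2$ reduces everything to repeated invocations of \autoref{lem:improved-constant-case-1}, so the improved constant is inherited automatically without re-running Bernstein or Hoeffding. Your approach is more modular and would transfer unchanged to any operator for which the ``case~1'' bound holds; the paper's approach is more explicit about where each factor of $K$ enters. The two also differ in the conditioning event (you use $\|B\hat q\|_2 \ge \sqrt m/\sqrt 2$, the paper uses $\|Bz\|_2 \ge \sqrt m/2$) and in the treatment of large deviations (you use Cauchy--Schwarz $W \le \|B\hat p\|_2$ for $u > 2\sqrt m$, the paper uses the triangle inequality $|\|Bz\|_2-\|Bz'\|_2| \le \|B(z-z')\|_2$ for $t \ge 2\sqrt m$), but these are equivalent in spirit. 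One minor remark: your separate handling of $m \le K^2\log K$ is not strictly needed, since your tail split already covers all $u \ge 0$; but it does no harm.
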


\begin{proof}[Proof of {\autoref{lem:improved-constant-case-2}}]
  It is enough to show that for every $t \geq 0$
  \begin{align*}
    p %
    := \mathbb{P}\left\{%
    \frac{\left|\left\| Ax + \sqrt m y \right\|_{2} - \left\| Ax' + \sqrt m y' \right\|_{2}\right| %
    }{\sqrt{\left\| x-x' \right\|_{2}^{2} + \left\| y - y' \right\|_{2}^{2}}} %
    \geq t\right\} %
    \leq C \exp \left[ - \frac{ct^{2}}{K^{2}\log K} \right].
  \end{align*}
  As in \cite[Lemma 7]{chen2018stable} and \cite{liaw2017simple}, we proceed
  differently for small and large $t$.

  \textbf{First assume} that $t \geq 2 \sqrt m$ and let
  \begin{align*}
    u &:= \frac{x - x'}{\sqrt{\| x - x' \|_{2}^{2} + \|y-y'\|_{2}^{2}}}, %
      &
    v &:= \frac{y - y'}{\sqrt{\|x-x'\|_{2}^{2} + \|y-y'\|_{2}^{2}}}. 
  \end{align*}
  An application of triangle inequality gives
  \begin{align*}
    p %
    & \leq \mathbb{P}\left\{%
    \frac{\left|\left\| A(x-x') + \sqrt m (y-y') \right\|_{2} \right| %
    }{\sqrt{\left\| x-x' \right\|_{2}^{2} + \left\| y - y' \right\|_{2}^{2}}} %
    \geq t\right\} %
      = \mathbb{P}\left\{ \|Au - \sqrt m v \|_{2} \geq t \right\} %
  \end{align*}
  Subtracting $\sqrt m$ from both sides of the relation describing the event,
  then using the assumption $t \geq 2 \sqrt m$ gives
  \begin{align*}
    \mathbb{P}\left\{ \|Au - \sqrt m v \|_{2} \geq t \right\}
    & = \mathbb{P}\left\{ \|Au - \sqrt m v \|_{2} - \sqrt m \geq t - \sqrt m \right\} %
    \\
    & \leq \mathbb{P}\left\{ \|Au - \sqrt m v \|_{2} - \sqrt m \geq \frac{t}{2} \right\} %
    \leq 2 \exp \left[ - \frac{c t^{2}}{K^{2} \log K} \right],
  \end{align*}
  where the latter line follows from \autoref{lem:improved-constant-case-1}.

  \textbf{Next assume} that $t \leq 2 \sqrt m$ and define
  \begin{align*}
    u' &:= x + x', & v' := y + y'.
  \end{align*}
  Observe that multiplication by
  $\|Ax + \sqrt m y \|_{2} + \|Ax' + \sqrt m y'\|_{2}$ gives
  \begin{align*}
    p %
    &= \mathbb{P}\left\{\left|%
    \frac{\|Ax + \sqrt m y \|_{2}^{2} - \|A x' + \sqrt m y'\|_{2}^{2}}{\sqrt{\|x-x'\|_{2}^{2} + \|y-y'\|_{2}^{2}}}%
    \right| %
    \geq t \left( \|Ax + \sqrt m y \|_{2} + \|Ax' + \sqrt m y'\|_{2} \right)%
      \right\} %
    \\
    & \leq \mathbb{P}\left\{
      \left|
      \frac{\ip{A(x-x') + \sqrt m (y - y'), A(x + x') + \sqrt m (y + y')}}{\sqrt{\|x-x'\|_{2}^{2} + \|y-y'\|_{2}^{2}}}
      \right|
      \geq t \cdot \|A x + \sqrt m y\|_{2}
      \right\} %
    \\
    & = \mathbb{P}\left\{
      \left|
      \ip{Au + \sqrt m v, Au' + \sqrt m v'}
      \right|
      \geq t \cdot \|A x + \sqrt m y\|_{2}
      \right\}.
  \end{align*}
  Next, define the following events
  \begin{align*}
    \Omega_{0} %
    &:= \left\{
      \left|
      \ip{Au + \sqrt m v, Au' + \sqrt m v'}
      \right|
      \geq t \cdot \|A x + \sqrt m y\|_{2}
      \right\}, %
    &
      \Omega_{1} %
    &:= \left\{ \|Ax + \sqrt m y \|_{2} \geq \frac{\sqrt m}{2} \right\}.
  \end{align*}
  By the law of total probability,
  \begin{align*}
    p %
    &\leq \mathbb{P} \left( \Omega_{0} \right) %
    = \mathbb{P} \left( \Omega_{0} \mid \Omega_{1} \right) \cdot \mathbb{P}\left\{\Omega_{1}\right\} %
      + \mathbb{P} \left( \Omega_{0} \mid \Omega_{1}^{C} \right) \cdot \mathbb{P}\left\{\Omega_{1}^{C}\right\} %
    \\
    &\leq \mathbb{P}\left\{\Omega_{0} \And \Omega_{1}\right\} + \mathbb{P}\left\{\Omega_{1}^{C}\right\} %
    =: p_{1} + p_{2}.
  \end{align*}
  Notice that $p_{2}$ may be bounded using
  \autoref{lem:improved-constant-case-1}. Indeed,
  \begin{align*}
    p_{2} %
    & = \mathbb{P}\left\{\|Ax + \sqrt m y \|_{2} \leq \frac{\sqrt m}{2}\right\} %
      \leq \mathbb{P}\left\{\left|\|Ax + \sqrt m y \|_{2} - \sqrt m \right| \geq \frac{\sqrt m}{2}\right\} %
    \\
    & \leq \mathbb{P}\left\{\left|\|Ax + \sqrt m y \|_{2} - \sqrt m \right| \geq \frac{t}{4}\right\} %
      \leq 2\exp \left[ - \frac{ct^{2}}{K^{2}\log K} \right].
  \end{align*}
  Next we bound $p_{1}$. Observe that
  \begin{align*}
    p_{1} %
    & = \mathbb{P}\left\{\Omega_{0} \And \Omega_{1}\right\} %
      \leq \mathbb{P}\left\{\left|\ip{Au + \sqrt m v, Au' + \sqrt m v'} \right| \geq \frac{t \sqrt m}{2}\right\}
    \\
    & \leq \underbrace{\mathbb{P}\left\{\left|\ip{Au, Au'} + m \ip{v, v'} \right| \geq \frac{t \sqrt m}{4}\right\}}_{p_{1a}}
    \\
    & + \underbrace{\mathbb{P}\left\{\left|\sqrt m \ip{Au, v'} \right| \geq \frac{t \sqrt m}{8}\right\}}_{p_{1b}} %
      + \underbrace{\mathbb{P}\left\{\left|\sqrt m \ip{Au', v} \right| \geq \frac{t \sqrt m}{8}\right\}}_{p_{1c}}. %
  \end{align*}
  Before we start by bounding $p_{1a}$, observe that 
  \begin{align*}
    \ip{v, v'} %
    = \frac{\|y\|_{2}^{2} - \|y'\|_{2}^{2}}{\sqrt{\|x-x'\|_{2}^{2} + \|y - y'\|_{2}^{2}}} %
    = - \frac{\|x\|_{2}^{2} - \|x'\|_{2}^{2}}{\sqrt{\|x-x'\|_{2}^{2} + \|y - y'\|_{2}^{2}}} %
    = - \ip{u, u'}.
  \end{align*}
  Consequently, $p_{1a}$ may be written as
  \begin{align*}
    p_{1a} %
    & = \mathbb{P}\left\{\left|\ip{Au, Au'} + m \ip{v, v'} \right| %
      \geq \frac{t \sqrt m}{4}\right\}
    \\
    & = \mathbb{P}\left\{\left|\ip{Au, Au'} - m \ip{u, u'} \right| %
      \geq \frac{t \sqrt m}{4}\right\}
    \\
    & = \mathbb{P}\left\{\left|\sum_{i=1}^{m} \left[\ip{A_{i}^{T},u}\ip{A_{i}^{T}, u'} %
      - \ip{u, u'}\right] \right| \geq \frac{t \sqrt m}{4}\right\} %
    \\
    & = \mathbb{P}\left\{\left|\sum_{i=1}^{m}Z_{i}\right| \geq \frac{t \sqrt m}{4}\right\}. 
  \end{align*}
  Observe that $Z_{i}, i \in [m]$ are independent, centered subexponential
  random variables satisfying
  \begin{align*}
    \|Z_{i}\|_{\psi_{1}} %
    \leq C \|\ip{A_{i}^{T}, u} \|_{\psi_{2}} \cdot C \|\ip{A_{i}^{T}, u'} \|_{\psi_{2}} %
    \leq C K^{2} \|u\|_{2} \|u'\|_{2} %
    \leq C K^{2},
  \end{align*}
  since $\|u\|_{2}, \|u'\|_{2} \leq 2$. Moreover, observe that
  \begin{align*}
    \E \left|Z_{i}\right| %
    & = \E \left|\ip{A_{i}^{T},u}\ip{A_{i}^{T}, u'} - \ip{u, u'} \right| %
      \leq \frac{1}{2}\E \left[ \ip{A_{i}^{T},u}^{2} + \ip{A_{i}^{T}, u'}^{2} \right] + \ip{u, u'} %
    \\
    & = \frac{1}{2} \left[ \|u\|_{2}^{2} + \|u'\|_{2}^{2}\right] + \ip{u, u'} %
      = \frac{1}{2} \left( \|u\|_{2} + \|u'\|_{2} \right)^{2} = \frac{9}{2}. 
  \end{align*}
  Thus, as per \cite[Remark 2.5]{jeong2019non}, there are absolute constants
  $c, c' > 0$ such that, by \nameref{thm:new-bernstein},
  \begin{align*}
    p_{1a} %
    & \leq 2 \exp \left[ - c' \min \left( \frac{t^{2}}{16 K^{2}\log K}, \frac{t \sqrt m}{4 K^{2}\log K} \right) \right] %
        \leq 2 \exp \left[ - \frac{ c t^{2}}{K^{2}\log K} \right]. %
  \end{align*}
  Above, the latter inequality follows by $t \leq 2 \sqrt
  m$. Applying~\nameref{thm:hoeffding} bounds $p_{1b}$ and $p_{1c}$:
  \begin{align*}
    p_{1b} %
    & \leq 2 \exp \left[ - \frac{ct^{2}}{K^{2}} \right], %
    & %
      p_{1c} %
      & \leq 2 \exp \left[ - \frac{ct^{2}}{K^{2}} \right]. 
  \end{align*}
  Combining $p_{1a}, p_{1b}, p_{1c}$ and $p_{2}$ gives
  \begin{align*}
    p %
    \leq p_{1} + p_{2} %
    \leq p_{1a} + p_{1b} + p_{1c} + p_{2} %
    \leq C \exp \left[ - \frac{ct^{2}}{K^{2}\log K} \right].
  \end{align*}

\end{proof}

\begin{proof}[Proof of {\autoref{lem:subgaussian-process-better-constant}}]
  Without loss of generality assume $\|z\|_{2} = 1$ and $\tau := \|z' \|_{2} \geq 1$. Define $z'' := x' / \tau$. Then
  \begin{align*}
    \|X_{z} - X_{z'} \|_{\psi_{2}}
    \leq \underbrace{\|X_{z} - X_{z''}\|_{\psi_{2}}}_{R_{1}} %
    + \underbrace{\|X_{z''} - X_{z'}\|_{\psi_{2}}}_{R_{2}}
  \end{align*}
  By \autoref{lem:improved-constant-case-2},
  $R_{1} \leq C K\sqrt{\log K} \|z - z''\|_{2}$. By collinearity of $z''$ and
  $z'$, \autoref{lem:improved-constant-case-1} yields:
  \begin{align*}
    R_{2} = \|X_{z''} - X_{z'}\|_{\psi_{2}} %
    = \|z' - z''\|_{2} \cdot \|X_{z''}\|_{\psi_{2}} %
    \leq C K\sqrt{\log K} \cdot \|z' - z''\|_{2}. 
  \end{align*}
  Combining $R_{1}$ and $R_{2}$ gives
  \begin{align*}
    R_{1} + R_{2} %
    \leq C K \sqrt{\log K} \left( \|z - z''\|_{2} + \|z' - z''\|_{2} \right) %
    \leq C K \sqrt{\log K} \cdot \|z - z'\|_{2},
  \end{align*}
  where the final inequality follows from the ``reverse triangle
  inequality''~\cite[Exercise 9.1.7]{vershynin2018high}. Accordingly, for any
  $z, z' \in \reals^{n}\times \reals^{m}$ one has as desired,
  \begin{align*}
    \|X_{z} - X_{z'} \|_{\psi_{2}} \leq C K\sqrt{\log K} \cdot \|z-z'\|_{2}. 
  \end{align*}

\end{proof}

\begin{rmk}
  \label{rmk:new-bernstein-constants}
  In~\autoref{lem:improved-constant-case-1}
  and~\autoref{lem:improved-constant-case-2} it was assumed that $K \geq
  2$. Notably, any $K \geq c_{0}$ for $c_{0} > 1$ would work, leading only to a
  different value for the absolute constant in the final expression. This bound
  on $K$ is necessary to ensure the $\log K$ term is positive
  (\emph{cf}.~\cite[Remark~2.5]{jeong2019non}). Fortunately, as noted at the
  beginning of~\nameref{sec:gen-demix-random-mapping}, the matrices considered
  in this work admit subgaussian constants satisfying
  $K \geq K_{0} = (\log 2)^{-1/2} \approx 1.201$.
\end{rmk}

\subsection{Proof of {\autoref{thm:chen-1-new}}}
\label{sec:proof-chen-new}

Before proving \autoref{thm:chen-1-new}, we introduce a necessary tool: the
majorizing measures theorem.

\begin{thm}[{\cite[Theorem~4.1]{liaw2017simple}}]
  \label{thm:majorizing-measures}
  Consider a random process $(X_{z})_{z \in \mathcal{T}}$ indexed by points $z$
  in a bounded set $\mathcal{T} \subseteq \reals^{n}$. Assume that the process
  has sub-gaussian incremements: that there exists $M \geq 0$ such that
  \begin{align*}
    \|X_{z} - X_{z'}\|_{\psi_{2}} \leq M \|z - z'\|_{2}%
    \quad\text{for every } z,z' \in \mathcal{T}. 
  \end{align*}
  Then,
  \begin{align*}
    \E \sup_{z, z' \in \mathcal{T}} \left| X_{z} - X_{z'} \right| %
    \leq C M \w(\mathcal{T}).
  \end{align*}
  Moreover, for any $u \geq 0$, the event
  \begin{align*}
    \sup_{z, z' \in \mathcal{T}} \left| X_{z} - X_{z'} \right| %
    \leq C M \left[ \w(\mathcal{T}) + u\cdot \diam(\mathcal{T}) \right]
  \end{align*}
  holds with probability at least $1 - e^{-u^{2}}$, where
  $\diam(\mathcal{T}) := \sup_{z, z' \in \mathcal{T}} \|z - z'\|_{2}$.
\end{thm}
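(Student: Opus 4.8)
The plan is to prove this through Talagrand's generic chaining together with the (deep) majorizing measures theorem, in two conceptually separate stages. First I would introduce Talagrand's $\gamma_2$ functional,
\[
  \gamma_2(\mathcal{T}) := \inf \sup_{z \in \mathcal{T}} \sum_{s \geq 0} 2^{s/2}\, d(z, T_s),
\]
where $d(z, T_s) := \min_{w \in T_s}\|z - w\|_2$, and the infimum runs over all admissible sequences of finite subsets $T_s \subseteq \mathcal{T}$ with $|T_0| = 1$ and $|T_s| \leq 2^{2^s}$. (This $\gamma_2$ is a different object from the Gaussian complexity $\gamma(\mathcal{T})$ appearing elsewhere in the paper.) The two-stage strategy is: (i) show that the subgaussian-increment hypothesis forces $\E \sup_{z,z'}|X_z - X_{z'}| \leq CM\,\gamma_2(\mathcal{T})$, together with a matching tail bound; and (ii) invoke the majorizing measures theorem to replace $\gamma_2(\mathcal{T})$ by the Gaussian width $\w(\mathcal{T})$.

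For stage (i), I would fix an admissible sequence $(T_s)$ nearly attaining the infimum and let $\pi_s(z) \in T_s$ denote a nearest point to $z$. Fixing a base point $z_0 \in T_0$, the chaining argument writes the telescoping decomposition
\[
  X_z - X_{z_0} = \sum_{s \geq 1}\bigl(X_{\pi_s(z)} - X_{\pi_{s-1}(z)}\bigr),
\]
and controls each link using the hypothesis: $X_{\pi_s(z)} - X_{\pi_{s-1}(z)}$ is subgaussian with parameter $\leq M\,d(\pi_s(z),\pi_{s-1}(z))$. The crucial step is a union bound over the at most $|T_s|\,|T_{s-1}| \leq 2^{2^{s+1}}$ possible links at level $s$: assigning each level a tail budget proportional to $2^{s/2} + u$ and summing the subgaussian tail estimates, the doubly-exponential cardinality factor is absorbed by the $2^{s/2}$ weight, giving
\[
  \sup_{z}\bigl|X_z - X_{z_0}\bigr| \leq CM\Bigl(\sup_z \sum_{s \geq 1} 2^{s/2}\,d(\pi_s(z),\pi_{s-1}(z)) + u\,\diam(\mathcal{T})\Bigr)
\]
with probability at least $1 - e^{-u^2}$. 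Using $d(\pi_s(z),\pi_{s-1}(z)) \leq d(z,T_s)+d(z,T_{s-1})$ and summing identifies the first term with $\gamma_2(\mathcal{T})$; integrating the tail yields the in-expectation bound, and the elementary inequality $\sup_{z,z'}|X_z - X_{z'}| \leq 2\sup_z |X_z - X_{z_0}|$ converts from increments relative to $z_0$ to the stated double supremum.

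For stage (ii) I would compare against the canonical Gaussian process $g_z := \ip{g, z}$ with $g_i \iid \mathcal{N}(0,1)$, whose increment metric is exactly $\|z - z'\|_2$ and whose expected supremum is $\w(\mathcal{T})$ by definition. Talagrand's majorizing measures theorem (the hard lower bound) then gives $\gamma_2(\mathcal{T}) \leq C\,\E\sup_{z}g_z = C\,\w(\mathcal{T})$. Substituting this into the stage-(i) estimates produces both the expectation bound $\E\sup_{z,z'}|X_z - X_{z'}| \leq CM\,\w(\mathcal{T})$ and, after absorbing constants, the tail bound $\sup_{z,z'}|X_z - X_{z'}| \leq CM[\w(\mathcal{T}) + u\,\diam(\mathcal{T})]$ at probability $1 - e^{-u^2}$.

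The main obstacle is stage (ii): the majorizing measures theorem is genuinely deep, and I would cite it rather than reprove it, since Talagrand's construction of admissible partitions matching a Gaussian process's supremum is far more involved than the chaining upper bound. The only delicate bookkeeping in stage (i) is synchronizing the per-level tail budgets so that the cardinalities $2^{2^s}$ are beaten by the $2^{s/2}$ weights while still yielding the clean $u\,\diam(\mathcal{T})$ dependence in the tail; this is the standard generic-chaining estimate and presents no conceptual difficulty.
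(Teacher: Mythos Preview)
The paper does not actually prove this theorem: it is quoted as a black-box tool, with the expectation bound attributed to Talagrand's book and the high-probability bound to Dirksen. Your proposal, by contrast, sketches the standard proof via generic chaining (the $\gamma_2$ upper bound for subgaussian processes) followed by the majorizing measures lower bound $\gamma_2(\mathcal{T}) \lesssim \w(\mathcal{T})$; this is exactly the route taken in the cited references, and your identification of the majorizing measures step as the deep, non-elementary ingredient is correct. So there is nothing to compare against in the paper itself --- your outline is sound and matches the external literature the paper defers to.
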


This phrasing of the theorem appears in~\cite[Theorem~4.1]{liaw2017simple}
and~\cite[II.B.3]{chen2018stable} alike. As those authors note, the expectation
bound of this theorem can be found
in~\cite[Theorem~2.1.1,~2.1.5]{talagrand2006generic}; the high-probability bound
in~\cite[Theorem~3.2]{dirksen2015tail}.

\begin{proof}[Proof of {\autoref{thm:chen-1-new}}]
  Combining~\autoref{lem:subgaussian-process-better-constant}
  and~\nameref{thm:majorizing-measures} gives, for $X_{z}$ as defined
  in~\eqref{eq:def-Xz},
  \begin{align*}
    \E \sup_{z, z' \in \mathcal{T}} |X_{z} - X_{z'} | \leq C \tilde K \w(\mathcal{T}). 
  \end{align*}
  Therefore, an application of triangle inequality gives
  \begin{align*}
    \E \sup_{z \in \mathcal{T}} |X_{z}| \leq C\tilde K \w(\mathcal{T}) + \E|X_{z_{0}}|
  \end{align*}
  for some $z_{0} \in \mathcal{T}$
  fixed. Applying~\autoref{lem:subgaussian-process-better-constant} to
  $\mathcal{T}\cup\{0\}$ gives
  \begin{align*}
    \E |X_{z_{0}}| \leq C \tilde K \|z_{0}\|_{2}. 
  \end{align*}
  Accordingly, by a standard relation between Gaussian width and complexity
  (\emph{cf.}~\autoref{prop:gaussian-complexity-finite-collection}),
  \begin{align*}
    \E \sup_{z \in \mathcal{T}} |X_{z}| \leq C\tilde K \w(\mathcal{T}) + \E|X_{z_{0}}| %
    \leq C\tilde K \gamma(\mathcal{T}). 
  \end{align*}
  An analogous set of steps to those above and in the proof
  for~\cite[Theorem~1]{chen2018stable} yields the high-probability bound.
\end{proof}


\section{Further Numerics Details}
\label{sec:s1-numerics}


In this section we describe the details of the numerical simulations appearing
in~\nameref{sec:numerics}. There, two generators were used to run a demixing
algorithm supporting the theory developed
in~\nameref{sec:gen-demix-random-mapping}. For each generator, a variational
autoencoder (VAE) was trained on a set of images and the VAE's decoder was used
as the generator. A pseudo-code meta-algorithm for the numerics is presented
in~\autoref{fig:meta-alg}. It may serve as a useful roadmap for the steps
presented below, though it is at best a terse version of the full description
provided in~\nameref{sec:numerics}. We begin by giving a brief overview of VAEs,
largely establishing where comprehensive details of our more-or-less standard
VAE implementation may be sought.

\subsection{Variational Autoencoders}
\label{sec:vari-auto}

We provide here some background relevant to our VAE implementation. We claim no
novelty of our VAE implementation, noting that a comprehensive overview of our
implementation is described in~\cite[Chapter~2]{kingma2019introduction}.

The goal of the variational inference problem is to selecting a model
$p_{\theta}(x)$ with parameters $\theta$ that approximately models an unknown
distribution $p^{*}(x)$ over data $x$. A common approach introduces a latent
variable model with the goal of learning parameters $\theta$ that capture a
joint distribution $p_{\theta}(x, z)$ with
$p_{\theta}(x) = \int p_{\theta}(x, z) \d z \approx p^{*}(x)$. Typically, it is
intractable to compute the marginal $p_{\theta}(x)$ and the posterior
distribution $p_{\theta}(z \mid x)$. Instead, the variational inference approach
introduces an \emph{inference model} $q_{\phi}(z\mid x)$ called the
\emph{encoder}. The parameters $\phi$ of the encoder are optimized such that
$q_{\phi}(z \mid x) \approx p_{\theta}(z \mid x)$. In the present work, the
encoder networks $\mathcal{E}(x; \phi)$ described above yield parameters
$(\mu, \log( \sigma^{2}))$ for the distribution $q_{\phi}(z \mid x)$, which is
assumed to be Gaussian. Samples from $q_{\phi}$, then are obtained by
``carefully'' sampling from a normal distribution. What this means will be
described below.

For VAEs, the objective function optimized is the evidence lower bound
(ELBO). The ELBO objective (equivalently, ELBO loss) is defined as
\begin{align}
  \label{eq:elbo-loss}
  \mathcal{L}_{\theta, \phi}(x) %
  := \E_{q_{\phi}(z \mid x)} \left[ \log p_{\theta}(x, z) - \log q_{\phi}(z\mid x) \right] %
  = \log p_{\theta}(x) - {\kldiv{q_{\phi}(z\mid x)}{p_{\theta}(z\mid x)}}.
\end{align}
The ELBO objective (equivalently, ELBO loss) can be efficiently optimized using
minibatch SGD and the \emph{reparametrization
  trick}~\cite[$\S\,2.4$]{kingma2019introduction}. Namely, in practice,
efficiently computing approximate gradients for minibatch SGD typically makes
use of the following steps. The reparametrization trick uses:
\begin{align}
  (\mu, \log \sigma^{2}) %
  & = \mathcal{E}(x; \phi)
    \nonumber
  \\
  \label{eq:vae-correspondence}
  z %
  & = \mu + \sigma \odot \varepsilon, \qquad \varepsilon \sim \mathcal{N}(0, I).
\end{align}
Above, $\odot$ denotes the element-wise product between its vector arguments. By
this computation, the encoder distribution assumes that the latent variables are
derived from a Gaussian distribution with diagonal covariance matrix. The
parameters of the Gaussian describing the conditional distribution are computed
by $\mathcal{E}(\cdot; \phi)$. Specifically:
\begin{align*}
  (\mu, \log \sigma^{2}) %
  & = \mathcal{E}(x; \phi)
  \\
  q_{\phi}(z \mid x) %
  & = \prod_{i} q_{\phi}(z_{i} \mid x) %
    = \prod_{i} \mathcal{N}(z_{i}; \mu_{i}, \sigma_{i}^{2})
\end{align*}
In the present work $\mathcal{E}(x; \phi)$ is an ``encoder'' neural network with
parameters $\phi$ and input image $x \in [0, 1]^{784}$, with architecture as
described in \autoref{fig:vae-arch-encoder}. A VAE acting on an input image $x$
is thus given by $A(x) := \mathcal{D}\circ (\mu + \sigma\odot\varepsilon)$ where
$\mu, \sigma, \varepsilon$ are given according to \eqref{eq:vae-correspondence},
$\mathcal{D}$ is the decoder network and $\mathcal{E}$ the encoder network.

The parameters $\mu, \sigma^{2} \in \reals^{128}$ are vectors whose elements
$\mu_{i}, \sigma_{i}^{2}$ are the mean and variance corresponding to each of the
latent variables $z_{i}$. Note that $z, \mu, \sigma^{2} \in \reals^{128}$, with
$128$ being the latent dimension, and that these quantities are referred to
in~\autoref{fig:vae-arch} by \inlinecode{z}, \inlinecode{mu} and
\inlinecode{log\_var}, respectively. In fact, \inlinecode{log\_var} represents
the log-variance $\log \sigma^{2}$, which is common to use due to improved
numerical stability during training.

In the training of each network, the loss function was the standard
implementation used for VAEs. A derivation of the loss function for VAEs in the
setting where the encoder is assumed to be Gaussian may be found
in~\cite[Chapter~2.5]{kingma2019introduction}. An alternative derivation is
given in~\cite{odaibo2019tutorial}. We provide an abbreviated description of the
loss function here for the sake of completeness. Let $F$ be a VAE, $x^{(i)}$ an
input image, $z^{(i)}$ its latent code, and $x_{r}^{(i)} = F(X)$ the image
reconstructed by the VAE. The loss function in our numerical implementation,
expressed for a single datum $x^{(i)}$, is given by:
\begin{align*}
  \mathcal{L}(x^{(i)}, F) %
  := \operatorname{BCE}(x_{r}^{(i)}, x^{(i)}) + \kldiv{q(z\mid x^{(i)})}{p(z)}.
\end{align*}
The former term is the binary cross-entropy between the image $x^{(i)}$ and its
reconstruction $x_{r}^{(i)}$. The latter term is the KL-divergence between the
encoder distribution $q(z\mid x^{(i)})$ and the latent prior $p(z)$. By
construction the latent prior $p(z)$ is assumed to tbe standard normal. This
loss function may be derived from the general one described
in~\eqref{eq:elbo-loss}. Implementation details for the binary cross-entropy are
available in the
\href{https://pytorch.org/docs/1.6.0/generated/torch.nn.BCELoss.html}{PyTorch
  documentation}; the KL-divergence in the present setting is equivalent to
$\kldiv{q_{\phi}(z_{j}\mid x^{(i)})}{p(z)} = \frac{1}{2}\left[\mu_{j}^{2} +
  \sigma_{j}^{2} - \log(\sigma_{j}^{2}) - 1\right]$. Theoretical background and
further detail, including on the training of VAEs, is given
in~\cite{doersch2016tutorial, kingma2013auto, kingma2019introduction}.

\begin{figure}[h]
  \centering
  \null\hfill
  \begin{subfigure}[t]{0.3\linewidth}
    \centering
    \includegraphics[width=\textwidth]{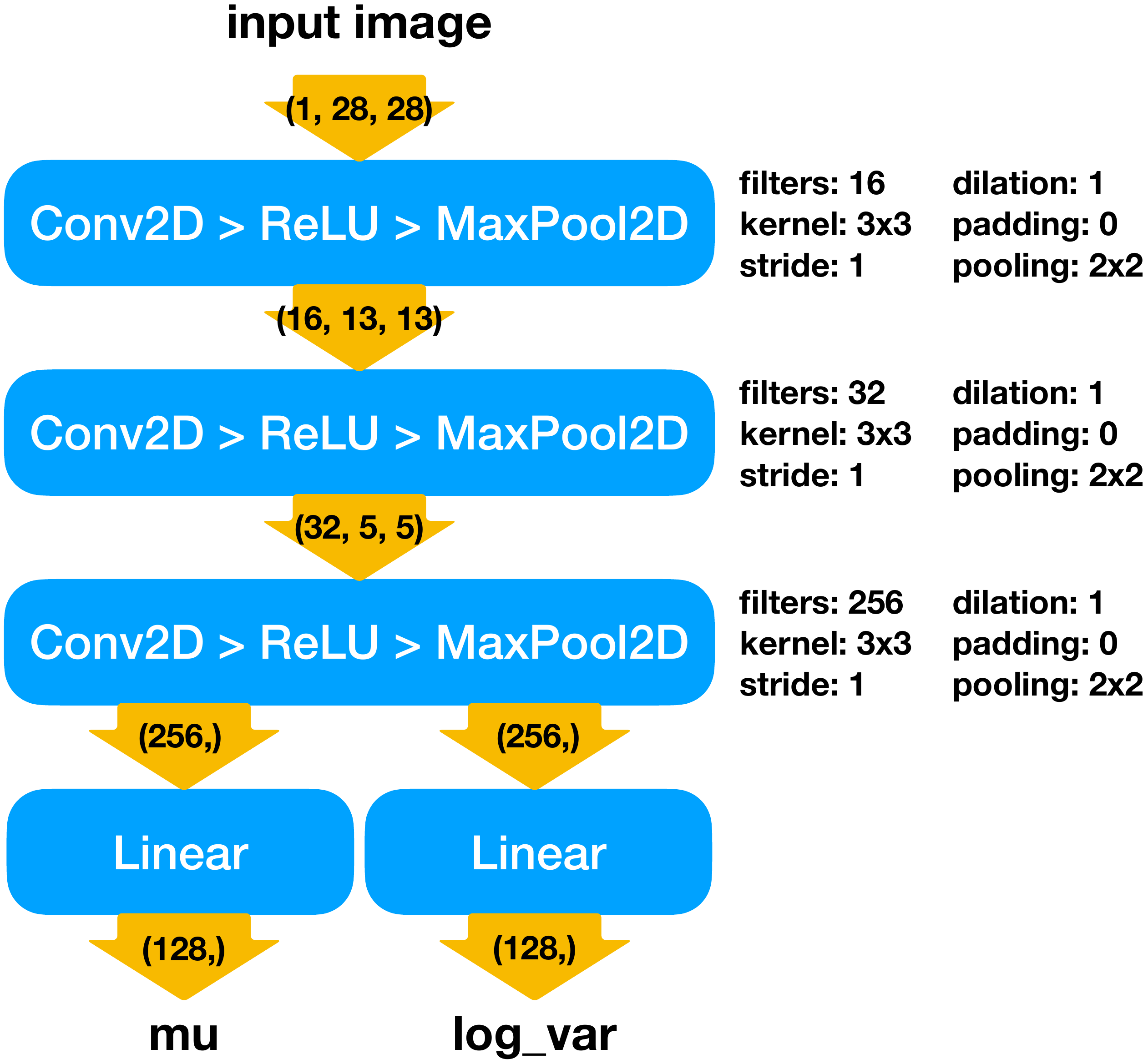}
    \caption{Encoder architecture}
    \label{fig:vae-arch-encoder}
  \end{subfigure}
  \hfill
  \begin{subfigure}[t]{0.3\linewidth}
    \centering
    \includegraphics[width=\textwidth]{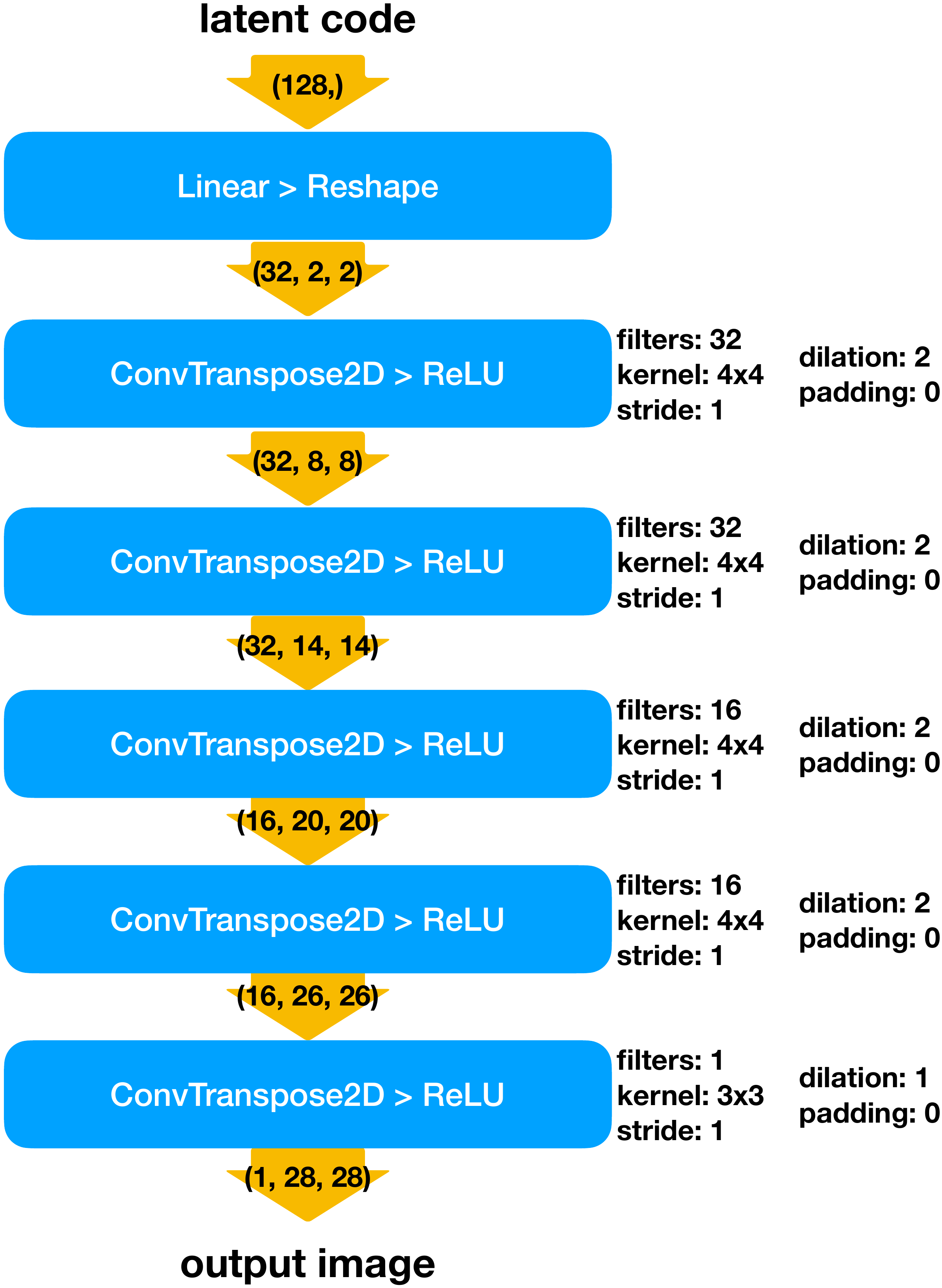}
    \caption{Decoder architecture}
    \label{fig:vae-arch-decoder}
  \end{subfigure}\hfill\null
  \caption[Autoencoder architecture]{Autoencoder: architecture for the encoder and decoder}
  \label{fig:vae-arch}
\end{figure}

\subsection{Creating the generators}
\label{sec:creating-generators}

The generators used in these numerical simulations were obtained as the decoders
of trained VAEs. The MNIST database~\cite{deng2012mnist} of images was used for
these simulations. The images of this database are grayscale $28\times 28$
images of handwritten digits. Each network only used images corresponding to a
single digit class. For this simulation, the digit class $1$ was used in
training the first VAE, for which there were 7877 images;\ $8$ for the second
VAE, for which there were 6825 images. For each network, the images were
randomly partitioned into training, validation and test sets (70\%, 15\%, 15\%,
respectively). No image augmentations (\eg random flips or random crops) were
used. For both networks, the training batch size was 32, and the number of
epochs used for training was 200. Early stopping was not used, as we found that
the validation loss continued to decrease for the duration of training.

Both networks were variational auto-encoders (VAEs) whose encoders used 2D
convolutional layers with max-pooling, and whose decoders used 2D transposed
convolutions for image generation. The architectures for the encoder and the
decoder are depicted in~\autoref{fig:vae-arch-encoder}
and~\ref{fig:vae-arch-decoder}, respectively. The latent dimension for both VAEs
was $128$. Both VAEs assumed that the conditional distribution on the latent
codes was Gaussian, as described in~\nameref{sec:vari-auto}. Both VAEs were
trained using a modification of mini-batch stochastic gradient descent
(SGD)~\cite{hardt2016train} with a learning rate of $10^{-5}$, momentum of $0.9$
and weight decay of $10^{-3}$, using the standard PyTorch implementation of
SGD~\cite{paszke2017automatic}. Results of the training are depicted
in~\autoref{fig:mnist-training-metrics}. In
particular,~\ref{fig:mnist-training-metrics-ones} for the VAE trained on $1$s,
and~\ref{fig:mnist-training-metrics-eights} for the VAE trained on $8$s.

Example outputs from noisy latent codes obtained from the validation data are
depicted in~\autoref{fig:mnist-vae-examples}. Specifically, this figure offers a
comparison of true images from the validation set and their auto-encoded
approximations. Validation images from the $1$s dataset can be found in the left
panel of~\autoref{fig:mnist-vae-val-true-recon-ones}, and their reconstructions
in the right panel. Validation images from the $8$s dataset can be found in the
left panel of~\autoref{fig:mnist-vae-val-true-recon-eights}; their
reconstructions in the right panel. The correspondence between the true image
and the auto-encoded approximation is given according to the image's row-column
position in the grid.

\begin{figure}[h]
  \centering
  \begin{subfigure}[h]{0.6\linewidth}
    \centering
    \includegraphics[width=\textwidth]{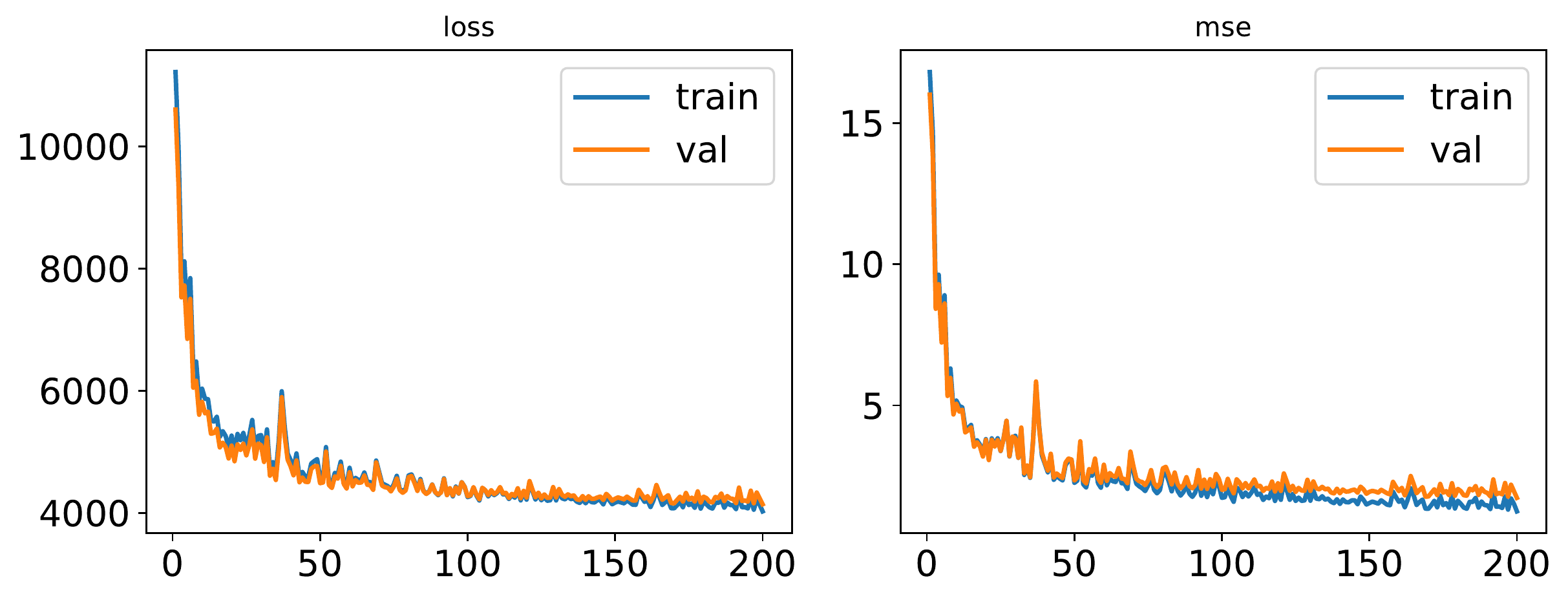}
    \caption{Loss \& MSE for VAE trained on $1$s.}
    \label{fig:mnist-training-metrics-ones}
  \end{subfigure}

  \centering
  \begin{subfigure}[h]{0.6\linewidth}
    \centering
    \includegraphics[width=\textwidth]{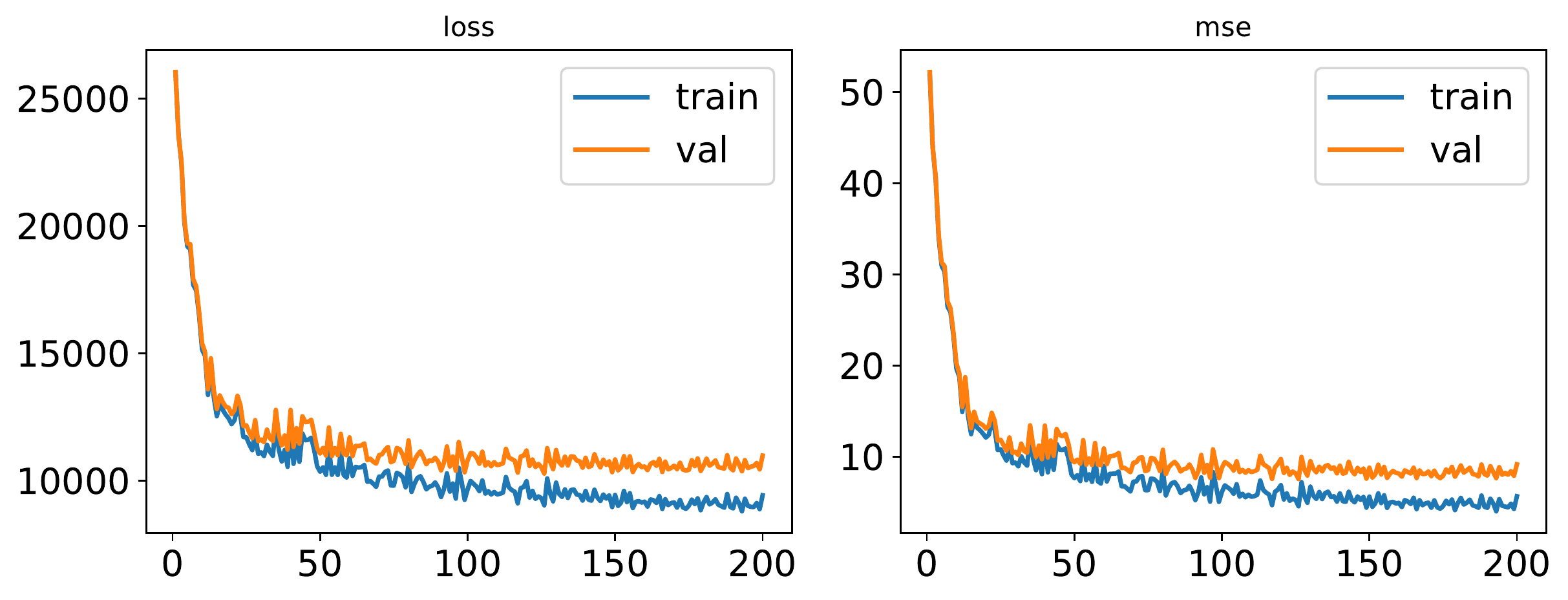}
    \caption{Loss \& MSE for VAE trained on $8$s.}
    \label{fig:mnist-training-metrics-eights}
  \end{subfigure}
  \caption{In each sub-figure, the left plot gives the average training and
    validation loss after each epoch; the right the training and validation MSE
    after each epoch.}
  \label{fig:mnist-training-metrics}
\end{figure}

\begin{figure}[h]
  \centering
  \begin{subfigure}[h]{0.45\linewidth}
    \centering
    \includegraphics[width=.45\textwidth]{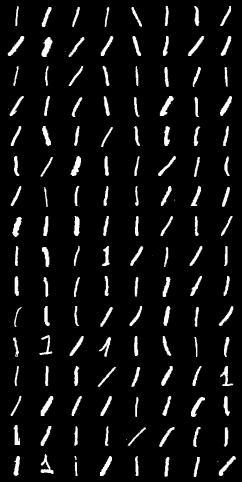}
    \quad
    \includegraphics[width=.45\textwidth]{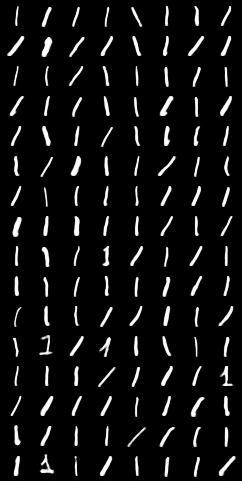}
    \caption{True and reconstructed images from the $1$s VAE}
    \label{fig:mnist-vae-val-true-recon-ones}
  \end{subfigure}
  \hfill
  \begin{subfigure}[h]{0.45\linewidth}
    \centering
    \includegraphics[width=.45\textwidth]{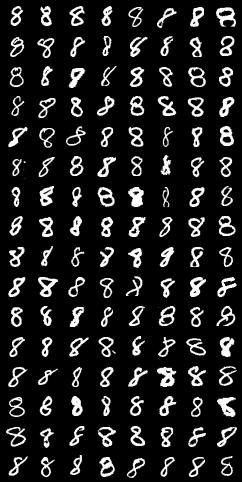}
    \quad
    \includegraphics[width=.45\textwidth]{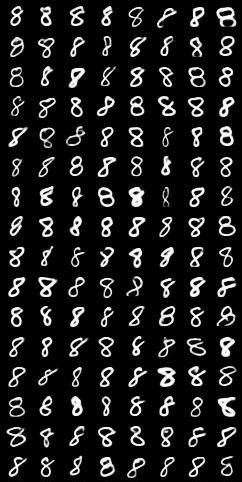}
    \caption{True and reconstructed images from the $8$s VAE}
    \label{fig:mnist-vae-val-true-recon-eights}
  \end{subfigure}
  \caption{Validation images and their reconstructions by each VAE. In each
    subfigure, true images from the validation data are shown in the left grid;
    their reconstruction by the decoder from codes that were partially corrupted
    are on the right.}
  \label{fig:mnist-vae-examples}
\end{figure}

\subsection{Generative demixing numerical implementation}
\label{sec:demixing-details}

Generally, the trained encoder network can be interpreted as a map from an image
to the encoder's \emph{latent code} for that image in the VAE's \emph{latent
  space}. The decoder can be interpreted as a mapping from the latent space back
into the relevant space of natural images. In this way, we are able to use the
decoder as a structural proxy corresponding to a low-dimensional representation
for an image, just as one might use a particular convex strucutural proxy like
the $\ell_{1}$ ball to encode a sparse signal in the classical compressed
sensing setting. For further intuition regarding this point we refer the reader
to~\nameref{sec:introduction} and~\cite{bora2017compressed,
  heckel2019deepdecoder, lempitsky2018deep}.

The initial mixture $b$ for the demixing problem was obtained as described
in~\eqref{eq:b-mnist}. As a minor technical point, we note that the values of
the resulting mixture $b$ were clipped to $[0,1]$, as were those of the
predicted mixture at each iteration of the recovery algorithm. We found that the
presence of clipping had no apparent effect on recovery performance, but aided
visual interpretation of the resulting images. The initial encodings
$w_{1}^{(0)}, w_{8}^{(0)} \in \reals^{128}$ used by each VAE decoder
$\mathcal{D}_{1}, \mathcal{D}_{8}$ were obtained for $b$ using the VAEs'
encoders $\mathcal{E}_{1}, \mathcal{E}_{8}$ and random noise:
\begin{align*}
  w_{c}^{(0)} = \mathcal{E}_{c}(b) + 0.1\cdot \varepsilon, %
  \qquad c \in \{1, 8\}, \varepsilon_{i} \iid \mathcal{N}(0, 1).
\end{align*}

To solve the demixing problem numerically, the MSE loss of the difference
between the predicted mixture and the true mixture was minimized using
Adam~\cite{kingma2014adam} with a learning rate of $10^{-2}$ for $1000$
iterations using PyTorch~\cite{paszke2017automatic}. Our proposed algorithm,
using PyTorch-themed pseudo-code, is presented
in~\autoref{fig:demix-alg-pytorch-pseudo-code}. For sake of clarity to the
reader, we have removed some minor code-specific details, and have used
suggestive variable names. The recovered image approximating $x_{1}$ had an MSE
of $1.59\cdot 10^{-3}$; that for $x_{8}$, $7.48 \cdot 10^{-4}$; that for the
mixture, $3.92\cdot 10^{-4}$. A graphical depiction of the results appear
in~\autoref{fig:mnist-vae-demixing-results} as
Figures~\ref{fig:mnist-vae-true0}--\ref{fig:mnist-vae-demix-loss-plot}.

\lstset{
  morekeywords={decode, zero_grad, view, backward, step},
}

\begin{figure*}[h]
  \centering
  \begin{center}
    \begin{minipage}[h]{0.7\linewidth}
      \lstinputlisting{demix-alg.py}
    \end{minipage}
  \end{center}
  \caption{PyTorch pseudo-code for implementing the demixing algorithm.}
  \label{fig:demix-alg-pytorch-pseudo-code}
\end{figure*}

\lstset{
  morekeywords={GetEncoderArchitecture, GetDecoderArchitecture, SampleFrom, MinibatchSGD, VAELoss, MSELoss, AdamOptimizer, ComputeInitialEncodings, RunDemixing, function, NormalizedGaussianRandomMatrices, Mult},
  escapeinside={\#*}{*)}
}

\begin{figure*}[h]
  \centering
  \begin{center}
    \begin{minipage}[h]{0.7\linewidth}
      \lstinputlisting{meta-alg.py}
    \end{minipage}
  \end{center}
  \caption{A pseudo-code meta-algorithm serving as roadmap for the numerics of
    this work.}
  \label{fig:meta-alg}
\end{figure*}


\end{document}